\theoremstyle{plain}
\newtheorem{theorem}{Theorem}[section]
\newtheorem*{theorem*}{Theorem}
\newtheorem{lemma}{Lemma}[section]
\newtheorem{corollary}{Corollary}[section]
\newtheorem{proposition}{Proposition}[section]
\theoremstyle{definition}
\newtheorem*{definition*}{Definition}
\newtheorem{example}{Example}[section]
\newtheorem{non-example}{Non-example}[section]
\theoremstyle{remark}
\newtheorem{remark}{Remark}[section]
\newtheorem*{remark*}{Remark}
\newcommand{\MSD}{\textnormal{TAMSD}}
\newcommand{\bbR}{{\Bbb R}}
\newcommand{\bbN}{{\Bbb N}}
\newcommand{\bbC}{{\Bbb C}}
\newcommand{\bbP}{{\Bbb P}}
\newcommand{\bbE}{{\Bbb E}}
\newcommand{\cov}{{\mathrm{Cov}}}
\newcommand{\var}{{\mathrm{Var}}}
\newcommand{\Var}{{\mathrm{Var}}}
\newcommand{\diag}{{\mathrm{diag}}}
\newcommand{\ols}{{\mathrm{stand}}}
\newcommand{\AMSD}{\textnormal{AMSD}}
\newcommand{\abs}[1]{\left|#1\right|}
\newcommand{\norm}[1]{\left\|#1\right\|}
\newcommand*{\affaddr}[1]{#1} % No op here. Customize it for different styles.
\newcommand*{\affmark}[1][*]{\textsuperscript{#1}}
\begin{document}
\title{{Fluid heterogeneity detection based on the asymptotic distribution of the time-averaged mean squared displacement in single particle tracking experiments}
\author{%
Kui Zhang\affmark[1], Katelyn P. R. Crizer\affmark[3], Mark H. Schoenfisch\affmark[3], David B. Hill\affmark[2] and Gustavo Didier\affmark[1]\\
\affaddr{\affmark[1]Department of Mathematics, Tulane University}\\
\affaddr{\affmark[2]The Marsico Lung Institute and Department of Physics and Astronomy,\\University of North Carolina at Chapel Hill}\\
\affaddr{\affmark[3]Department of Chemistry, University of North Carolina at Chapel Hill}\\
}
\thanks{D.B.H.\ was partially supported by the awards DMS 1462992 (National Science Foundation), AI1 12029 and HL 108808 (National Institutes of Health), and Hill16XX0 (Cystic Fibrosis Foundation). G.D.\ was partially supported by the prime award no.\ W911NF-14-1-0475 from the Biomathematics subdivision of the Army Research Office, USA. The authors would like to thank John Fricks for his suggestions and comments on this paper.}
\thanks{{\em Keywords and phrases}: mean squared displacement, asymptotic distribution, anomalous diffusion, fluid heterogeneity.}}
%\date{}
\maketitle

\begin{abstract}
A tracer particle is called anomalously diffusive if its mean squared displacement grows approximately as $\sigma^2 t^{\alpha}$ as a function of time $t$ for some constant $\sigma^2$, where the diffusion exponent satisfies $\alpha \neq 1$. In this article, we use recent results on the asymptotic distribution of the time-averaged mean squared displacement \cite{didier:zhang:2017} to construct statistical tests for detecting physical heterogeneity in viscoelastic fluid samples starting from one or multiple observed anomalously diffusive paths. The methods are asymptotically valid for the range $0 < \alpha < 3/2$ and involve a mathematical characterization of time-averaged mean squared displacement bias and the effect of correlated disturbance errors. The assumptions on particle motion cover a broad family of fractional Gaussian processes, including fractional Brownian motion and many fractional instances of the generalized Langevin equation framework. We apply the proposed methods in experimental data from treated \emph{P.\ aeruginosa} biofilms generated by the collaboration of the Hill and Schoenfisch Labs at UNC-Chapel Hill.
\end{abstract}

\section{Introduction}

In this paper, we start from the asymptotic distribution of the time-averaged mean squared displacement of nanometric tracer particles \cite{didier:zhang:2017} to construct statistical protocols for detecting physical fluid heterogeneity. The assumptions on particle motion cover a broad family of fractional Gaussian processes, including fractional Brownian motion and many instances of the generalized Langevin equation framework. The testing protocols allowed providing more accurate quantitative analysis of experimental data from the Hill and Schoenfisch Labs (UNC-Chapel Hill), and the results reported in \cite{reighard:hill:dixon:worley:schoenfisch:2015} were generally confirmed.

Improvements in light microscopy, fluorescence techniques, nanoparticle synthesis and high-speed video have ushered in a flurry of experimental activity \cite{sokolov:2008}. Single particle tracking has become a common tool in many scientific areas, such as colloid physics \cite{hess:girirajan:mason:2006}, the microrheology of complex fluids \cite{mason:weitz:1995,suh:dawson:hanes:2005,matsui:wagner:hill:etal:2006,lai:wang:cone:wirtz:hanes:2009,hill:vasquez:mellnik:mckinley:vose:mu:henderson:donaldson:alexis:boucher:forest:2014} and the study of nanobiophysical systems, both \textit{in vivo} and \textit{in vitro} \cite{barkai:garini:metzler:2012,burnecki:muszkieta:sikora:weron:2012}. This includes the diffusion of single molecules, e.g., proteins, on biopolymers such as DNA or microtubules, on surfaces or in lipid membranes, inside \textit{in vivo} cells and in actin solutions \cite{tafvizi:mirny:oijen:2011,gorman:greene:2008,halford:marko:2004,vale:soll:gibbons:1989,helenius:brouhard:kalaidzidis:diez:howard:2006,minoura:katayama:eisaku:sekimoto:muto:2010,sonesson:elofsson:callisen:brismar:2007,wieser:schutz:2008,lasne:etal:2006,nishimura:etal:2006,smith:karatekin:etal:2011,bertseva:grebenkov:schmidhauser:gribkova:jeney:forro:2012,grebenkov:vahabi:bertseva:forro:jeney:2013},
as well as two-dimensional biological membranes \cite{metzler:jeon:cherstvy:2016} and heterogeneuous tracer diffusion and first passage characteristics in two-dimensional crowded environments \cite{ghosh:cherstvy:grebenkov:metzler:2016}.

Of primary concern in the analysis of particle path data is the ensemble mean squared displacement (MSD), where $X$ is the tracer particle's position. A basic dynamic characterization of the latter is given by the relation
\begin{equation}\label{e:EX2(t)=Dt(alpha)}
\langle X^2(t)\rangle = \bbE X^2(t) \propto \sigma^2 t^\alpha, \quad \sigma^2,\alpha > 0, \quad t \geq 0, \quad {\boldsymbol \xi} := (\log\sigma^2,\alpha).
\end{equation}
In \eqref{e:EX2(t)=Dt(alpha)}, $\alpha$ is the diffusion exponent and $\sigma^2 = 2D$, where $D$ is the diffusivity constant. The parameter value $\alpha = 1$ corresponds to classical diffusion. If $\alpha \neq 1$, the stochastic process $X$ is said to be \textit{anomalously diffusive}, more specifically sub- or superdiffusive depending on whether $\alpha < 1$ or $> 1$, respectively. Anomalous diffusion may emerge, for example, as a consequence of binding-unbinding events, of geometrical constraints on the particle's movement, or of fluid viscoelasticity \cite{meroz:sokolov:2015,saxton1996anomalous,saxton1994anomalous,levine2000one}. %However, in a larger context this boils down to identifying the particular stochastic model for a diffusing bead, be that a generalized Langevin equation, a continuous time random walk (CTRW), or a fractional Brownian motion (\cite{jeon:barkai:metzler:2013,lysy:pillai:hill:forest:mellnik:vasquez:mckinley:2016}). While the physical interpretation of the last model might be difficult, it may arise as a limiting model of the first two with the experiments lacking the resolution to distinguish the previous models.

The dominant statistical technique in the biophysical literature for estimating the parameters $\sigma^2$ and $\alpha$ is based on the so-named time-averaged mean squared displacement ($\MSD$). Suppose that a single particle experiment generates a tracer bead sample path with observations $X(j)$, $j=1,\hdots, N$. The pathwise statistic
\begin{equation}\label{e:MSD^}
M_N(\tau) := \frac{1}{N- \tau}\sum_{j=1}^{N- \tau} \{X(j+\tau)-X(j)  \}^2
\end{equation}
is the $\MSD$ at lag value $\tau$, i.e., the statistical counterpart of the MSD $\langle X^2( \tau) \rangle $. One generates an estimator of ${\boldsymbol \xi} = (\log \sigma^2, \alpha)$ by means of the linear regression
\begin{equation}\label{e:regression}
\log M_N(\tau_k)  = \log \sigma^2 + \alpha \log \tau_k  +\varepsilon_k, \quad k=1,\hdots,m,
\end{equation}
possibly over several independent particle paths, where $m$ is the number of lag values used and $\{\varepsilon_k \}_{k = 1,\hdots,m}$ is a random vector with an unspecified distribution and correlated entries (see \eqref{e:xi_OLS} and \eqref{e:lm_estimator}). Plots of $\MSD$ curves as a function of the lag value $\tau$, often on a log-log scale, are widely reported as part of anomalous diffusion data analysis (e.g., \cite{valentine:kaplan:thota:crocker:gisler:prudhomme:beck:weitz:2001,lieleg:vladescu:ribbeck:2010}). The choice of lag values $\tau_1,\hdots,\tau_m$ reflects the analyst's visual perception of the range where the slope of the $\MSD$ curves stabilize and thus indicate the true diffusive regime and power law.

The potential \textit{heterogeneity} of fluid samples in fields such as microrheology implies that estimating ${\boldsymbol \xi}$ from single trajectories is of great interest~\cite{burnecki:2012,burnecki:kepten:janczura:bronshtein:garini:weron:2012,vestergaard:blainey:flyvbjerg:2014,lysy:pillai:hill:forest:mellnik:vasquez:mckinley:2016}. However, the experimental and statistical difficulties involved in estimating ${\boldsymbol \xi}$ based on the regression system \eqref{e:regression}
have been pointed out by many authors. A non-exhaustive list of issues includes limited fluorophore lifetimes, proteins diffusing out of the field of view, finite-resolution imaging and motion blurring due to camera integration times, measurement errors, the presence of drifts and intra-path correlation \cite{qian:sheetzL:elson:1991,berglund:2010,michalet:berglund:2012,kepten:bronshtein:garini:2013,vestergaard:blainey:flyvbjerg:2014,burnecki:kepten:garini:sikora:weron:2015,mellnik:lysy:vasquez:pillai:hill:cribb:mckinley:forest:2016,briane:kervrann:vimond:2017,hoze:hochman:2017}.
Such difficulties call for a deeper understanding of the stochastic behavior of the $\MSD$ and, accordingly, a wealth of literature on the subject has developed. Starting from an underlying fractional stochastic process, several properties of the $\MSD$ such as ergodicity were established \cite{deng:barkai:2009,metzler:tejedor:jeon:he:deng:burov:barkai:2009,jeon:metzler:2010,burov:jeon:metzler:barkai:2011,
sandev:metzler:tomovksi:2012,boyer:dean:mejia:oshanin:2012,jeon:barkai:metzler:2013}. In particular, finite sample exact characterizations and mathematically convenient approximations to the distribution of the $\MSD$ under Gaussianity are provided in \cite{qian:sheetzL:elson:1991,grebenkov:2011prob,grebenkov:2011functionals,boyer:dean:mejia:oshanin:2012,andreanov:grebenkov:2012,nandi:heinrich:lindner:2012,boyer:dean:mejia:oshanin:2013,grebenkov:2013,sikora:teuerle:wylomanska:grebenkov:2017,gajda:wylomanska:kantz:chechkin:sikora:2018}.
In \cite{sikora:teuerle:wylomanska:grebenkov:2017}, assuming an observed fractional Brownian motion (see Example \ref{ex:fBm}), it is shown that the standard $\MSD$-based estimator is consistent, with vanishing bias and variance.

We say that a cumulative distribution function (c.d.f.) $F$ gives the \textit{asymptotic distribution }of a sequence of random variables $\{W_N\}_{N \in \bbN}$ if the c.d.f.\ $F_N(x)$ of $W_N$ converges to $F(x)$ at every $x \in \bbR$ where $F$ is continuous. Results on convergence in distribution such as the classical central limit theorem (CLT; see Example \ref{ex:CLT}) have a number of interesting statistical consequences. Typically, statements are robust, i.e., they hold for a multitude of models. Moreover, they naturally lead to useful data analysis protocols such as confidence intervals and hypothesis tests with error margins that are quantifiable and whose accuracy provably increases at an explicit rate (e.g., $\sqrt{N}$ for the CLT). In the probability literature, the study of the asymptotic distribution of sums of functions of Gaussian random variables has been carried out over many decades now (see \cite{rosenblatt:1961,taqqu:1975,taqqu:1979,dobrushin:major:1979,major:1981,giraitis:surgailis:1985,guyon:leon:1989,giraitis:surgailis:1990,pipiras:taqqu:2017} for just a few references). In the context of anomalous diffusion modeling, in turn, the related asymptotic distribution of the $\MSD$ was established in \cite{didier:zhang:2017} for a broad class of Gaussian fractional stochastic processes. It was shown that the convergence in distribution of the $\MSD$ occurs at different rates, and that the limiting distribution may be Gaussian or non-Gaussian, all depending on the value of the diffusion exponent $\alpha$. This made it possible, for example, to construct \textit{asymptotically valid} confidence intervals for the anomalous diffusion parameters starting from a single observed particle path.

In this paper, we propose particle path-based statistical protocols for detecting fluid heterogeneity that builds upon the $\MSD$'s asymptotic distribution. The protocols test fluid heterogeneity in two different experimental situations, namely,
\begin{itemize}\vspace{-2mm}
\item [$(i)$] assuming local physical homogeneity, whether different regions of the fluid are heterogeneous;\vspace{-2mm}
\item [$(ii)$] assuming global physical homogeneity of each fluid sample, whether two samples from each fluid are heterogeneous.\vspace{-2mm}
\end{itemize}
Hereinafter, these two senses are referred to as \textit{intra- and interfluid heterogeneity}, respectively. The testing methodology is based on an improved single-path $\MSD$-based estimation technique. To construct the latter, we tackle two of the main issues involved in $\MSD$-based estimation, namely: \textbf{(a)} the presence of \textit{bias} in log-$\MSD$-based methods; and \textbf{(b)} the effect of \textit{correlated disturbances} $\{\varepsilon_k \}_{k = 1,\hdots,m}$ in \eqref{e:regression}. Starting from a concentration inequality \cite{boucheron:lugosi:massart:2013}, we address these issues by providing mathematical characterizations of the bias and finite sample estimation variance which are by themselves of interest, as well as by introducing procedures for bias-correction and nearly optimal estimation under intra-path correlation. Motivated by applications in viscoelastic diffusion, the single-path estimation and heterogeneity testing protocols are mathematically established for $0 < \alpha < 3/2$, which covers all the subdiffusive range and part of the superdiffusive regime, and are asymptotically valid. For the sake of completeness, we also discuss and provide computational studies on the strong superdiffusivity range $3/2 \leq \alpha < 2$ (see Remark \ref{r:on_3/2=<alpha<2} on the difficulties involved in dealing with the possibly non-Gaussian asymptotic distribution of the $\MSD$). To guide experimental practice under common technical constraints such as limited camera recording time, we also apply the proposed tools in investigating the difference between observing longer particle paths and using a larger number of particle paths of given length. To illustrate the use of the protocols in physical practice, we make inferences on fluid viscoelasticity with data from the Hill and Schoenfisch Labs (UNC-Chapel Hill) on biofilm eradication, as first reported and described in \cite{reighard:hill:dixon:worley:schoenfisch:2015}.

The paper is organized as follows. In Section \ref{s:pre}, we summarize the key mathematical results on the asymptotic distribution of the $\MSD$. In Section \ref{s:improved_pathwise_estim}, assuming a single observed path of realistic length, we characterize the bias and the variance in $\MSD$-based estimation to construct the improved single-path estimator and compare it with the standard $\MSD$-based estimator in terms of statistical performance. In Section \ref{s:test_heter}, assuming multiple observed paths, we use the estimator developed in Section \ref{s:improved_pathwise_estim} to construct statistical testing protocols for intra- and interfluid heterogeneity detection. In Section \ref{s:experiments}, we model and test fluid heterogeneity through experimental data. For the reader's convenience, Section \ref{s:asympt_dist_MSD} of the Appendix contains mathematically accurate statements of the results in Section \ref{s:pre} and \cite{didier:zhang:2017}. Sections \ref{sec:app2sec1}, \ref{s:bias_var_asympt_dist_stand_estimator} and \ref{s:pseudocode} contain all new mathematical results and their proofs. Newly designed \texttt{Matlab} routines containing the estimation and testing protocols will be made available on the authors' websites at the time of publication.

\section{Background}\label{s:pre}

Before we revisit the results in \cite{didier:zhang:2017} on the asymptotic behavior of the $\MSD$, for the sake of exposition we consider some classical results from probability theory.

\begin{example}\label{ex:CLT}
Consider independent and identically distributed random variables $X_1,\hdots,X_N$, each with mean $\langle X_1 \rangle = \mu$ and finite variance
$\Var X_1 := \langle X^2_1 \rangle - \langle X_1 \rangle^2 = \varphi^2 > 0$. If $\overline{X}_N = N^{-1} \sum^{N}_{i=1}X_i$ denotes the sample mean, then the celebrated \textit{central limit theorem }states that, for large $N$, the distribution of the standardized sample mean approaches that of a standard normal, i.e.,
\begin{equation}\label{e:CLT}
\sqrt{N} \hspace{1mm}\frac{(\overline{X}_N - \mu)}{\varphi} \stackrel{d}\rightarrow {\mathcal N}(0,1), \quad N \rightarrow \infty.
\end{equation}
%This means that, if $F_N$ and $\Phi$ are, respectively, the cumulative distribution functions of the standardized sample mean and the standard normal law, then $$
%F_N(x) \rightarrow \Phi(x), \quad N \rightarrow \infty,
%$$
%at every point $x$.
Apart from naturally leading to confidence intervals and hypothesis tests, the convergence \eqref{e:CLT} also implies that $\overline{X}_N$ is a \textit{consistent} estimator of $\mu$, namely, it converges in probability to $\mu$. This is so because
\begin{equation}\label{e:Xbar_consistency}
\overline{X}_N - \mu =  \Big(\frac{\varphi}{\sqrt{N}} \Big) \sqrt{N}\hspace{1mm}\frac{(\overline{X}_N - \mu)}{\varphi}  \stackrel{P}\rightarrow 0, \quad N \rightarrow \infty.
\end{equation}
The zero limit in probability in \eqref{e:Xbar_consistency} stems from the fact that the vanishing term $\varphi/\sqrt{N} \rightarrow 0$ multiplies a standardized sample mean that converges in distribution \eqref{e:CLT} (see \cite{shiryaev:2000}).
\end{example}

Apart from distinct assumptions on the observations, the claims in \cite{didier:zhang:2017} on the asymptotic behavior of the $\MSD$ are reminiscent of the classical statements \eqref{e:CLT} and \eqref{e:Xbar_consistency}, with two differences: $(i)$ the rate of convergence is not typically $\sqrt{N}$ in biophysical modeling; $(ii)$ the asymptotic distribution of the $\MSD$ is not necessarily Gaussian.\\

So, consider the random vector
\begin{equation}\label{e:MSD^_vector}
\Big(M_N( \tau_1  ) ,\hdots, M_N( \tau_m  )\Big ),
\end{equation}
namely, a vector of $\MSD$ terms \eqref{e:MSD^} at $m$ different lag values, obtained from one path of a Gaussian, stationary increment process. Fitting \eqref{e:regression} and \eqref{e:MSD^_vector} by means of ordinary least squares (OLS) regression is the most intuitive way of constructing an estimator of the diffusion parameter vector ${\boldsymbol \xi} = (\log \sigma^2,\alpha)$. This corresponds to the common practice in the biophysical literature, both in experimental and methodological work (e.g., \cite{valentine:kaplan:thota:crocker:gisler:prudhomme:beck:weitz:2001,lieleg:vladescu:ribbeck:2010,burnecki:kepten:janczura:bronshtein:garini:weron:2012,lysy:pillai:hill:forest:mellnik:vasquez:mckinley:2016} among many references). Throughout this paper,
\begin{equation}\label{e:xi_OLS}
{\boldsymbol E}_{\ols} = ( L_{\ols}, A_{\ols})
\end{equation}
denotes this standard estimator (see \eqref{e:lm_estimator} for a precise expression). In this framework, we need to make the lag sizes $\tau_1,\hdots,\tau_m$ themselves go to infinity, though no faster than the sample size $N$. This mathematically expresses the practical analysis of anomalous diffusion data: the lag size has to be
\begin{itemize}
\item [$({\mathcal L}1)$] \textit{large enough} for the $\MSD$ regime to become \textit{log-linear};
\item [$({\mathcal L}2)$] but, at the same time, \textit{not too large} because of the \textit{increased variance} of the $\MSD$.
\end{itemize}
For a generic lag value $\tau$, we can model this idea by writing
\begin{equation}\label{e:h->infty}
\infty \leftarrow \tau \ll N.
\end{equation}
The limit and inequality in \eqref{e:h->infty} express $({\mathcal L}1)$ and $({\mathcal L}2)$, respectively (the accurate mathematical statements are given by condition \eqref{e:h(n)}; see also Figure \ref{fig:msd_shortmemo}).

%Condition \eqref{e:h->infty} is typically not needed when the particle motion $X$ is exactly self-similar, e.g., if it follows a fBm (see Example \ref{ex:fBm}; c.f.\ ). However, in the absence of exact self-similarity, the $\MSD$ is generally \textit{biased} due to the presence of non-fractional high frequency behavior. This is illustrated in Figure \ref{fig:msd_shortmemo} for both simulated and experimental data, where estimation based on the first few lag values $\tau$ leads to strongly biased estimates of $\alpha$.

\vspace{3mm}
\begin{minipage}{\linewidth}
\setlength{\heavyrulewidth}{1.5pt}
\setlength{\abovetopsep}{4pt}
\centering
 \begin{tabular}{lcc}\toprule
\textnormal{parameter range}  &  \textnormal{rate of convergence} & asymptotic distribution\\\hline
$0 < \alpha < 3/2$              & $\sqrt{\frac{N}{\tau}} \frac{1}{\tau^{\alpha}}$ & Gaussian\\
 $\alpha = 3/2$                 &  $\sqrt{\frac{N}{\log N}}\frac{1}{\tau^2}$  & Gaussian \\
 $3/2 < \alpha < 2$             &  $\frac{N^{2- \alpha}}{\tau^2}$ & non-Gaussian\\
 \bottomrule
\end{tabular}
\captionof{table}{Asymptotic behavior of the $\MSD$ random vector \eqref{e:MSD^_vector} (see Theorem \ref{t:asympt_dist_MSD}).}\label{table:asympt_dist_MSD}
\par
\bigskip

\end{minipage}
\vspace{3mm}

\vspace{3mm}
\setlength{\heavyrulewidth}{1.5pt}
\setlength{\abovetopsep}{4pt}
\begin{table}[!htbp]
\centering
\begin{tabular}{*5c}
\toprule
parameter range &  \multicolumn{2}{c}{rate of convergence} & joint asymptotic distribution & consistency\\
%\midrule
{}   & $ L_{\ols} $   & $ A_{\ols}$    &    &  \\\hline
$0 < \alpha < 3/2$    &  $\sqrt{\frac{N}{\tau}} \frac{1}{\tau^{\alpha}} \frac{1}{\log \tau}$ & $\sqrt{\frac{N}{\tau}} \frac{1}{\tau^{\alpha}}$   & Gaussian  & yes\\
$\alpha = 3/2$    &  $\sqrt{\frac{N}{\log N}}\frac{1}{\tau^2}\frac{1}{\log \tau}$ & $\sqrt{\frac{N}{\log N}}\frac{1}{\tau^2}$   & Gaussian  & yes\\
 $3/2 < \alpha < 2$   &  $\frac{N^{2- \alpha}}{\tau^2}\frac{1}{\log \tau}$  &  $\frac{N^{2- \alpha}}{\tau^2}$   & non-Gaussian  & yes\\
\bottomrule
\end{tabular}
\caption{Asymptotic behavior of the standard $\MSD$-based estimator \eqref{e:xi_OLS} (see Corollary \ref{c:asympt_dist_MSD}).} \label{table:asympt_dist_OLS}
\end{table}
\vspace{3mm}

The asymptotic distribution of the $\MSD$ random vector \eqref{e:MSD^_vector} after centering is briefly described in Table \ref{table:asympt_dist_MSD}. This leads to the asymptotic behavior of the standard estimator \eqref{e:xi_OLS}, which is summarized in Table \ref{table:asympt_dist_OLS} in terms of convergence rate, asymptotic distribution and consistency. In both cases, the value of $\alpha$ determines the convergence rate and the nature of the asymptotic distribution. In particular, over almost the whole strong superdiffusivity range (i.e., over $3/2 < \alpha < 2$), the asymptotic distribution is non-Gaussian (Rosenblatt-type; see Theorem \ref{t:asympt_dist_MSD} and \cite{rosenblatt:1961,taqqu:1975,taqqu:2011,veillette:taqqu:2013}). For any instance, by an argument analogous to \eqref{e:Xbar_consistency}, the standard estimator is consistent, i.e.,
\begin{equation}\label{e:consistency}
{\boldsymbol E}_{\ols} \stackrel{P}\rightarrow \boldsymbol \xi.
\end{equation}
%where $\stackrel{P}\rightarrow$ denotes convergence in probability.

The family of stochastic processes for which the limits in distribution in Tables \ref{table:asympt_dist_MSD} and \ref{table:asympt_dist_OLS} hold is broad and contains a number of popular models. Three examples are fractional Brownian motion (fBm), fractional instances of the generalized Langevin equation (GLE) and the (integrated) fractional Ornstein-Uhlenbeck process (ifOU).

\begin{example}\label{ex:fBm}
Together with the continuous time random walk, fBm is one of the most popular models of anomalous diffusion \cite{taqqu:2003,barkai:garini:metzler:2012}. For some value of the so-named Hurst parameter $H \in (0,1)$ and a variance parameter $D > 0$, a fBm $B_H(t)$ is the only Gaussian, stationary increment process with covariance function
\begin{equation}\label{e:fBm_cov}
\langle B_H(s)B_H(t)\rangle = D \{|t|^{2H} + |s|^{2H} - |t-s|^{2H}\}, \quad s,t \in \bbR.
\end{equation}
The particular parameter value $H=1/2$ corresponds to the ordinary Brownian motion (Wiener process). In view of \eqref{e:fBm_cov}, which implies exact self-similarity, for fBm the MSD scaling relation \eqref{e:EX2(t)=Dt(alpha)} holds as an equality, i.e.,
\begin{equation}\label{e:MSD_fBm}
\langle B^2_H(t) \rangle = \sigma^2 t^{\alpha}, \quad t \in \bbR,
\end{equation}
where %$\sigma^2 = C^2_H$ and
\begin{equation}\label{e:alpha=2H}
\sigma^2 = 2D, \quad \alpha = 2H.
\end{equation}
\end{example}

\begin{example}
The GLE has been used as a universal model of anomalous diffusion in the biophysical field of microrheology \cite{mason:weitz:1995,zwanzig:2001,ottobre:pavliotis:2011,nguyen:mckinley:2017}. A subclass of interest of the GLE framework is the fractional GLE family \cite{kou:xie:2004,kou:2008,didier:fricks:2014}, which is obtained almost surely as the solution of the stochastic differential equation
\begin{equation}\label{e:fGLE}
m \hspace{1mm}dV(t) = - \lambda \int^{t}_{-\infty}\Gamma(t-s)V(s)dsdt + dB_{H}(t), \quad 1/2 < H < 1.
\end{equation}
In \eqref{e:fGLE}, $m,\lambda > 0$ and the memory kernel has the form $\Gamma(t) = 2H(2H-1)|t|^{2H-2}$, $t \neq 0$, which is a consequence of invoking the fluctuation-dissipation relation \cite{didier:mckinley:hill:fricks:2012,lysy:pillai:hill:forest:mellnik:vasquez:mckinley:2016}. The integrated fractional generalized Langevin process (ifGL) is given by $X(t) = \int^{t}_{0}V(s) ds$, $t > 0$, where $\{V(t)\}_{t \geq 0}$ is a solution of the fractional GLE. For the ifGL, relation \eqref{e:EX2(t)=Dt(alpha)} holds with $\alpha = 2(1 - H)$ (subdiffusive) as $t \rightarrow \infty$.
\end{example}

\begin{example}
The ifOU is given by $X(t) = \int^{t}_{0}V(s) ds$, $t > 0$, where the so-named fractional Ornstein-Uhlenbeck process $\{V(t)\}_{t \geq 0}$ is the almost surely continuous solution to the fBm-driven Langevin equation
\begin{equation} \label{e:fOU_SDE}
dV(t) = - \lambda V(t) dt + \varphi \hspace{0.5mm}dB_{H}(t), \quad t \geq 0, \quad \lambda>0, \quad 0 < H< 1
\end{equation}
(see \cite{cheridito:kawaguchi:maejima:2003,prakasarao:2010}). The ifOU process is a mathematically convenient model of anomalous diffusion. In the subdiffusive range, it displays a similar correlation structure to that of the ifGL process. For the ifOU, relation \eqref{e:EX2(t)=Dt(alpha)} holds with \eqref{e:alpha=2H} as $t \rightarrow \infty$.  %Furthermore, for the generic process $X$ in \eqref{x_spec_rep},
%\begin{equation}\label{e:ex(s)x(t)_B}
%\Big| \frac{\bbE X^2(t)}{\sigma^2 t^{\alpha}} - 1\Big| \leq \frac{C}{t^{\delta}}, \quad t > 0,
%\end{equation}
%for appropriate constants $\sigma^2$ and $C$ (see \cite{didier:zhang:2017}). In other words, and in light of \eqref{e:MSD_fBm}, the MSD of $X$ differs from that of a fBm by a factor that vanishes at large time scales.
\end{example}

\begin{remark}
The results in \cite{didier:zhang:2017} do not cover some important anomalous diffusion models such as continuous time random walks. For the latter family of models, limit theorems typically involve distinct nonstandard asymptotic distributions depending on the assumptions (see, for instance, \cite{meerschaert:scheffler:2004,meerschaert:nane:xiao:2009} and references therein; for general guidelines on the use of the $\MSD$, see \cite{kepten:weron:sikora:burnecki:garini:2015}).
\end{remark}

\section{Improved $\MSD$-based estimation}\label{s:improved_pathwise_estim}

The standard estimator ${\boldsymbol E}_{\ols}= ( L_{\ols}, A_{\ols})$ in \eqref{e:xi_OLS} has at least two significant shortcomings: \textit{finite sample bias} and suboptimal performance \textit{in the presence of correlation} among the regression disturbance terms $\{\varepsilon_{k}\}_{k=1,\hdots,N}$. We propose a single-path improved estimation protocol that addresses these issues. Accordingly, it involves two components, which we describe next. These two components involve asymptotically valid mathematical expressions for finite-sample bias and variance. Hereinafter, different lag values are expressed as
\begin{equation}\label{e:hk}
\tau_k = w_k \tau, \quad w_1 < \hdots < w_k,
\end{equation}
for fixed constants $w_\cdot$, where $\tau=\tau(N)$ grows as function of $N$.\\ %Recall that $ A_{\ols}$ is the second entry of the estimator ${\boldsymbol E}_{\ols}$ obtained from the original regression system \eqref{e:regression}. \\ %i.e., (a) bias correction and (b) a quasi-optimal regression procedure based on accounting for correlation among disturbance errors.
%\begin{itemize}
%\item [$(a)$] a finite sample bias correction in the regression system of equations;
%\item [$(ii)$] a nearly optimal regression procedure.% in the presence of correlated errors.
%\end{itemize}
%We now provide a description of each one of them.\\

\noindent \textbf{(a) Bias correction.} In $\MSD$-based scaling analysis, there at least two sources of bias. First, bias appears if the particle movement is not exactly self-similar (not a fBm), i.e.,
 $$
 \langle X^2(t) \rangle \neq \sigma^2 t^{\alpha} \quad \textnormal{over a range of $t$}.
 $$
In fact, the deviation of the MSD from exact self-similarity or power scaling is generally controlled by the relation
\begin{equation}\label{e:bound_ensemble_MSD}
\Big| \frac{ \langle X^{2}(t)\rangle}{\sigma^2 t^{\alpha}} - 1\Big| \leq \frac{C}{t^{\delta}} \quad \textnormal{for large $t$},
\end{equation}
for some constant $\sigma^2 > 0$, where the deviation parameter $\delta > 0$ mostly depends on the high frequency behavior of the particle motion (see Proposition \ref{p:bound_ensemble_MSD}). Second, even under self-similarity, bias stems from the elementary fact that the logarithm of the ensemble average and the ensemble average of the logarithm are distinct (e.g., \cite{veitch:abry:1999,moulines:roueff:taqqu:2007:spectral,moulines:roueff:taqqu:2007fractals,moulines:roueff:taqqu:2008}). In the context of \eqref{e:regression}, this means that
$$
\langle\log M_N(\tau)\rangle \neq \log \langle M_N(\tau)\rangle = \alpha \log \tau + \log \sigma^2, \quad \tau \in \bbN.
$$
So, by reinterpreting $\log M_N(\tau)$ itself as an estimator of $\alpha \log \tau + \log \sigma^2$, we can express the bias involved in $\MSD$-based estimation as
\begin{equation}\label{e:bias_log}
\langle\log M_N(\tau) \rangle - (\alpha \log \tau + \log \sigma^2)  = - \frac{\tau}{N}\beta_N(\alpha,\tau) + O\Big(\frac{1}{\tau^{\delta}}\Big) + O\Big(\frac{\tau}{N}\Big)
\end{equation}
for the same $\delta>0$ as in \eqref{e:bound_ensemble_MSD} (for $0 < \alpha < 3/2$ -- see Theorem \ref{t:logmubias}; see also Remark \ref{r:on_3/2=<alpha<2} on the range $3/2 \leq \alpha < 2$). The term of order $O(\tau^{-\delta})$, then, is mostly determined by the high frequency behavior of the anomalously diffusive particle (see Figure \ref{fig:msd_shortmemo} and expressions \eqref{e:s(x)}, \eqref{e:delta}). In \eqref{e:bias_log}, the main bias factor is given by the function
\begin{equation}\label{e:beta(alpha,h,n)}
\beta_N(\alpha,\tau) = \frac{1}{4 \tau} \sum_{i=-N+1}^{N-1} \bigg(1- \frac{\abs{i}}{N}\bigg)  \Big\{ \abs{ \frac{i}{\tau} + 1}^{\alpha} -2 \abs{ \frac{i}{\tau} }^{\alpha} + \abs{ \frac{i}{\tau} - 1}^{\alpha} \Big\}^2.
\end{equation}
% and
%\begin{equation}\label{e:varrhoe1}
%  \varrho(h,n,\alpha) = \left\{
%                                                   \begin{array}{ll}
%                                                     (\frac{\tau}{N})^{1/2}, & 0<\alpha<3/2; \\
%                                                     (\frac{h \log N}{N})^{1/2}, & \alpha = 3/2; \\
%                                                     (\frac{\tau}{N})^{2-\alpha}, & 3/2<\alpha <1.
%                                                   \end{array}
%                                                 \right.
%\end{equation}
Note that \eqref{e:beta(alpha,h,n)} depends on the unknown parameter $\alpha$. So, we use $ A_{\ols}$ and \eqref{e:beta(alpha,h,n)} to define an estimator of the bias vector by
\begin{equation}\label{e:beta(alphahat,h,n)}
\Big( \beta_N( A_{\ols},\tau_k) \Big)_{k=1,\hdots,m}.
\end{equation}

%One of them decays to zero as a function of the lag values $\tau$. This reflects the familiar notion that the regression procedure can only be conducted on lag values large enough that the associated $\MSD$ curve approaches its large scale behavior and becomes approximately flat. The other component is a (known) function of the unknown diffusion exponent $\alpha$. Therefore, a bias correction procedure consists of using $ A_{\ols}$ to generate a plug-in estimator of the latter and subtracting it from the regression equations.
%
%On the other hand, the error terms in \eqref{e:regression} are generally correlated. In any regression problem with correlated errors, a GLS procedure is expected to outperform OLS in terms of mean squared error (MSE) \cite{christensen:2011}.

\noindent \textbf{(b) Accounting for disturbance correlation.} In linear estimation theory, the method for dealing with correlated random errors is called generalized least squares (GLS). In fact, the resulting GLS estimator is the best linear unbiased estimator, since it outperforms its OLS counterpart in terms of mean squared error (MSE) (see  \cite{christensen:2011}).

In the context of $\MSD$-based estimation, to better understand the difference between the standard, OLS-based estimator and the related GLS-based estimator, recast the vector system \eqref{e:regression} as the regression model
\begin{equation}\label{e:MSD_regression}
{\mathbf z} = X \boldsymbol\xi + {\boldsymbol \varepsilon}.
\end{equation}
In \eqref{e:MSD_regression}, the term $\boldsymbol\xi$ is as in \eqref{e:EX2(t)=Dt(alpha)}, and the dependent variable and the regressor are given by, respectively,
\begin{equation}\label{e:lm_notation}
{\mathbf z} = \Big(\log M_N(\tau_k)\Big)_{k=1,\hdots,m}, \quad
X = \left(
      \begin{array}{cc}
        1 & \log \tau_1 \\
        \vdots & \vdots \\
        1 & \log \tau_m \\
      \end{array}
    \right).
\end{equation}
It is well known that the expression
\begin{equation}\label{e:lm_estimator}
{\boldsymbol E}_{\ols}:=(X^T X)^{-1}X^T {\mathbf z} = ( L_{\ols}, \hspace{1mm} A_{\ols} \hspace{0.5mm})^T
\end{equation}
gives the standard estimator \eqref{e:xi_OLS} generated by the OLS solution to the system \eqref{e:MSD_regression}. By contrast, let \begin{equation}\label{e:hetero_tildesigma}
  \Upsilon({\boldsymbol \xi}) = \Big(\upsilon_{k_1,k_2}({\boldsymbol \xi})\Big)_{k_1,k_2 = 1,\hdots,m},
\end{equation}
be the finite sample covariance matrix of the vector ${\mathbf z}$ as in \eqref{e:lm_notation}. The GLS solution is given by
\begin{equation}\label{e:GLS_solution}
(X^T\Upsilon({\boldsymbol \xi})^{-1} X)^{-1}X^T\Upsilon({\boldsymbol \xi})^{-1} {\mathbf z},
\end{equation}
which involves the unknown matrix \eqref{e:hetero_tildesigma}. In practice, then, one needs to estimate such matrix. For this purpose, we first establish the entrywise expansion
  \begin{equation}\label{e:tildesigmak1k2e1}
  \upsilon_{k_1,k_2}({\boldsymbol \xi}) = \frac{\tau}{N}  \hspace{1mm}\varsigma_N(\alpha,\tau_{k_1},\tau_{k_2}) + O\Big(\frac{\tau^{1-\delta}}{N}\Big) + o\Big( \frac{\tau}{N}\Big), \quad k_1,k_2 = 1,\hdots,m
  \end{equation}
  (for $0 < \alpha < 3/2$ -- see Theorem \ref{t:tildesigmak1k2}; see also Remark \ref{r:on_3/2=<alpha<2} on the range $3/2 \leq \alpha < 2$). In \eqref{e:tildesigmak1k2e1}, the main variance factor is given by
  $$
  \varsigma_N(\alpha,\tau_{k_1},\tau_{k_2}) = \frac{1}{2 \tau } \sum_{i=-N+1}^{N-1} \bigg( 1- \frac{\abs{i}}{N}\bigg)\bigg\{ \abs{ \frac{i}{\sqrt{\tau_{k_1} \tau_{k_2}}} + \sqrt{\frac{\tau_{k_1}}{\tau_{k_2}}} }^{\alpha}
  -\abs{ \frac{i}{\sqrt{\tau_{k_1} \tau_{k_2}}} + \sqrt{\frac{\tau_{k_1}}{\tau_{k_2}}} - \sqrt{\frac{\tau_{k_2}}{\tau_{k_1}}}}^{\alpha}
 $$
 \begin{equation}\label{e:varsigma-n(alpha,h1,h2)}
-\abs{ \frac{i}{\sqrt{\tau_{k_1} \tau_{k_2}}} }^{\alpha}  + \abs{ \frac{i}{\sqrt{\tau_{k_1} \tau_{k_2}}} - \sqrt{\frac{\tau_{k_2}}{\tau_{k_1}}}}^{\alpha} \bigg\}^2.
  \end{equation}
%Starting from \eqref{e:varsigma-n(alpha,h1,h2)}, each entry of the covariance matrix $\Upsilon({\boldsymbol \xi})$ in \eqref{e:hetero_tildesigma} can be approximated by a function of the unknown parameter $\alpha$.
Note that expression \eqref{e:varsigma-n(alpha,h1,h2)} does not involve the constant $\sigma^2$, but it is still a function of the unknown parameter $\alpha$. Second, and in view of this, we can use $ A_{\ols}$ and \eqref{e:varsigma-n(alpha,h1,h2)} to define an estimator of the covariance matrix by
\begin{equation}\label{e:Sigmatilde(alpha-hat)}
\Upsilon( A_{\ols}) := \Big( \frac{\tau}{N}\varsigma_N( A_{\ols},\tau_{k_1},\tau_{k_2})\Big)_{k_1,k_2=1,\hdots,m}.\vspace{1mm}
\end{equation}

Drawing upon \textbf{(a)} and \textbf{(b)}, we can further construct an improved estimator of ${\boldsymbol \xi}$ by a quasi-GLS procedure based on the estimator $\Upsilon( A_{\ols})$ and by replacing \eqref{e:regression} with the bias-corrected regression system
\begin{equation}\label{e:regression_bias-corrected}
\log M_N(\tau_k)  + \frac{\tau_k}{N} \beta_N( A_{\ols},\tau_k)
= \log \sigma^2 + \alpha \log \tau_k +\varepsilon_k, \quad k=1,\hdots,m.
\end{equation}
The resulting estimator can be expressed as
\begin{equation}\label{e:hetero_hatxi}
  {\boldsymbol E} = ( L, A)
    = (X^T \Upsilon^{-1}( A_{\ols}) X)^{-1} X^T \Upsilon^{-1}( A_{\ols}) {\mathbf{y}},
\end{equation}
where $X$ is again as in \eqref{e:lm_notation} and
\begin{equation}\label{e:y,X}
{\mathbf y} = \left(\begin{array}{c}
\log M_N(\tau_1)  + \frac{\tau_1}{N} \beta_N( A_{\ols},\tau_1)\\
\vdots \\
\log M_N(\tau_m)  + \frac{\tau_m}{N} \beta_N( A_{\ols},\tau_m)\\
\end{array}\right).
\end{equation}
For the reader's convenience, the construction of the estimator ${\boldsymbol E}$ is summarized in the form of pseudocode in Appendix \ref{s:pseudocode}.

To compare the performances of ${\boldsymbol E}$ and ${\boldsymbol E}_{\ols}$, we generated 1000 independent paths of length $2^{10}$ and estimated the diffusion exponent based on the two methods. Figure \ref{fig:ols_vs_wls} displays the results in terms of Monte Carlo bias, standard deviation and square root MSE. The improved estimator ${\boldsymbol E}$ outperforms the usual estimator ${\boldsymbol E}_{\ols}$ by any of the three criteria for different values of $\alpha$.\\

%\begin{remark}\label{r:residual_O(\tau^(-delta))}
%The term of order $O(\tau^{-\delta})$ on the right hand side of \eqref{e:bias_log} is mostly determined by the high frequency (short memory) behavior of the anomalous diffusion process. The small scale effect of short memory can be observed on plots of MSD curves vs lag values on a $\log_2-\log_2$ scale. The left and right plots in Figure \ref{fig:msd_shortmemo} display, respectively, MSD curves from simulated ifOU with $\alpha = 0.6$ and from experimental data provided by the David B.\ Hill Lab (COS2-NO at 8 mg ml${}^{-1}$). In both cases, using small lag values in the $\log_2 \MSD$ regression leads to conspicuously biased estimates of $\alpha$. With simulated data, we obtain the estimates $ A = 1.42$ and $0.70$ by using small and large lag values, respectively. With real data, the use of small lag values yields the estimate $ A = 1.10$. This would suggest that the observed particle movement is not subdiffusive, whereas the use of large lag values leads to the opposite conclusion ($ A=0.69$).\vspace{1mm} %On the other hand, the lag values cannot be chosen too large by comparison to the path length $N$. Otherwise, this leads to the inclusion of $\MSD$ terms $M_N(\tau)$ with large variance due to the small number of terms in the average.
%\end{remark}

\begin{remark}
Note that the main bias and variance factors $\beta_{N}(\alpha,\tau)$ and $\varsigma_N(\alpha,\tau_{k_1},\tau_{k_2})$ in \eqref{e:beta(alpha,h,n)} and \eqref{e:varsigma-n(alpha,h1,h2)}, respectively, converge as $N \rightarrow \infty$ (see Lemma \ref{l:bias_var_converge}). Moreover, after standardization, the estimator \eqref{e:hetero_hatxi} is provably asymptotically normal and consistent for $0 < \alpha < 3/2$ (see \eqref{e:xi_to_zeta_estimvar} in Section \ref{s:test_heter} and Proposition \ref{p:asympt_dist_estim}). See also Remark \ref{r:on_3/2=<alpha<2} on the range $3/2 \leq \alpha < 2$. \vspace{1mm}
\end{remark}

\begin{remark}\label{r:on_3/2=<alpha<2}
Although we do not provide proofs in this paper, the methods developed in this section and also in Section \ref{s:test_heter} can be extended to the strongly superdiffusive range $3/2 \leq \alpha < 2$. For example, due to nonstandard convergence rates, expressions \eqref{e:bias_log} and \eqref{e:tildesigmak1k2e1} hold after replacing $O(\frac{\tau}{N})$ with $O((\frac{\tau \log N}{N}))$ (for $\alpha = 3/2$) or $O((\frac{\tau}{N})^{4 - 2 \alpha})$ (for $3/2 < \alpha < 2$). Likewise, the asymptotic non-Gaussian distribution of the estimator \eqref{e:xi_to_zeta_estimvar} in Section \ref{s:test_heter}, with nonstandard convergence rates, can be established. However, inference involving the nonstandard limiting distribution can be cumbersome, and the computational studies in this section and in Section \ref{s:test_heter} show that the methods in the proposed format work reasonably well for realistic path lengths. See also \cite{didier:zhang:2017}, Remark 2, on how to construct asymptotically valid confidence intervals for ${\boldsymbol \xi}$ based on the standard estimator ${\boldsymbol E}_{\ols}$ assuming prior knowledge that $3/2 < \alpha < 2$.
\end{remark}

\begin{figure}[htbp]
\begin{center}
\includegraphics[scale=0.46]{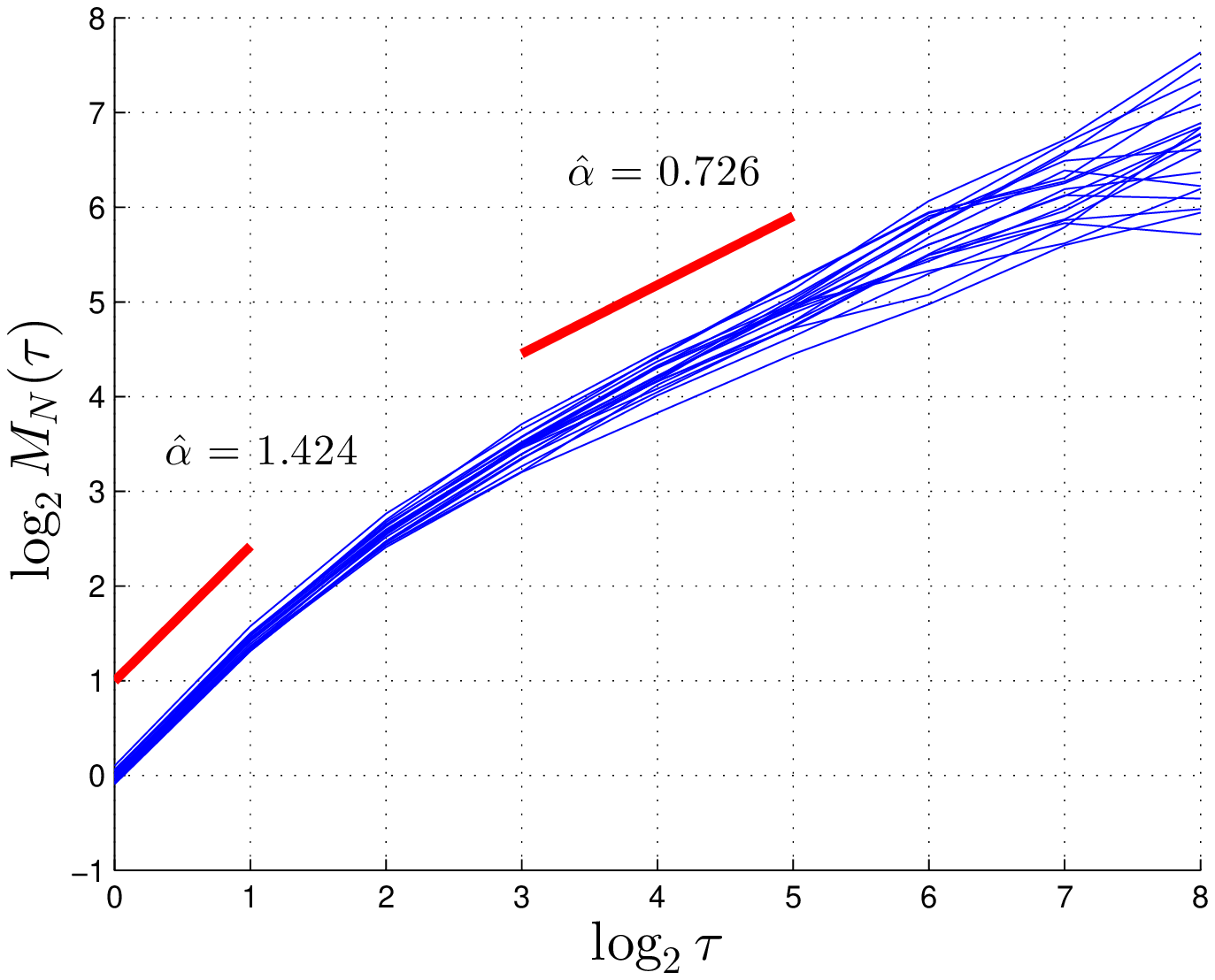}
\includegraphics[scale=0.5]{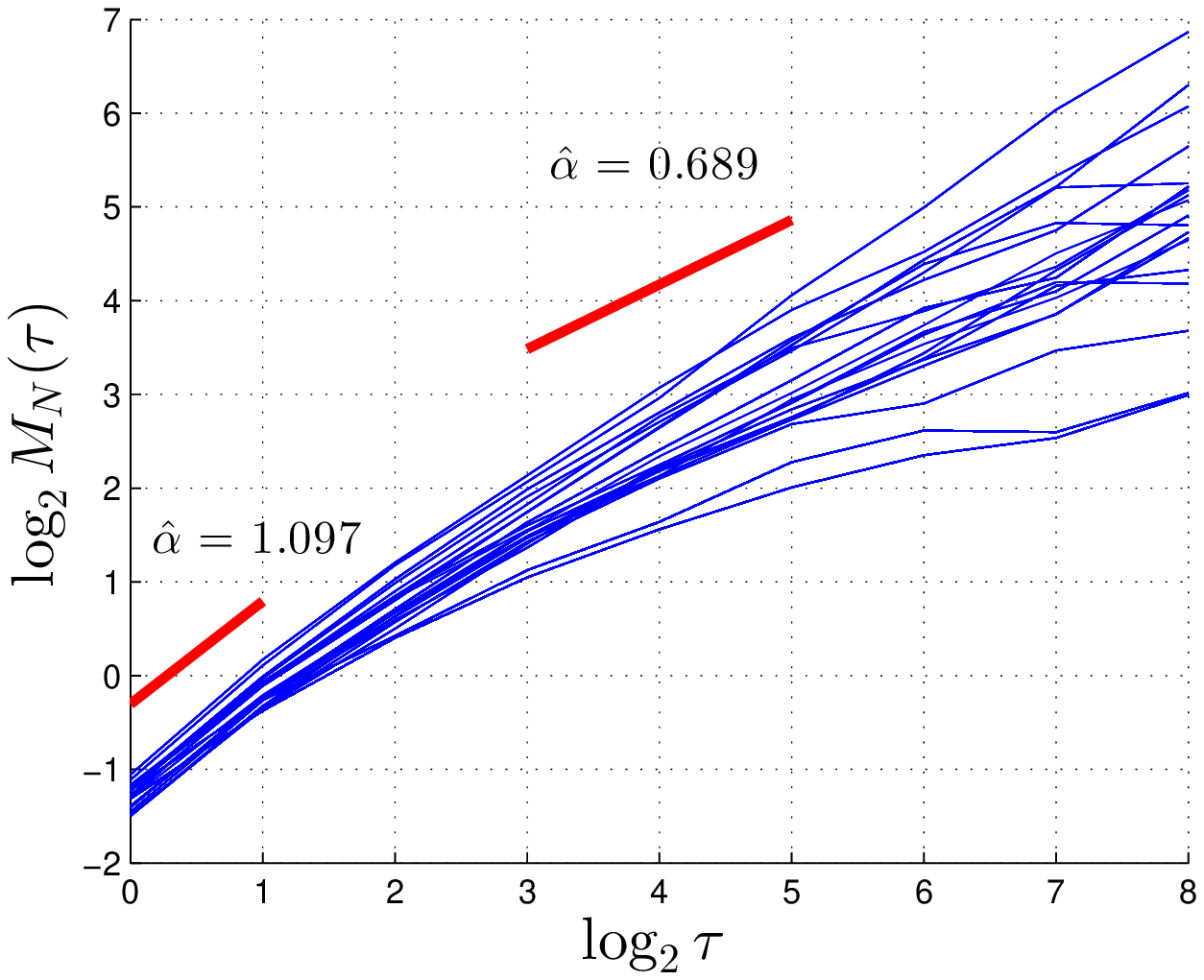}
\caption{\textbf{Bias and inconsistency over small scales $\tau$ vs vanishing bias and consistency as $\tau\rightarrow \infty$.} In general, taking the double limit $\tau, N \rightarrow \infty$ (see \eqref{e:h->infty}) is necessary. Over \textit{fixed} (``small") lag values $\tau$, $\MSD$-based estimation is biased and, for most anomalous diffusion models other than fBm, inconsistent. As mathematically characterized by expansion \eqref{e:bias_log}, estimation bias is fundamentally a consequence of the fact that $\langle \log \cdot \rangle \neq \log \langle \cdot \rangle$ and of the presence of the small scale factor $O(\tau^{-\delta})$, whereas, in turn, inconsistency generally appears as a consequence of this same factor. %Using small lag values $\tau$ often leads to conspicuously biased estimates of $\alpha$. To illustrate this,
The left and right plots show, respectively, 20 independent ifOU paths (length $2^{11}$, $\alpha = 0.6$) and 20 particle paths (length 1800) from \emph{P.\ aeruginosa} biofilm after COS2-NO treatment at concentration level 8 mg ml${}^{-1}$. The first and second red lines in each plot indicate, respectively, the fitted slope over small ($\tau = 1,2$) and large ($\tau = 8,32$) lag values. Based on the former lag values, $ A = 1.42 $ and 1.10 (evidence of superdiffusivity) for simulated and experimental data, respectively, whereas, by contrast, $ A = 0.70$ and 0.69 (evidence of subdiffusivity) based on the latter. This illustrates the fact that bias and inconsistency vanish when $\tau$ (and $N$) becomes large.}
\label{fig:msd_shortmemo}
\end{center}
\end{figure}

\begin{figure}[htbp]
\begin{center}
\includegraphics[scale=0.5]{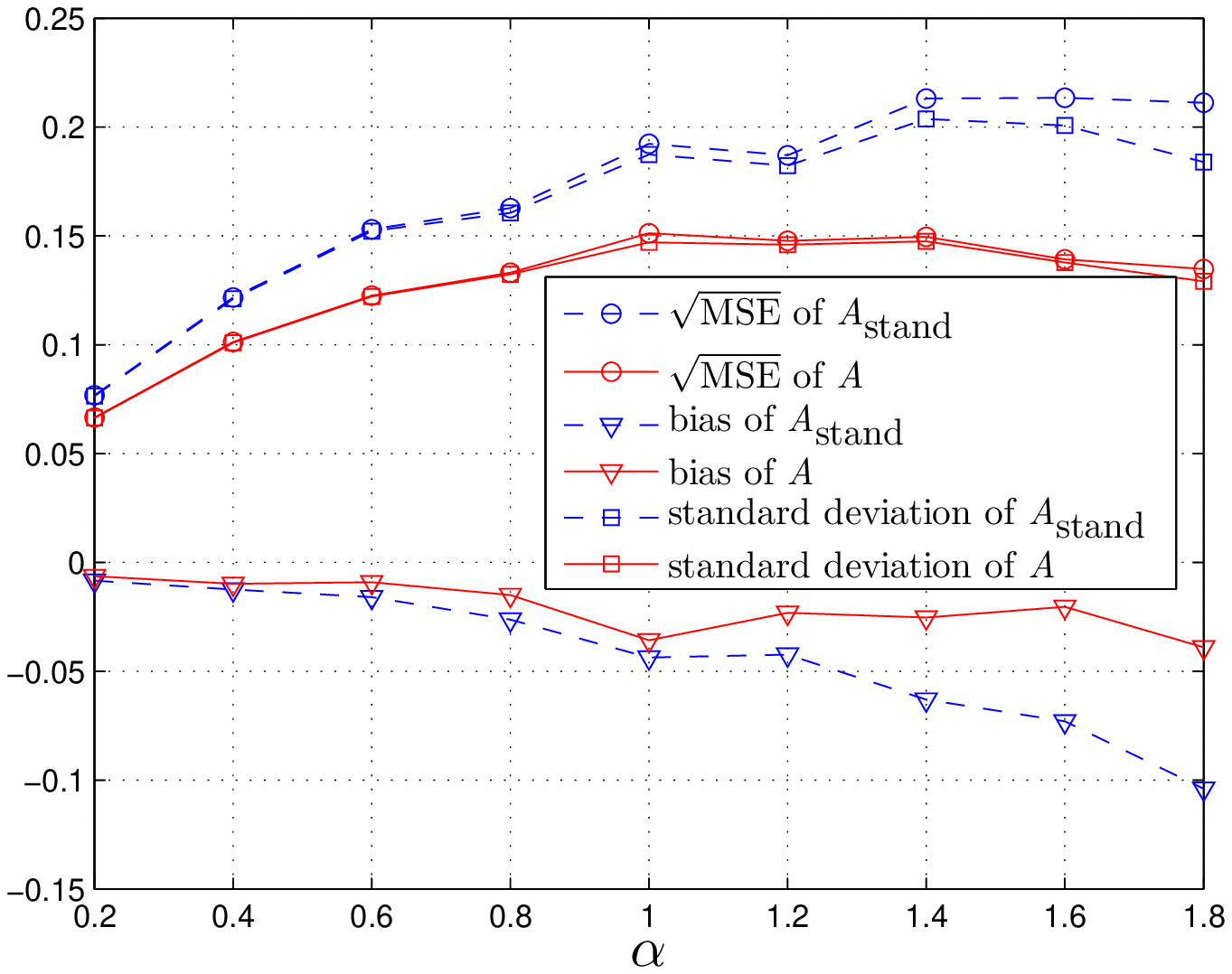}
\includegraphics[scale=0.5]{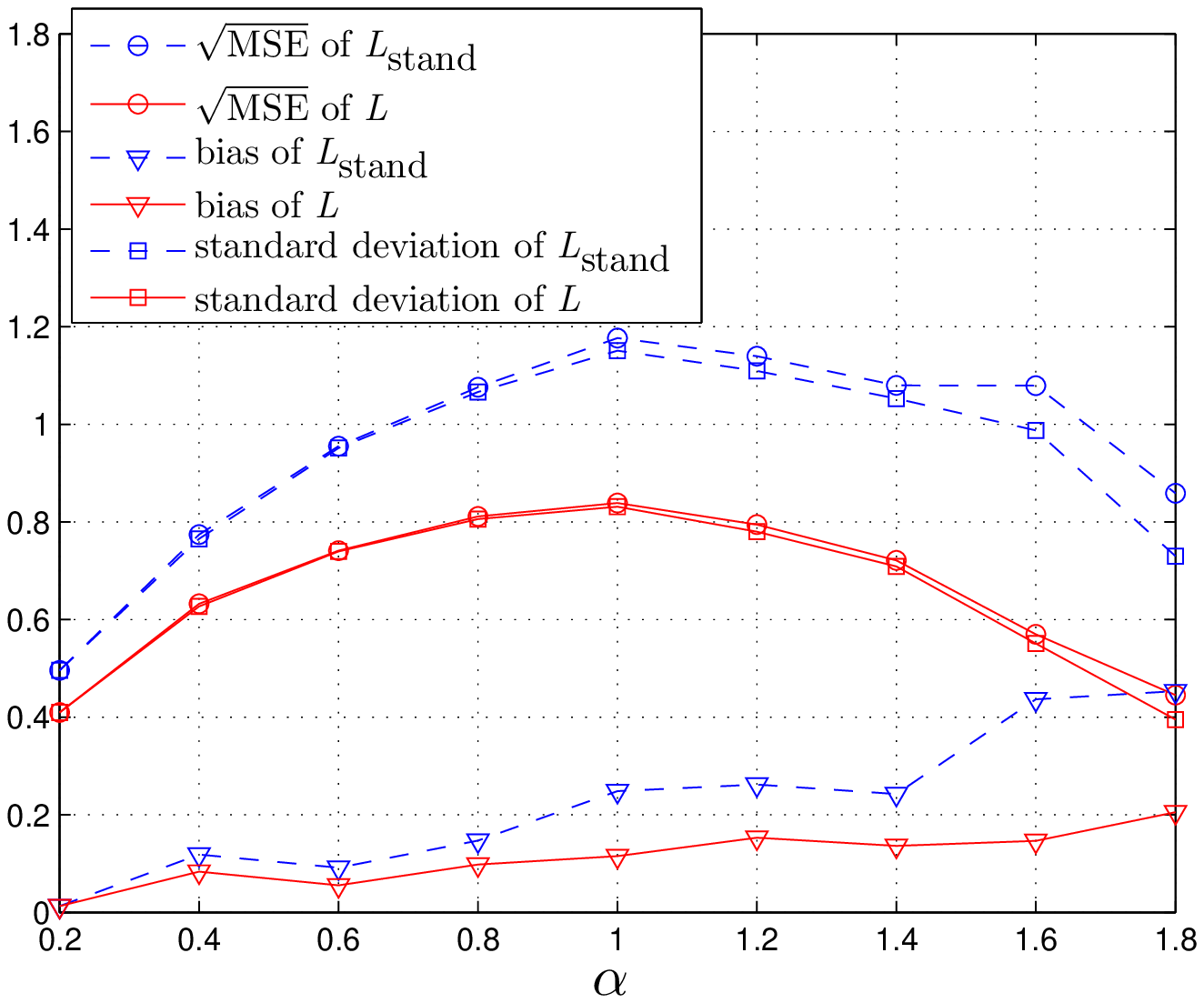}
\caption{\textbf{Comparative analysis of bias, standard deviation and MSE} of the estimators ${\boldsymbol E}$ (see \eqref{e:hetero_hatxi}) and ${\boldsymbol E}_{\ols}$ (see \eqref{e:xi_OLS}) as a function of diffusion exponents $\alpha$ ($x$-axis). Left plot: estimation of $\alpha$. Right plot: estimation of $\log \sigma^2$. Solid and dashed lines represent ${\boldsymbol E}$ and ${\boldsymbol E}_{\ols}$, respectively. For any parameter value $\alpha$, the proposed estimator ${\boldsymbol E}$ has smaller bias, standard deviation and square root MSE than ${\boldsymbol E}_{\ols}$. The total number of Monte Carlo runs is 1000 based on paths of length $2^{10}$.}
\label{fig:ols_vs_wls}
\end{center}
\end{figure}

%To study the effect of bias correction on the performance of the estimator ${\boldsymbol E} = ( L, A)$, we generated 1000 independent paths of length $2^{10}$ and estimated the diffusion exponent $\alpha$ in the presence or absence of bias correction. The results are shown in Figure \ref{fig:wls_bias}. As expected, the bias of $ A$ is smaller after bias correction. In addition, it is near zero for all values of the parameter $\alpha$, with the exception of the range of extreme superdiffusivity ($\alpha=1.8$).
%
%
%\begin{figure}[htbp]
%\begin{center}
%\includegraphics[scale=0.6]{gls_bias_alp.eps}\\
%\includegraphics[scale=0.6]{gls_bias_log2theta.eps}
%\caption{$x$-axis: $\alpha$. The effect of the bias correction procedure in the performance of the estimator ${\boldsymbol E}$ as a function of the diffusion exponent $\alpha$. For comparison, the latter is also considered without bias correction. The version of the estimator with bias correction has a significantly smaller bias than that without bias correction.}
%\label{fig:wls_bias}
%\end{center}
%\end{figure}

\section{Testing heterogeneity}\label{s:test_heter}

Single particle tracking experiments with viscoelastic diffusion often generate data in the form of multiple particle paths. As discussed in the Introduction, fluid heterogeneity can be tested in the \textit{intra-} and \textit{interfluid} senses. The pathwise framework constructed in Section \ref{s:improved_pathwise_estim} can be used in new testing protocols with good finite sample and asymptotic properties. We remind the reader that the mathematical statements cover the diffusion exponent range $0 < \alpha < 3/2$ (subdiffusive and mildly superdiffusive range), though we also include computational experiments for the strongly superdiffusive range $3/2 \leq \alpha < 2$.

Tables \ref{table:hypotheses} and \ref{table:tests} display the proposed framework. For each type of fluid heterogeneity, they show the appropriate hypotheses and testing procedures, respectively. In the remainder of this section, we provide a detailed description of the protocols. To set the notation, we recall that, for a given a hypothesis test, the conditional probability
\begin{equation}\label{e:test_size}
\bbP (H_0 \textnormal{ is rejected }| \hspace{0.5mm}{\boldsymbol \xi} \textnormal{ satisfies }H_0) =: \epsilon \in [0,1]
\end{equation}
is called the \textit{size} (or significance level) of the test, whereas the function
\begin{equation}\label{e:test_power}
{\boldsymbol \xi} \mapsto \bbP (H_0 \textnormal{ is rejected } | \hspace{0.5mm} {\boldsymbol \xi} \textnormal{ satisfies }H_1) \in [0,1]
\end{equation}
is called the \textit{power} of the test.\\

\begin{minipage}{\linewidth}
\centering
\setlength{\heavyrulewidth}{1.5pt}
\setlength{\abovetopsep}{4pt}
\begin{tabular}{lcc}\toprule
\textnormal{heterogeneity}  &  $H_{0}$ & $H_{a}$ \\\hline
\textnormal{intrafluid}   & ${\boldsymbol \xi}_{1} = \cdots = {\boldsymbol \xi}_{\nu}$ &$ \hspace{3mm}{\boldsymbol \xi}_{i} \neq {\boldsymbol \xi}_{j}$ \textnormal{for some }$1\leq i, j \leq \nu$ \\
 \textnormal{interfluid} & ${\boldsymbol \xi}_{\mathrm I} = {\boldsymbol \xi}_{\mathrm{II}}$ & ${\boldsymbol \xi}_{\mathrm I} \neq {\boldsymbol \xi}_{\mathrm{II}}$\\\bottomrule
\end{tabular}
\captionof{table}{Hypotheses}\label{table:hypotheses}
\par
\bigskip
%Should be a caption
\end{minipage}

%\begin{minipage}{\linewidth}
%\centering
%\captionof{table}{Test statistics}\label{table:test_statistics}
% \begin{tabular}{lc}\hline
%\textnormal{heterogeneity} & \textnormal{test statistic}            \\\hline
%\textnormal{intrafluid}   & $S^2_1, S^2_2 \textnormal{ (see \eqref{e:S-xihat})}$\\
%\textnormal{interfluid}   & $T_1, T_2 \textnormal{ (see \eqref{e:two_fluid_ts})}$\\\hline
%\end{tabular}
%\par
%\bigskip
%%Should be a caption
%\end{minipage}
$$
\vspace{-6mm}
$$
\begin{minipage}{\linewidth}
\centering
\setlength{\heavyrulewidth}{1.5pt}
\setlength{\abovetopsep}{4pt}
\begin{tabular}{lccc}\toprule
\textnormal{heterogeneity} & \textnormal{rejection region} & \textnormal{test statistic}   & \textnormal{number of paths} \\ \hline
\textnormal{intrafluid} & $R_{\textnormal{intra}} \textnormal{ (see \eqref{e:R-intra})}$ & $S^2_1, S^2_2 \textnormal{ (see \eqref{e:S-xihat})}$ & $\nu$\\
%\textnormal{intrafluid} & $R_{\textnormal{intra}} \textnormal{ (see \eqref{e:R-intra_large})}$ & $3/2 < \alpha < 2$ & large $\nu$\\
\textnormal{interfluid} & $R_{\textnormal{inter}} \textnormal{ (see \eqref{e:R-inter})}$ & $T_1, T_2 \textnormal{ (see \eqref{e:two_fluid_ts})}$& $\nu_{\mathrm{I}}$, $\nu_{\mathrm{II}}$ \\\bottomrule
\end{tabular}
\captionof{table}{Tests}\label{table:tests}
\par
\bigskip
%Should be a caption
\end{minipage}

%\begin{minipage}{\linewidth}
%\centering
%\captionof{table}{Tests}\label{table:tests}
% \begin{tabular}{lccc}\hline
%\textnormal{heterogeneity} & \textnormal{rejection} & \textnormal{parameter} & \textnormal{sample} \\ & \textnormal{region} & \textnormal{range} & \textnormal{size}\\\hline
%\textnormal{intrafluid} & $R_{\textnormal{intra}} \textnormal{ (see \eqref{e:R-intra})}$ & $0 < \alpha \leq 3/2$ & all $\nu$\\
%\textnormal{intrafluid} & $R_{\textnormal{intra}} \textnormal{ (see \eqref{e:R-intra_large})}$ & $3/2 < \alpha < 2$ & large $\nu$\\
%\textnormal{interfluid} & $R_{\textnormal{inter}} \textnormal{ (see \eqref{e:R-inter})}$ & $0 < \alpha < 2$ & large $\nu_{\mathrm{I}}$, $\nu_{\mathrm{II}}$ \\\hline
%\end{tabular}
%\par
%\bigskip
%%Should be a caption
%\end{minipage}
%\begin{tabular}{|r|l|}
%  \hline
%  7C0 & hexadecimal \\
%  3700 & octal \\ \cline{2-2}
%  11111000000 & binary \\
%  \hline \hline
%  1984 & decimal \\
%  \hline
%\end{tabular}

%\begin{equation}\label{e:H0_one_fluid}
%  H_{0}: {\boldsymbol \xi}_{1} = {\boldsymbol \xi}_{2}= \cdots = {\boldsymbol \xi}_{\nu}
%\end{equation}
%against the alternative heterogeneity hypothesis
%\begin{equation}\label{e:Ha_one_fluid}
%  H_{a}: \exists 1\leq i\neq j \leq \nu,\, {\mathrm{such \, that}}\, {\boldsymbol \xi}_{i} \neq {\boldsymbol \xi}_{j}.
%\end{equation}
%The distribution of ${\boldsymbol E}_i$ only depends on path length $N$, the chosen lags $\tau_k = w_k h$ ($k=1,\hdots,m$) and $\alpha$.

\noindent \textbf{Intrafluid heterogeneity}. Suppose $\nu \in \bbN$ bead diffusion paths of length $N$ from a single fluid sample are available. If the fluid is physically homogeneous, it is expected to generate particle paths with nearly identical parameter values ${\boldsymbol \xi}$. The alternative is that ${\boldsymbol \xi}_i \neq {\boldsymbol \xi}_j$ for some pair $i,j$, namely, their anomalous diffusion parameters differ. These two possibilities, labeled $H_0$ and $H_a$, respectively, are listed on the row ``intrafluid" in Table \ref{table:hypotheses}.

Starting from the $\nu $ particle paths, let
\begin{equation}\label{e:Ei,i=1,...,nu}
{\boldsymbol E}_i, \quad i=1,\hdots,\nu,
\end{equation}
be vector-valued estimators as in \eqref{e:hetero_hatxi}. For the purpose of constructing a test statistic, we need a normalized (standardized) estimator. Note that the variance of the GLS-type solution \eqref{e:GLS_solution} is given by
\begin{equation}\label{e:GLS_solution_variance}
(X^T \Upsilon^{-1}( {\boldsymbol \xi}) X)^{-1}
\end{equation}
(cf.\ \cite{christensen:2011}). So, define a standardized estimator by
\begin{equation}\label{e:xi_to_zeta_estimvar}
{\boldsymbol Z}_i = \left(
                                  \begin{array}{c}
                                    Z_{i,1} \\
                                    Z_{i,2} \\
                                  \end{array}
                                \right)
 = \Lambda^{-1/2}( A_{\ols,i}) {\boldsymbol E}_i,\quad i=1,\hdots,\nu,
\end{equation}
where
\begin{equation}\label{e:Lambda(Astand,i)}
\Lambda( A_{\ols,i}):= (X^T \Upsilon^{-1}( A_{\ols,i}) X)^{-1}
\end{equation}
and the variance estimator $\Upsilon( A_{\ols,i})$ is given by \eqref{e:Sigmatilde(alpha-hat)}. Then, \eqref{e:xi_to_zeta_estimvar} converges in distribution to $\nu$ independent and identically distributed normal random vectors with uncorrelated entries (see Proposition \ref{p:asympt_dist_estim}). So, for $\overline{Z}_{j} = \nu^{-1} \sum_{i=1}^{\nu} Z_{i,j}$, let
\begin{equation}\label{e:S-xihat}
S^2_j = \frac{1}{\nu-1} \sum_{i=1}^{\nu} (Z_{i,j} - \overline{Z}_{j})^2,\quad j=1,2,
\end{equation}
be the normalized and decorrelated sample variances of $\{Z_{i,j}\}_{i=1,\hdots,\nu}$, $j = 1,2$, as in \eqref{e:xi_to_zeta_estimvar}. Then, under $H_0$,
$$
\Big( (\nu-1)S^2_1, (\nu-1)S^2_2 \Big) \stackrel{d}\rightarrow ({\mathcal X}_1,{\mathcal X}_2), \quad  {\mathcal X}_j \sim \chi^2_{\nu-1}, \quad j = 1,2,
$$
as $N \rightarrow \infty$, where ${\mathcal X}_1$ and ${\mathcal X}_2$ are independent random variables. To test heterogeneity at significance level $\epsilon$, we can use Bonferroni-type correction (e.g., \cite{christensen:2011}, section 5.3) and reject the null hypothesis $H_0$ if
\begin{equation}\label{e:R-intra}
R_{\textnormal{intra}}: (\nu-1)S^2_1 > \chi^2_{\nu-1,\epsilon/2} \quad  \textnormal{or} \quad (\nu-1)S^2_2 > \chi^2_{\nu-1,\epsilon/2},
\end{equation}
where $\chi^2_{\nu-1,\epsilon/2}$ is a chi-square quantile (c.f.\ Table \ref{table:tests}, ``intrafluid" rows).

To check the size of the test \eqref{e:R-intra} over finite samples, we conducted a Monte Carlo study with 50 simulated paths of length $2^{12}$ and recorded whether or not the null hypothesis $H_0$ is rejected at $\epsilon = 0.05$ significance level. This procedure was repeated 2000 times. Since each outcome is a Bernoulli trial (reject or not $H_0$), the simulation rejection rate follows a binomial distribution with $n=2000$ and $p=0.05$. Thus, a normal approximation to the 95\% confidence interval of the rejection rate gives $(0.040,0.060)$. As shown in Figure \ref{fig:intra_inter_test sizes}, left plot, the observed simulation rejection rate was around 0.05 and within the 95\% confidence interval (for $0 < \alpha < 3/2$), as expected. Unreported computational experiments for different significance levels lead to analogous conclusions.

Figure \ref{fig:rej_rate_intra} displays Monte Carlo power curves for the intrafluid test. The study was conducted with a total of $\nu = \nu_1 + \nu_2$ paths, where $\nu_1$ and $\nu_2$ have diffusion exponents $\alpha_1 = 1$ and $\alpha_2$, respectively, and $\alpha_2 = 0.8$ (left plot) or $\alpha_2 = 0.7$ (right plot). In each plot, the $x$-axis represents the proportion of paths
\begin{equation}\label{e:nu2/nu1+nu2}
\frac{\nu_2}{\nu_1+\nu_2}
\end{equation}
with diffusion exponent $\alpha=\alpha_2$. The power curves quickly converge to 1 as a function of the ratio \eqref{e:nu2/nu1+nu2}, especially for the more distinguishable value $\alpha_2 = 0.7 < 1 = \alpha_1$.

%if $\nu$ is large enough (heuristically, $\nu \geq 30$),
%\begin{equation}\label{e:s2_Norm_asymp}
%  S^2_1 \overset{\mathrm{asymp}}{\sim} N\bigg( 1,\frac{\kappa_1 - 1}{\nu} \bigg),\quad
%  S^2_2 \overset{\mathrm{asymp}}{\sim} N\bigg( 1,\frac{\kappa_2 - 1}{\nu} \bigg),
%\end{equation}
%where the parameter $\kappa_1 := \bbE[\widehat{\zeta}_{\cdot,1} - \bbE \widehat{\zeta}_{\cdot,1}]^4$, $\kappa_2 := \bbE[\widehat{\zeta}_{\cdot,2} - \bbE \widehat{\zeta}_{\cdot,2}]^4$ can be numerically approximated (see Table \ref{table:test_statistics} and Appendix \ref{s:fourth_moment}). Thus, at significance level $\epsilon$, we reject the null hypothesis $H_0$ if
%\begin{equation}\label{e:R-intra_large}
%R_{\textnormal{intra}}: \frac{S^2_1 - 1}{\sqrt{(\kappa_1 - 1)/\nu}}> z_{\epsilon/2} \quad  \textnormal{or} \quad \frac{S^2_2 - 1}{\sqrt{(\kappa_2 - 1)/\nu}}> z_{\epsilon/2}
%\end{equation}
%(see Table \ref{table:tests}).

When $\alpha >3/2$, under $H_0$ the estimators \eqref{e:Ei,i=1,...,nu} converge in distribution to $\nu$ independent and identically distributed non-Gaussian random vectors. Hence, so do the estimators \eqref{e:xi_to_zeta_estimvar}. In this case, the marginal distributions of the decorrelated vector $((\nu-1)S^2_1,(\nu-1)S^2_2)$ do not approach chi-squared distributions. In computational experiments, the size of the intrafluid test \eqref{e:R-intra} did not significantly deviate from the 0.05 target for $\alpha = 1.6$, indicating that the nonstandard asymptotic behavior is not a concern for paths of length $2^{12}$. Deviation was significant for the extreme value $\alpha = 1.8$, suggesting that approximating the test size by simulation may be generally recommendable for greater accuracy (see Figure \ref{fig:intra_inter_test sizes}, left plot).\\

\begin{figure}[htbp]
\begin{center}
\includegraphics[scale=0.45]{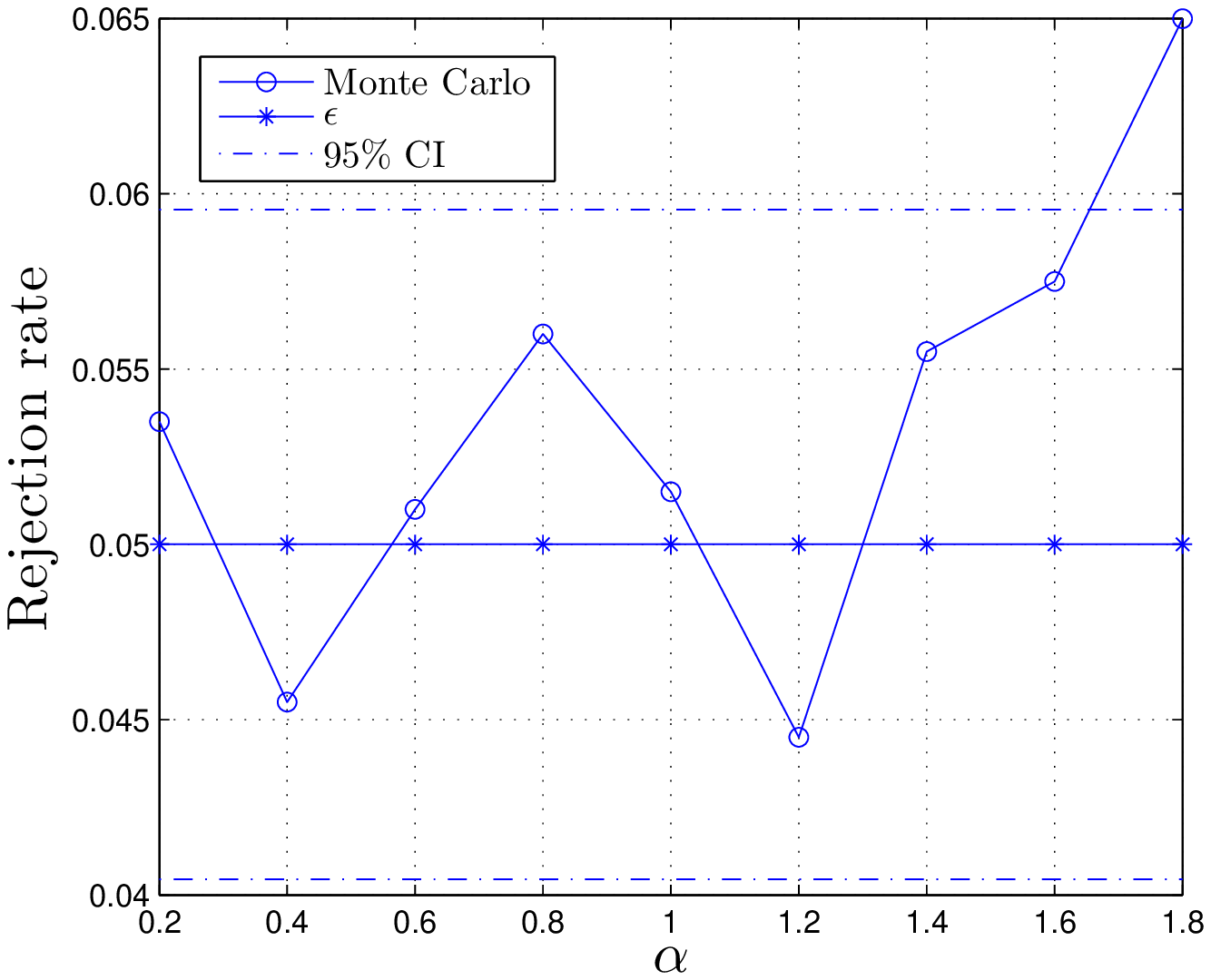}
\includegraphics[scale=0.48]{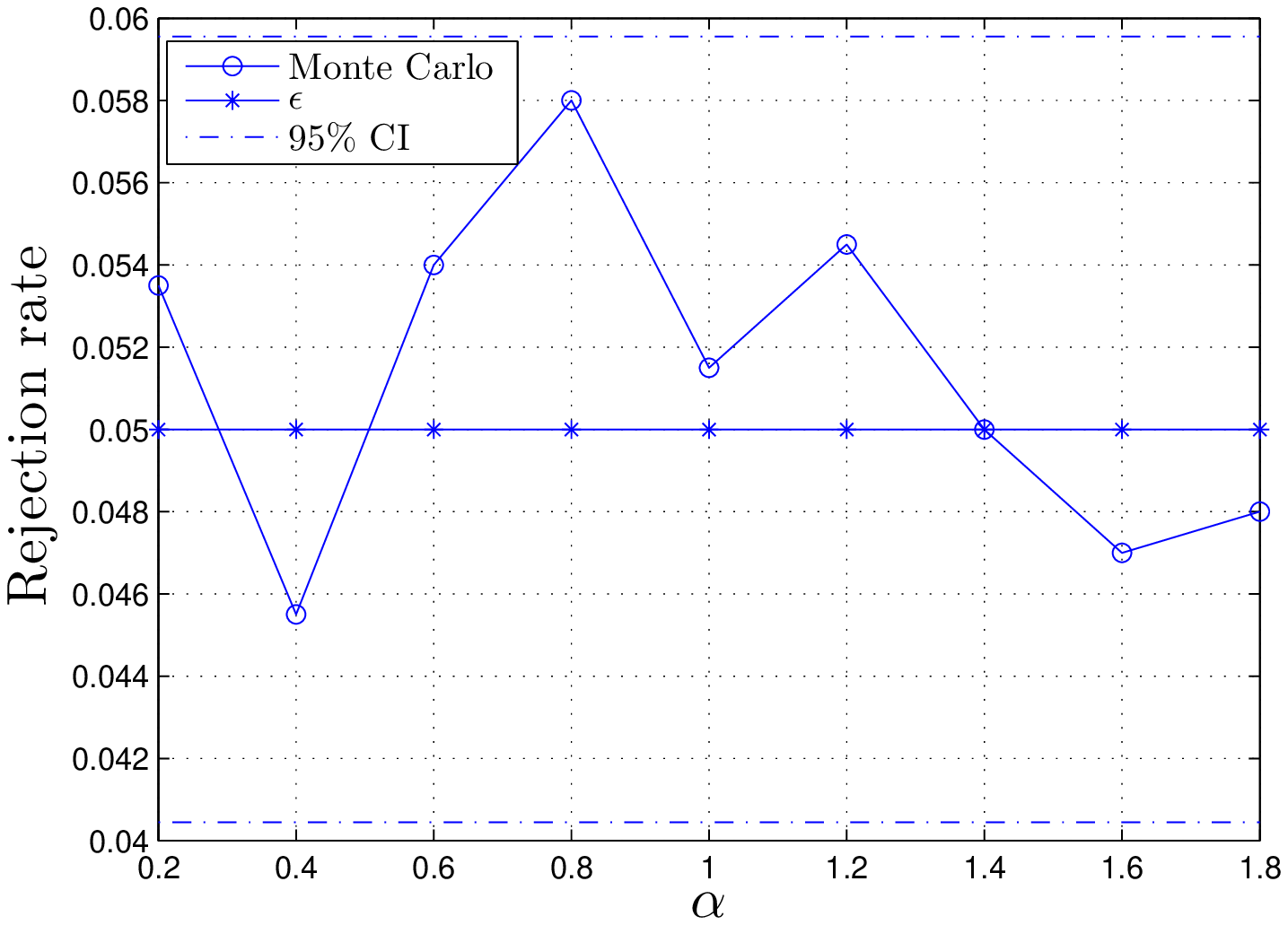}
\caption{\textbf{Intra- and interfluid heterogeneity: test sizes.} Monte Carlo sizes of the intrafluid test (left plot; see \eqref{e:R-intra}) and interfluid test (right plot; see \eqref{e:R-inter}) as a function of the diffusion exponent $\alpha$ ($x$-axis). For every value of $\alpha$, each of 2000 Monte Carlo runs consisted of generating 50 independent paths of length $2^{12}$ and conducting a test at $\epsilon = 0.05$ (see \eqref{e:test_size}) or, equivalently, 95\% confidence level. %The '-o' line is the Monte Carlo simulation result and the '-*' is the theoretical value, respectively. The 95\% confidence intervals for the simulated rejection rate are presented in two '-.' lines.
The Monte Carlo rejection rate is very close to the theoretical value of $\epsilon = 0.05$ for almost all values of $\alpha$.}
\label{fig:intra_inter_test sizes}
\end{center}
\end{figure}

\begin{figure}[htbp]
\begin{center}
\includegraphics[scale=0.45]{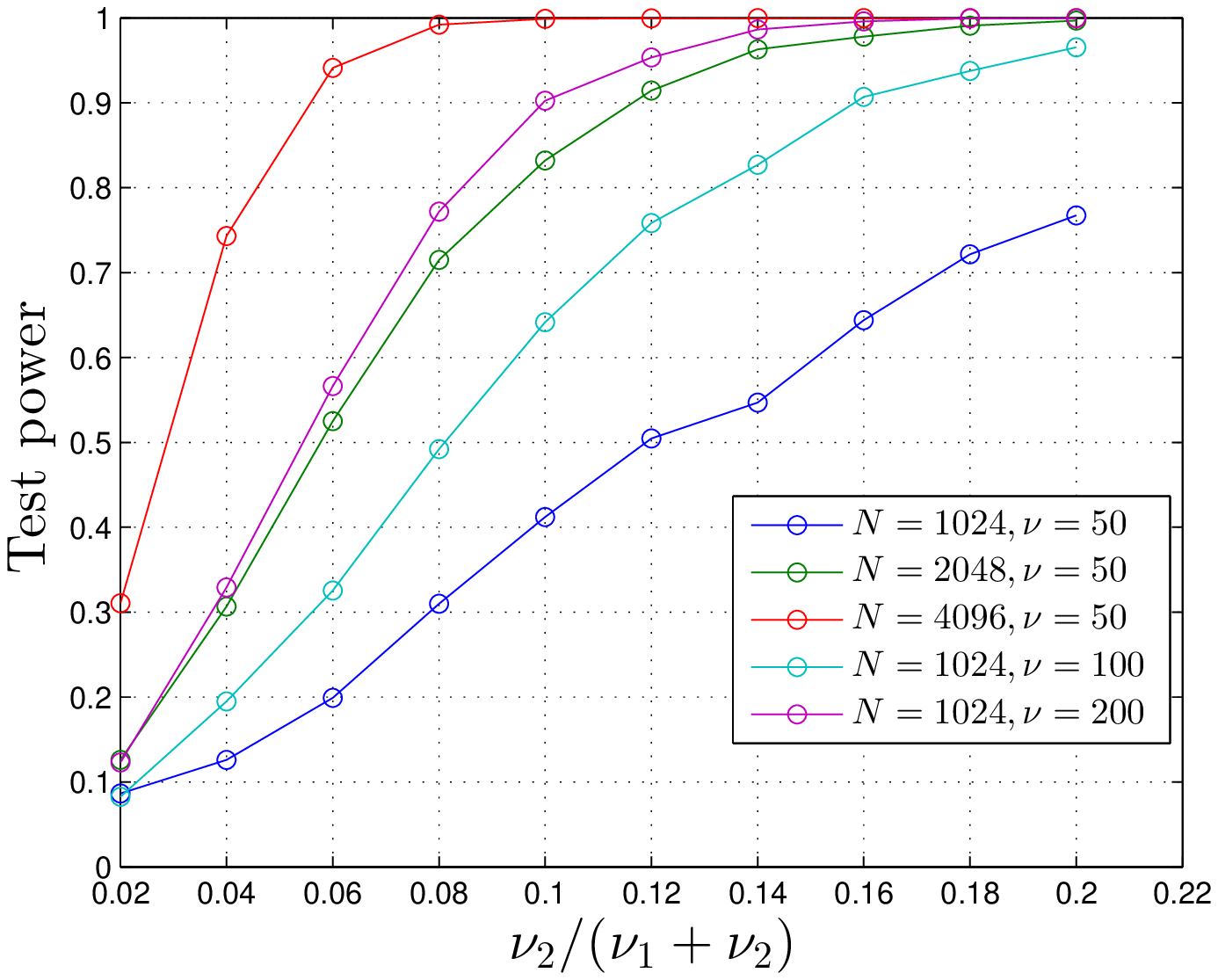}
\includegraphics[scale=0.5]{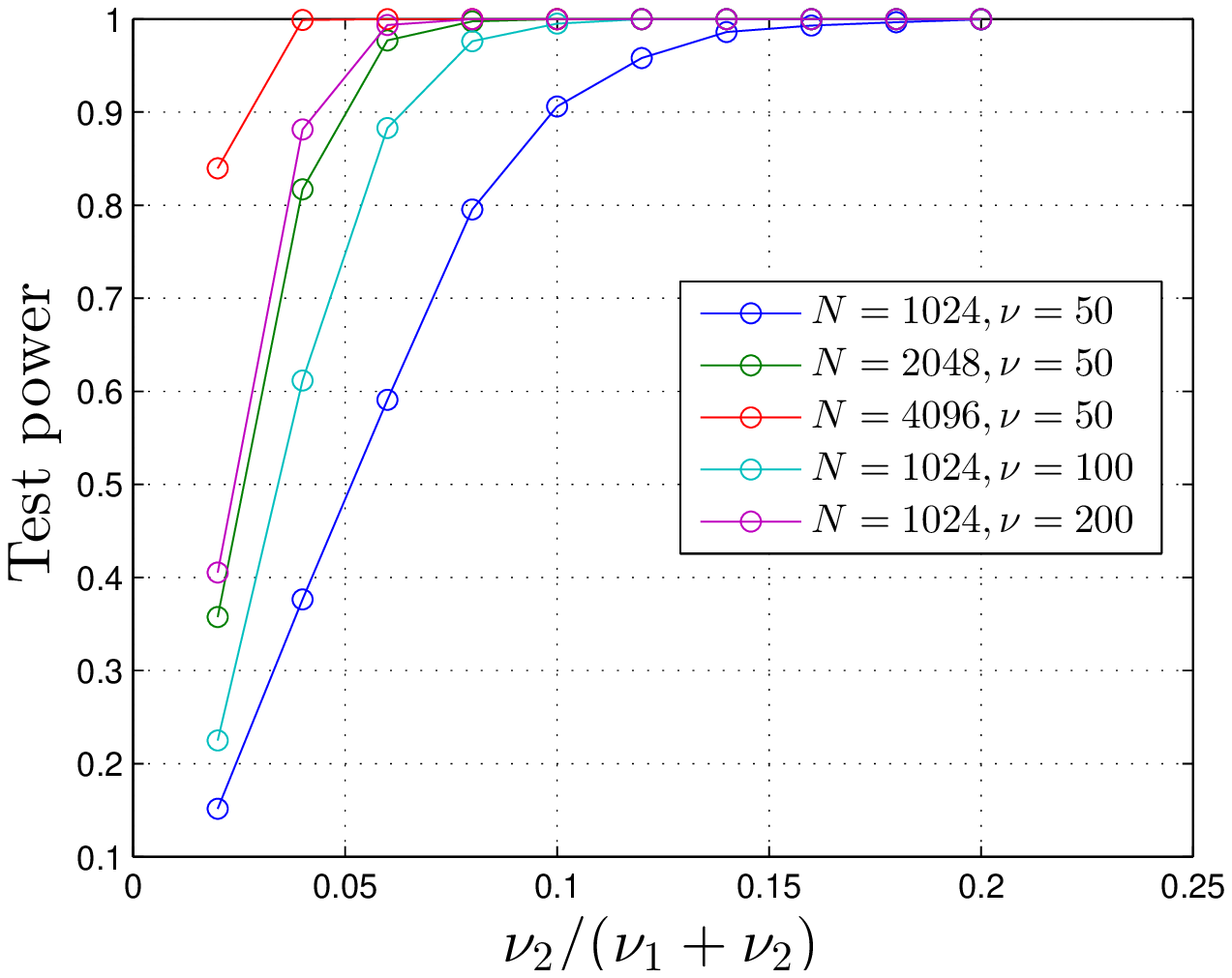}
\caption{\textbf{Intrafluid heterogeneity: test power.} Simulations were conducted with a total of $\nu = \nu_1 + \nu_2$ paths, where $\nu_1$ and $\nu_2$ of them displayed diffusion exponents $\alpha=\alpha_1 = 1$ and $\alpha = \alpha_2$, respectively. In each plot, the $y$-axis represents the observed test power, or rejection rates (see \eqref{e:test_power}), and the $x$-axis is the proportion of paths with $\alpha=\alpha_2$ (see \eqref{e:nu2/nu1+nu2}), starting at 0.02. The total number of Monte Carlo runs is 2000. Left plot: $\alpha_2 = 0.8$. Right plot: $\alpha_2 = 0.7$.}
\label{fig:rej_rate_intra}
\end{center}
\end{figure}

\noindent \textbf{Interfluid heterogeneity}. Now suppose $\nu_{\mathrm{I}}$ and $\nu_{\mathrm{II}}$ paths,
\begin{equation}\label{e:nuI,nuII}
\nu_{\mathrm{I}},\nu_{\mathrm{II}} \in \bbN,
\end{equation}
are obtained from two physically homogeneous fluid samples I and II, respectively. We are interested in testing whether the samples I and II are homogeneous, namely, whether or not particle diffusion in the fluid samples displays the same underlying parameter value $\boldsymbol \xi$. These two possibilities, labeled $H_0$ and $H_a$, respectively, are described on the row ``interfluid" in Table \ref{table:hypotheses}.

Since multiple (independent) particle paths are assumed available for each fluid sample, we can construct an estimator involving all available $\MSD$ terms. In fact, first define the overall average mean squared displacement over $\nu$ $\MSD$ terms (AMSD) by
\begin{equation}\label{e:hetero_emsd}
   M^*_N(\tau)=   \frac{1}{\nu} \sum^{\nu}_{\ell=1} M_N(\tau)_{\ell}.
\end{equation}
By independence,
\begin{equation}\label{e:EMSD_mean_variance}
\langle M^*_N(\tau)\rangle = \langle M_N(\tau)\rangle = \langle X^2(\tau)\rangle, \quad \var \hspace{0.5mm} M^*_N(\tau) = \frac{1}{\nu} \var \hspace{0.5mm}M_N(\tau).
\end{equation}
Then, $\AMSD$-type estimators
\begin{equation}\label{e:hetero_xi_emsd}
  {\boldsymbol E}^* = ( L^*, A^*)
\end{equation}
can be obtained by applying the pseudocode in Appendix \ref{s:pseudocode} after replacing $\MSD$ terms $M_N(\tau_k)$ with their $\AMSD$ counterparts $ M^*_N(\tau_k)$, $k = 1,\hdots,m$. Given two fluid samples I and II, let ${\boldsymbol E}^*_{\mathrm I}$ and ${\boldsymbol E}^*_{\mathrm{II}}$ be their respective $\AMSD$-type estimators. Their finite sample covariance matrices are given by $\nu^{-1}_{\mathrm{I}} \Lambda ({\boldsymbol \xi}_{\mathrm I})$ and $\nu^{-1}_{\mathrm{II}} \Lambda({\boldsymbol \xi}_{\mathrm{II}})$, respectively (cf.\ \eqref{e:GLS_solution_variance}). By analogy with \eqref{e:tildesigmak1k2e1} and \eqref{e:Lambda(Astand,i)}, we can define their $\AMSD$-type estimators
\begin{equation}\label{e:Sigma_xi(alpha_EMSD)}
\frac{1}{\nu_{\mathrm{I}}}\Lambda ( A^*_{\mathrm{I}}), \frac{1}{\nu_{\mathrm{II}}}\Lambda ( A^*_{\mathrm{II}}).
\end{equation}
Figure \ref{fig:var_alphat} displays a study of the accuracy of $\Lambda ( A^*_{\bullet})$ as an estimator. It plots Monte Carlo variances of the estimator ${\boldsymbol E}$ as well as their estimates $\Lambda ( A^*_{\bullet})$ for several values of the parameter $\alpha$. The latter nearly perfectly match the former in the subdiffusive range. A slight deviation appears in the strongly superdiffusive range, but still within an acceptable margin.

\begin{figure}[htbp]
\begin{center}
\includegraphics[scale=0.46]{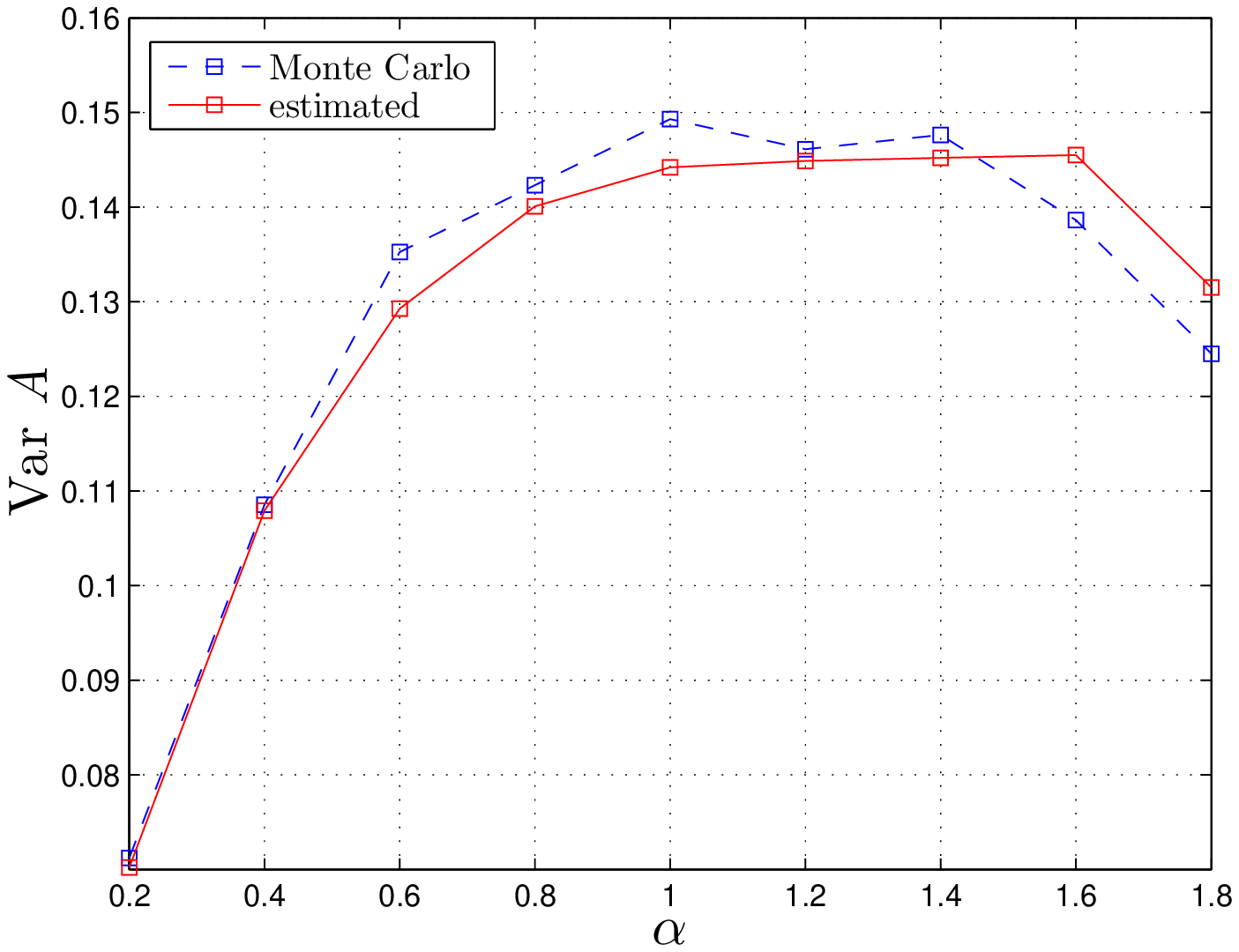}
\includegraphics[scale=0.5]{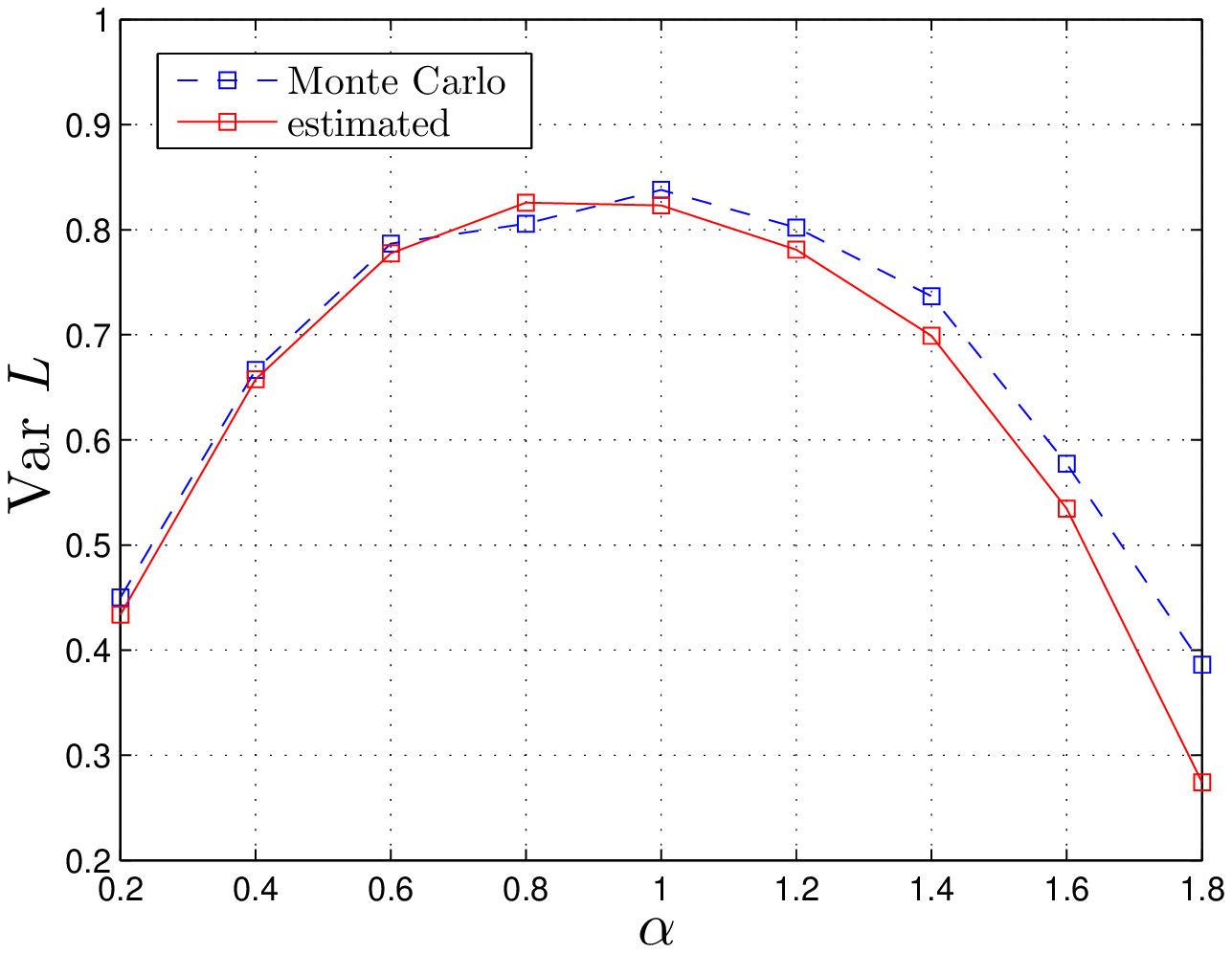}
\caption{\textbf{Comparative analysis} of the standard error of the estimator ${\boldsymbol E}=( L, A)$ (dashed line; see \eqref{e:hetero_hatxi}) and its $\AMSD$-type estimator $\Lambda ( A^*)$ (solid line; see \eqref{e:Sigma_xi(alpha_EMSD)}) as a function of the diffusion exponent $\alpha$. The latter closely matches the former, especially in the subdiffusive range $\alpha<1$. The number of Monte Carlo runs is 1000 based on particle paths of length $2^{10}$.}
\label{fig:var_alphat}
\end{center}
\end{figure}

Hence, we define the standardized estimators
$$
{\boldsymbol Z}^*_{j} = \left(
                                  \begin{array}{c}
                                    Z^*_{j,1} \\
                                    Z^*_{j,2} \\
                                  \end{array}
                                \right)
 = \sqrt{\nu_j}\Lambda^{-1/2}( A^*_{j}) {\boldsymbol E}^*_{j}, \quad j=\mathrm{I,II}.
$$
In view of Proposition \ref{p:asympt_dist_estim}, these estimators are also asymptotically normal for $0 < \alpha < 3/2$. Hence, let
\begin{equation}\label{e:two_fluid_ts}
T_1 = \frac{Z^*_{\mathrm{I},1} - Z^*_{\mathrm{II},1}}{\sqrt{2}},\quad
T_2 = \frac{Z^*_{\mathrm{I},2} - Z^*_{\mathrm{II},2}}{\sqrt{2}}
\end{equation}
be the associated test statistics. The rejection region is given by
\begin{equation}\label{e:R-inter}
R_{\textnormal{inter}}: \abs{T_1} >z_{\epsilon/4} \quad \mathrm{or} \quad \abs{T_2}>z_{\epsilon/4},
\end{equation}
where $z_{\epsilon/4}$ is a standard Normal quantile (c.f.\ Table \ref{table:tests}, row ``interfluid"). In \eqref{e:R-inter}, the probability $\epsilon/4$ stems, first, from applying a Bonferroni-type correction to a double testing region (hence yielding $\epsilon/2$ significance level in each), and second, from the fact that in each region the test statistic distribution is two-sided.

To check the test's size over finite samples, we produced a 2000-run Monte Carlo study based on two sets of 50 paths with the same diffusion exponent, where tests were conducted at significance level $\epsilon = 0.05$. As shown in Figure \ref{fig:intra_inter_test sizes}, right plot, the rejection rate was close to 0.05, as expected.

In Figure \ref{fig:test_power}, we investigate the interfluid test power as a function of the path lengths and number of paths. The $x$-axis represents the difference between the diffusion exponents from two fluids, namely,
\begin{equation}\label{e:delta-alpha}
\delta_{\alpha} = \abs{\alpha_{\mathrm I} - \alpha_{\mathrm{II}}} ,
\end{equation}
whereas the $y$-axis is the test power at $\epsilon = 0.05$. From top to bottom, the three plots correspond to $\alpha_{\min} = \min\{\alpha_{\mathrm I},\alpha_{\mathrm{II}}\} = 0.2, 1.0, 1.8$, respectively, for various combinations of realistic values of $N$ and $\nu = \nu_\textnormal{I} = \nu_\textnormal{II}$. In all cases, the power curves start at around 0.05, as expected, and quickly approach 1 as a function of $\delta_{\alpha}$ as defined in \eqref{e:delta-alpha}. Larger path lengths, larger number of particle paths as well as not very large values of $\alpha_{\min}$ are associated with faster convergence of power curves to 1.\\

\begin{figure}[htbp]
\begin{center}
\includegraphics[scale=0.5]{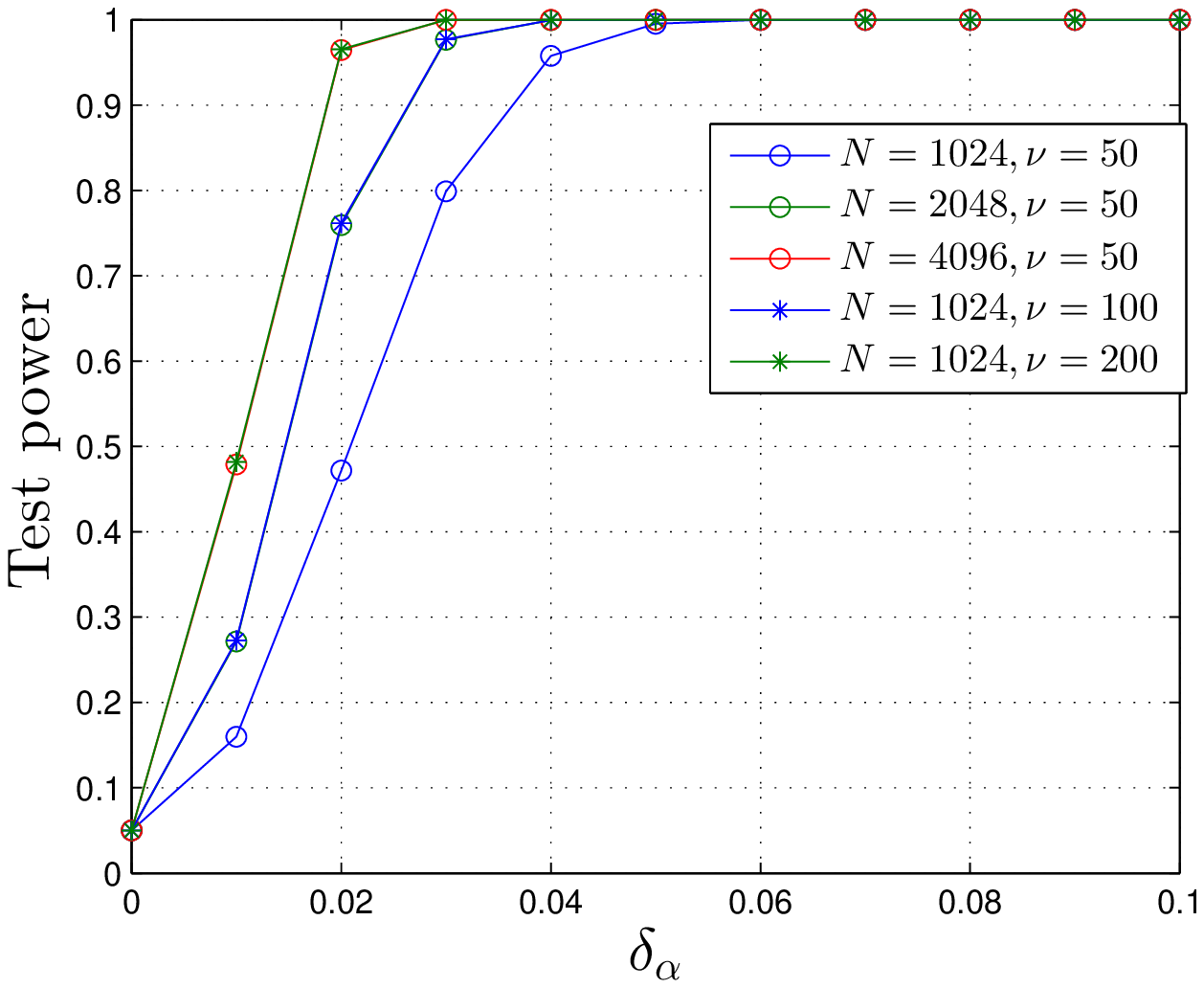}\\
\includegraphics[scale=0.5]{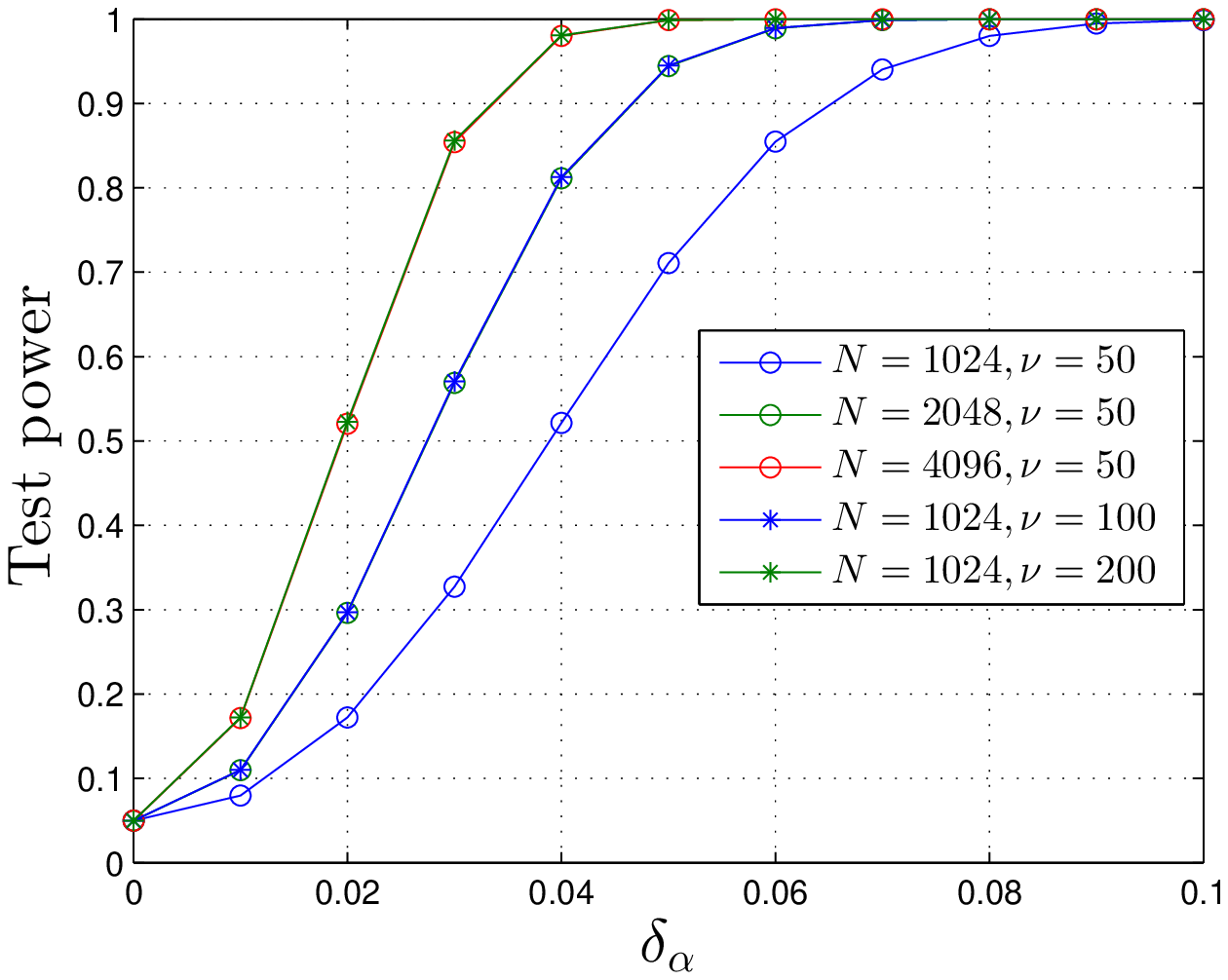}\\
\includegraphics[scale=0.5]{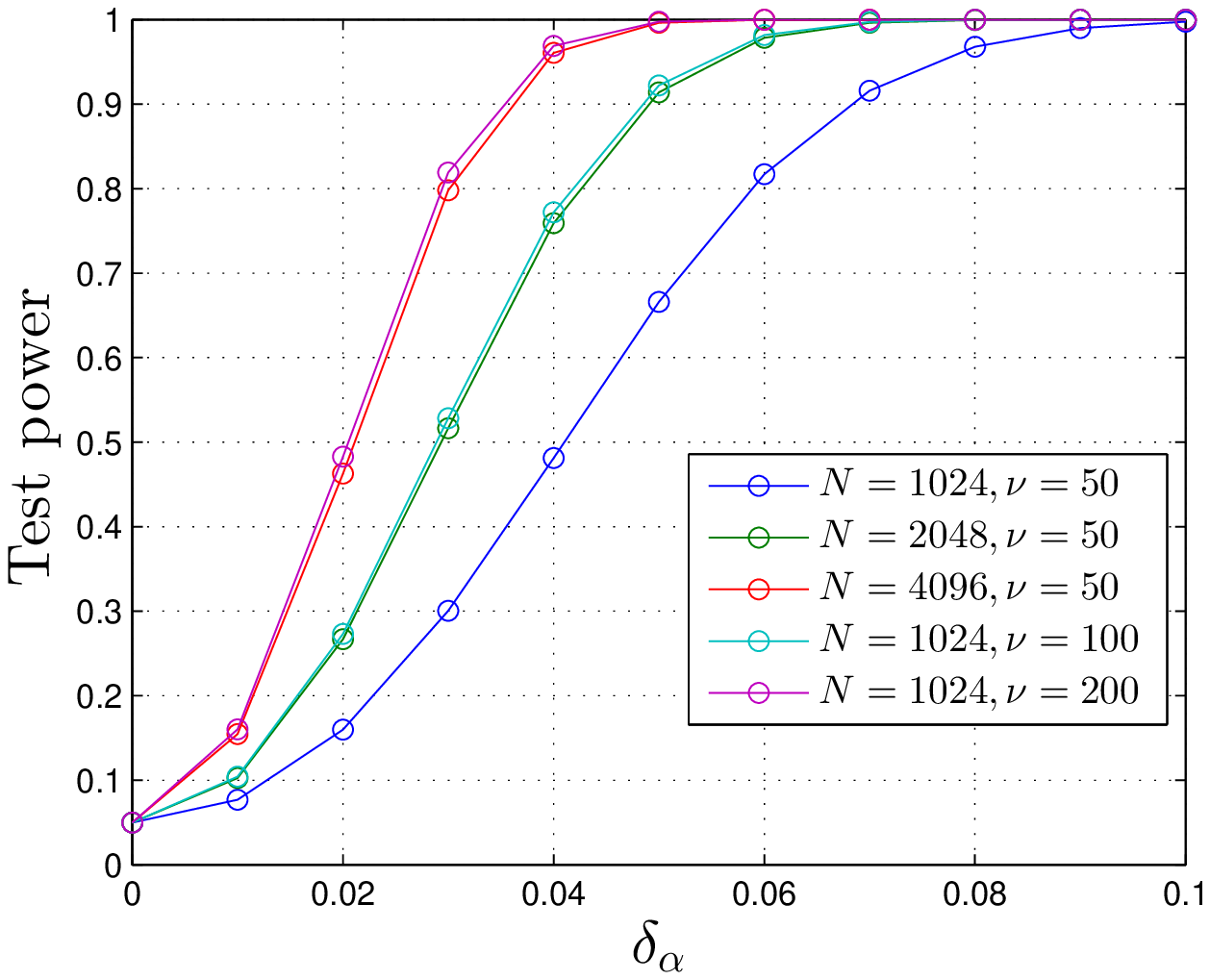}
\caption{\textbf{Interfluid heterogeneity: test power.} For various values of $N$ (see \eqref{e:MSD^}) and $\nu = \nu_\textnormal{I} = \nu_\textnormal{II}$ (see \eqref{e:nuI,nuII}), we plot test power ($y$-axis; see \eqref{e:test_power}) as a function of $\delta_{\alpha}$ ($x$-axis; see \eqref{e:delta-alpha}). The parameter values are given by $\alpha_{\min} := \min\{\alpha_{\mathrm I},\alpha_{\mathrm{II}}\}$ in the range $\alpha_{\min} = 0.2, 1.0, 1.8$ (top, middle and bottom, respectively).}
\label{fig:test_power}
\end{center}
\end{figure}

\noindent \textbf{Interfluid heterogeneity: more or longer paths under technical--experimental constraints?} Figure \ref{fig:test_power} also illustrates the following phenomenon. For the subdiffusive and diffusive cases, there is no visible difference between doubling the path lengths or the number of paths. However, in the strongly superdiffusive range, doubling the number of paths increases the test power more than doubling the path lengths.

In real world lab conditions, conducting single particle experiments involves coping with technical restrictions. For example, there may be limited camera recording time, tracer particles may slip out of the field of view or there may be a limit on the total number of tracer particles per fluid sample while still assuming that particles diffuse independently. So, assuming technical--experimental restrictions are in place, it is relevant to ask: what is the difference between
\begin{itemize}\vspace{-2mm}
\item [] \textit{Method I}: recording the movement of a larger number of particles ($\nu$) over a fixed period of time (hence, keeping constant the average sample path length $N$); and
\item [] \textit{Method II}: recording the same number of particles $\nu$ over a longer period of time (hence, yielding a larger average $N$)?
\vspace{-2mm}\end{itemize}
We answer this question in the framework of interfluid heterogeneity testing.

In the regimes of Methods I and II, we investigate the performance of the $\AMSD$-type estimator \eqref{e:hetero_xi_emsd} in terms of bias, standard deviation and square root MSE. Bearing in mind expression \eqref{e:EMSD_mean_variance}, by a similar reasoning to the one leading to expression \eqref{e:bias_log} for a single observed path, the bias of $\log_2 M^*_N(\tau)$ as an estimator of $\alpha \log \tau + \log \sigma^2$ is given by
\begin{equation}\label{e:ensemble_bias}
O(\tau^{-\delta}) - \bigg\langle  \frac{(M^*_N(\tau) - \langle X^2(\tau)\rangle)^2}{2 \langle X^2(\tau)\rangle^2} \bigg\rangle
 = O(\tau^{-\delta}) - \frac{1}{\nu}\bigg\langle \frac{(M_N(\tau) - \langle X^2(\tau)\rangle )^2}{2\langle X^2(\tau)\rangle^2}\bigg\rangle.
\end{equation}
Moreover, by the independence of particle paths, we can approximate the variance of $\log_2 M^*_N(\tau)$ by
\begin{equation}\label{e:ensemble_variance}
\Var \Big( \log M^*_N(\tau) \Big) \approx \frac{\var \hspace{1mm} M_N(\tau)}{\nu\langle X^2(\tau)\rangle^4}.
\end{equation}
The performance of the estimator \eqref{e:hetero_xi_emsd} in the two regimes depends on the interplay between the bias and variance components \eqref{e:ensemble_bias} and \eqref{e:ensemble_variance}, respectively. In a computational experiment, we applied the following procedure.
\begin{itemize}
  \item [1.] Start out in the same setting: $2^4$ paths of length $2^8$ for each method, run 500 Monte Carlo simulations to get the bias, standard deviation and square root MSE of $ A$ for Method I and II;
  \item [2.] for Method I, fix the path length and at each step generate $2^4$ times the previous number of paths and redo the Monte Carlos experiments;
  \item [3.] for Method II, fix the number of paths and at each step generate paths of length $2^4$ times the previous length, multiply all lags by 2 and redo the Monte Carlos experiments;
  \item [4.] repeat 2.\ and 3.\ three times.
\end{itemize}
For ease of comparison, Table \ref{t:MethodI_vs_MethodII} displays the multiple instances generated. Note that, at each step, the total number of points recorded
\begin{equation}\label{e:nu*N}
\nu \times N
\end{equation}
is identical for the two methods.

\begin{table}[!hbp]
\begin{center}
\begin{tabular}{ccccccc}\toprule
 &\multicolumn{3}{c}{Method I} & \multicolumn{3}{c}{Method II}\\
step & $N$ & $\nu$ &$ \nu \times N$ & $N$ & $\nu$ &$ \nu \times N$\\
\hline
1 & $2^8$ & $2^4$ & $2^{12}$ & $2^8$ &$2^4$ & $2^{12}$ \\
2 & $2^8$ & $2^8$ & $2^{16}$ & $2^{12}$ &$2^4$ & $2^{16}$ \\
3 & $2^8$ & $2^{12}$ & $2^{20}$ & $2^{16}$ &$2^4$ & $2^{20}$ \\
4 & $2^8$ & $2^{16}$ & $2^{24}$ & $2^{20}$ &$2^4$ & $2^{24}$ \\
\bottomrule
\end{tabular}
\caption{ Methods I and II. }
\label{t:MethodI_vs_MethodII}
\end{center}
\end{table}
%\begin{itemize}
%  \item [1.]  $2^4$ (paths, or $\nu$) $\times 2^8$ (nbp, or $N$) $= 2^{12}$ for both methods
%  \item [1.]  $2^8 \times 2^8 = 2^{16}$ for method I, $2^4 \times 2^{12} = 2^{16}$ for methods II (meanwhile multiply all lags by 2)
%  \item [1.]  $2^{12} \times 2^8 = 2^{20}$ for method I, $2^4 \times 2^{16} = 2^{20}$ for methods II (meanwhile multiply all lags by 4)
%  \item [1.]  $2^{16} \times 2^8 = 2^{24}$ for method I, $2^4 \times 2^{20} = 2^{24}$ for methods II (meanwhile multiply all lags by 8)
%\end{itemize}
%In step 1, both methods make use of a total of $16\times 256 = 4096$ observations. Then, in step 2, $4\times 16 = 64$ paths are generated under Method I, which yields a total of $64 \times 256 = 16384$ observations. In step 3, under Method II, 16 paths of length $4\times 256 = 1024$ are generated. Therefore, Method II also draws upon $16\times 1024 = 16384$ observations.

We compare the results in Figure \ref{fig:ensemble}, top and middle plots, where the diffusion exponent is set to $\alpha = 0.6$ and 1.0, respectively. Method II has smaller bias and square root MSE. The reason is that, when $\nu$ is large enough, the term $O(\tau^{-\delta})$ dominates the bias. Thus, increasing the number of paths $\nu$ does not reduce the bias. However, increasing the path length $N$ means that the $\MSD$ terms $M_N(\tau)$ with larger lag values $\tau$ can be used in the regression procedure. This implies a reduction in magnitude of the term $O(\tau^{-\delta})$, and hence, smaller bias. Method I displays smaller standard deviation because a 16-fold increase in $\nu$ reduces the standard error by a factor of $1/4$. Meanwhile, noting that $\langle X^2(\tau)\rangle \sim \sigma^2 \tau^{\alpha}$, expression \eqref{e:eta_zeta} implies that for Method II the standard deviation is proportional to $\sqrt{\tau/N}$. By multiplying $N$ by 16 and $\tau$ by 2, the standard error is reduced by a factor of $1/2\sqrt{2}$.

In Figure \ref{fig:ensemble}, bottom plot, we set $\alpha = 1.8$. For this parameter value, the convergence rate of the $\MSD$-based estimators is slower than $\frac{1}{\sqrt{N}}$. Method II still shows a smaller bias by comparison to Method I, as expected. However, since $\delta$ increases as a function of $\alpha$ (see expression \eqref{e:delta}), then $O(\tau^{-\delta})$ shrinks with $\alpha$. Therefore, the component $O(\tau^{-\delta})$ carries less weight in the estimator's bias for the superdiffusive case than for the subdiffusive case. Since $\alpha > 3/2$, again by expression \eqref{e:eta_zeta} (see also Remark \ref{r:on_3/2=<alpha<2}) the standard deviation for Method II is proportional to $(\tau/N)^{2-\alpha} = (\tau/N)^{0.2}$. Thus, again assuming a 16-fold increase in $N$ and a 2-fold increase in $\tau$, the standard error is reduced by a factor of $1/8^{0.2}$, which is much slower than the standard error reduction factor of $1/2\sqrt{2}$ for Method I. These are the two reasons why Method I displays smaller square root MSE than Method II.

\begin{figure}[htbp]
\begin{center}
\includegraphics[scale=0.5]{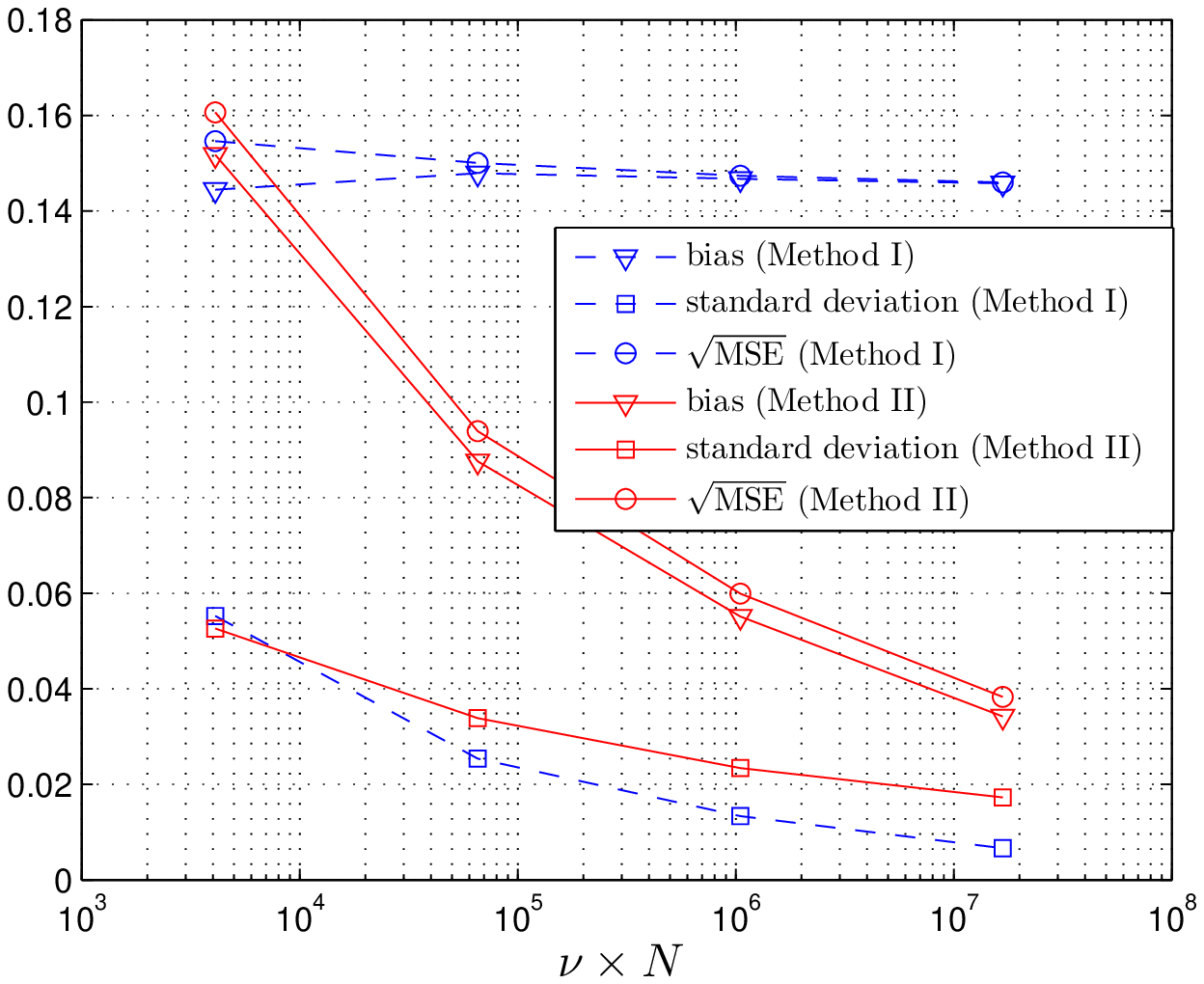}\\
\includegraphics[scale=0.5]{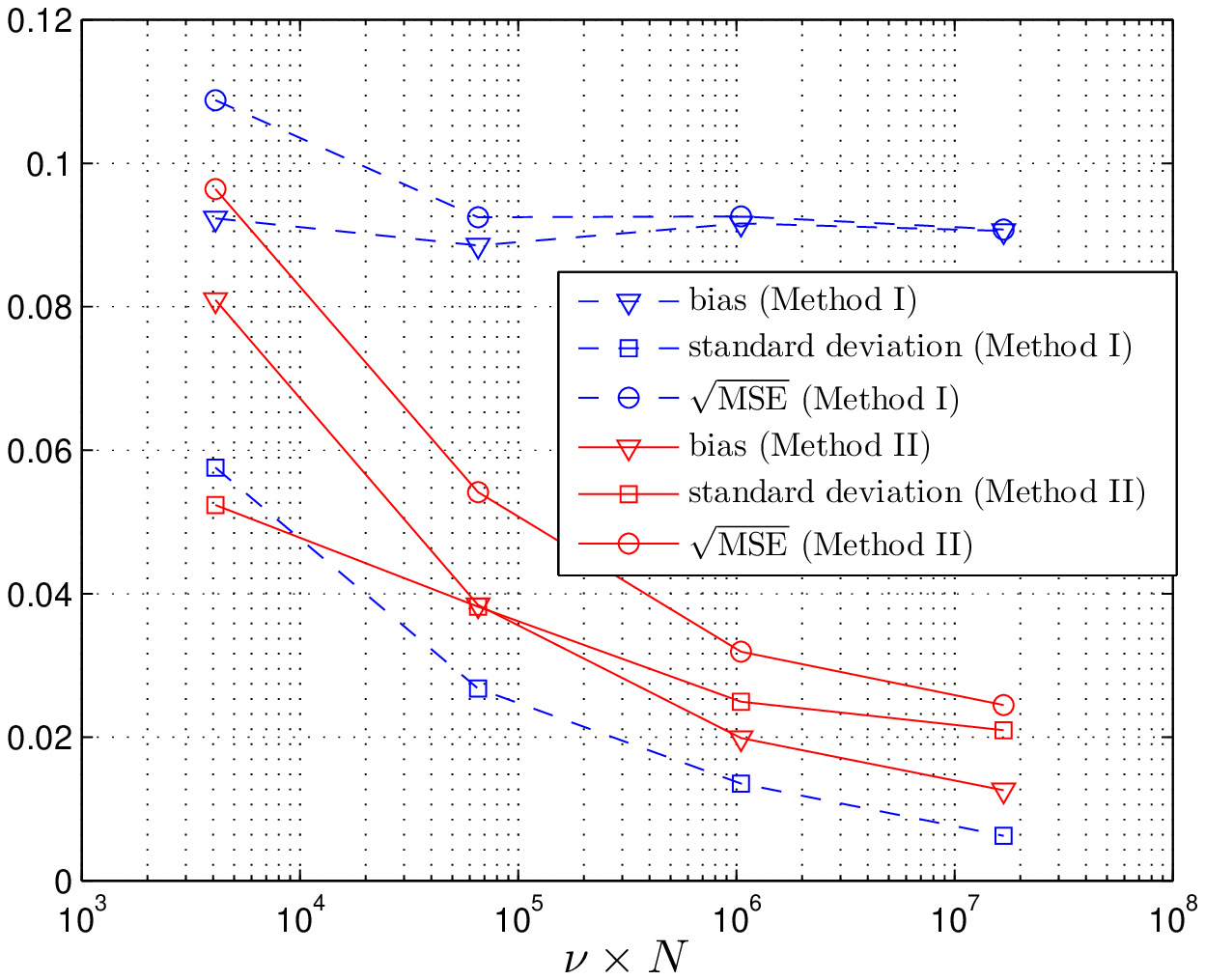}\\
\includegraphics[scale=0.5]{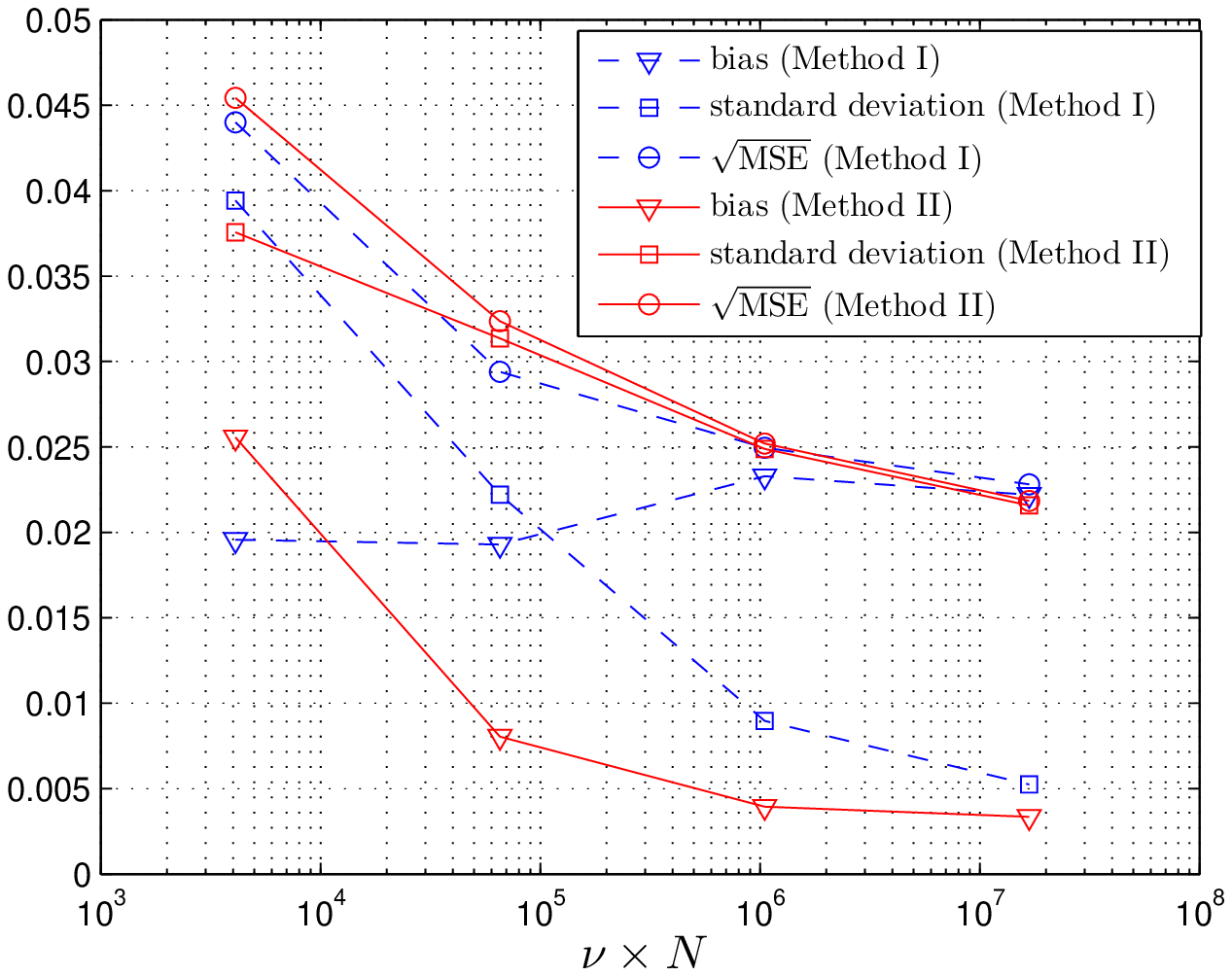}
\caption{\textbf{Bias, standard deviation, and square root MSE for Methods I (blue) and II (red).} The $x$-axis denotes the total number of recorded data points (see \eqref{e:nu*N}). Top: $\alpha = 0.6$. Middle: $\alpha = 1.0$. Bottom: $\alpha = 1.8$.}
\label{fig:ensemble}
\end{center}
\end{figure}

\section{Analysis of experimental data: heterogeneity of treated \emph{P.\ aeruginosa} biofilms}\label{s:experiments}

The Hill and Schoenfisch Labs at UNC-Chapel Hill produced data from experiments on disruption and eradication of \emph{P.\ aeruginosa} biofilms using nitric oxide-releasing chitosan oligosaccharides \cite{reighard:hill:dixon:worley:schoenfisch:2015}. For the reader's convenience, we provide a brief description of the experiments.

Cystic fibrosis (CF) lung disease is caused by defective chloride transport, resulting in thickened, dehydrated mucus. The latter restricts bacterial motility and promotes \emph{P.\ aeruginosa} biofilm formation. Inhaled tobramycin is currently the only antibiotic recommended for the treatment of both initial and chronic \emph{P.\ aeruginosa} infections in patients with CF. While inhaled tobramycin is effective at eradicating bacteria within biofilms, it fails to physically remove the structural remnants of the biofilm from the airways. This may lead to biofilm regrowth and the development of antibiotic-resistant infections. Therefore, an ideal anti-biofilm therapeutic for CF would both eradicate bacteria and physically degrade the biofilm, facilitating clearance from the airways.

On the other hand, nitric oxide (NO) is an endogenously produced diatomic free radical with significant antibacterial activity against \emph{P.\ aeruginosa} biofilms. Atomic force microscopy revealed that NO exposure causes structural damage to the membranes of planktonic Gram-negative bacteria, including \emph{P.\ aeruginosa}. The interest lies in the utility of NO-releasing chitosan oligosaccharides to both eradicate and physically alter \emph{P.\ aeruginosa} biofilms and in comparing its effect with tobramycin. In order to measure the physical changes to bacterial biofilms caused by NO-releasing chitosan oligosaccharides, movements of fluorescent tracer particles embedded in \textit{P.\ aeruginosa} biofilms were imaged at 60 frames per second for 30 seconds on an inverted microscope at $40 \times$ magnification.  The tracer particle displacement as a function of time was quantified using Video Spot Tracker software (Center for Computer Integrated Systems for Microscopy and Manipulation, University of North Carolina at Chapel Hill); see \cite{reighard:hill:dixon:worley:schoenfisch:2015} for more details.

Fluid heterogeneity has been correlated with increased viscoelasticity for complex biological materials such as sputum \cite{dawson:wirtz:hanes:2003}. In the experiments we describe, the effect of antibacterial treatment on biofilm heterogeneity was thus evaluated at different concentrations based on tracer particle displacement data. In Table \ref{tab:tobramycin}, we use the data to test the intrafluid heterogeneity of \emph{P.\ aeruginosa} biofilms after tobramycin treatments at concentrations levels 25, 50, 100, 200, and 400 $\mu$g ml${}^{-1}$. From each of these fluid samples, we randomly select 100 paths of length $N=1800$. An application of the intrafluid test \eqref{e:R-intra} produces strong evidence (negligible $p$-values) of intrafluid heterogeneity in every sample. This conclusion matches those reported in \cite{reighard:hill:dixon:worley:schoenfisch:2015}. Since no homogeneous fluid samples are detected from any of these five samples, we do not perform the interfluid heterogeneity test \eqref{e:R-inter}.

In Table \ref{tab:COS2NO}, we apply \eqref{e:R-intra} in the testing of intrafluid heterogeneity of \emph{P.\ aeruginosa} biofilms after COS2-NO treatment at concentration levels 1, 2, 4, 8, and 16 mg ml${}^{-1}$. COS2 releases NO, which cause the physical disruption and eradication of biofilms \cite{reighard:hill:dixon:worley:schoenfisch:2015}, and also reduces the viscoelastic properties of mucus \cite{reighard:ehre:rushton:ahonen:hill:schoenfisch:2017}. As before, 100 paths of length 1800 were randomly selected for each concentration level. At concentrations 1 or 2 mg ml${}^{-1}$, the $p$-values are still less than machine error, which indicates strongly significant heterogeneity. As the concentration level increases to 4 and 8 mg ml${}^{-1}$, the $p$-values also increase. At concentration level 16 mg ml${}^{-1}$, the $p$-value reaches 0.18. Hence, we fail to reject the null hypothesis of intrafluid homogeneity. This provides evidence that the COS2-NO treatment is effective at eradicating \emph{P.\ aeruginosa} biofilms. Once again, this analysis confirms the conclusions reported in \cite{reighard:hill:dixon:worley:schoenfisch:2015}. In Table \ref{tab:COS2NO_inter}, by applying \eqref{e:R-inter}, we test the interfluid heterogeneity of \emph{P.\ aeruginosa} biofilms after COS2-NO treatment at concentration level 16 mg ml${}^{-1}$. From each fluid sample (A, B and C), we selected 100 paths of length 1800 and conducted the test. It turns out that there is no evidence whatsoever of heterogeneity among fluid samples A, B and C.

\begin{table}
  \centering
  \begin{tabular}{cccc}
  \hline
  % after \\: \hline or \cline{col1-col2} \cline{col3-col4} \hdots
  Tobramycin ($\mu$g ml${}^{-1}$) &  $p$-value \\
  \hline
25  & $< 10^{-16}$\\
50  & $< 10^{-16}$\\
100 & $< 10^{-16}$\\
200 & $< 10^{-16}$\\
400 & $< 10^{-16}$\\
  \hline
    \end{tabular}
  \caption{Intrafluid biofilm heterogeneity testing after treatment with tobramycin at concentration levels 25, 50, 100, 200, 400 $\mu$g ml${}^{-1}$. 100 independent paths of length 1800 were randomly selected for each concentration level.}\label{tab:tobramycin}
\end{table}

\begin{table}
  \centering
  \begin{tabular}{cccc}
  \hline
  % after \\: \hline or \cline{col1-col2} \cline{col3-col4} \hdots
  COS2-NO (mg ml${}^{-1}$) & $p$-value \\
  \hline
1  & $< 10^{-16}$ \\
2 &  $< 10^{-16}$ \\
4 & $2 \times 10^{-13}$\\
8 & $3 \times 10^{-9}$\\
16 & 0.18\\
  \hline
    \end{tabular}
  \caption{Intrafluid biofilm heterogeneity testing after treatment with COS2-NO at concentration levels 1, 2, 4, 8, 16 mg ml${}^{-1}$. 100 independent paths of length 1800 were randomly selected for each concentration level.}\label{tab:COS2NO}
\end{table}

\begin{table}
  \centering
  \begin{tabular}{ccc}
  \hline
  % after \\: \hline or \cline{col1-col2} \cline{col3-col4} \hdots
  COS2-NO 16 mg ml${}^{-1}$    & $p$-value \\
  \hline
Group A vs Group B  & 0.9996 \\
Group A vs Group C  & 0.9998 \\
Group B vs Group C  & 0.9999 \\
  \hline
    \end{tabular}
  \caption{Interfluid biofilm heterogeneity testing after treatment with COS2-NO at concentration level 16 mg ml${}^{-1}$. 100 independent paths of length 1800 were randomly selected for each group.}\label{tab:COS2NO_inter}
\end{table}

\section{Conclusion}\label{s:conclusion}

%In this paper, we start from the asymptotic distribution of the TAMSD of nanometric tracer particles \cite{didier:zhang:2017} to construct statistical protocols for detecting physical fluid heterogeneity. The assumptions on particle motion cover a broad family of fractional Gaussian processes, including fractional Brownian motion and many instances of the generalized Langevin equation framework. The testing protocols allowed providing more accurate quantitative analysis of experimental data from the Hill and Schoenfisch Labs (UNC-Chapel Hill), and the results reported in \cite{reighard:hill:dixon:worley:schoenfisch:2015} were generally confirmed.

%Nevertheless, until recently, the limiting distribution of the $\MSD$ was unknown. This difficulty was overcome in \cite{didier:zhang:2017} for a broad class of Gaussian fractional stochastic processes. It was shown that the convergence in distribution of the pathwise $\MSD$ occurs at different rates, and that the limiting distribution may be Gaussian or non-Gaussian, all depending on the value of the diffusion exponent $\alpha$.

Motivated by applications in viscoelastic diffusion, in this paper we start from the $\MSD$'s asymptotic distribution for a broad class of Gaussian fractional stochastic processes \cite{didier:zhang:2017} to propose statistical protocols that make use of single-particle tracking data in the detection of fluid heterogeneity.

The testing methodology is based on an improved $\MSD$-type estimator. To construct this estimator, we tackle two of the main issues involved in $\MSD$-based estimation, namely, we mathematically characterize: \textbf{(a)} the finite-sample bias in log-$\MSD$-based methods; and \textbf{(b)} the effect of disturbance correlation. The theoretical results on \textbf{(a)} and \textbf{(b)} allow us to propose a nearly optimal estimator by combining a bias-correction procedure and a generalized least squares-type regression solution. The improved $\MSD$-based estimator \eqref{e:hetero_hatxi} is asymptotically normal for $0 < \alpha < 3/2$, and computational experiments show that the new estimator outperforms the standard $\MSD$-based estimator both in terms of bias and square root MSE for values of $\alpha$ over the whole parameter range $(0,2)$.

The estimator \eqref{e:hetero_hatxi} is used in the construction of protocols for fluid heterogeneity detection in two different experimental situations, namely, when testing: $(i)$ whether different regions of the same fluid are heterogeneous (\textit{intrafluid} heterogeneity); or $(ii)$ whether two samples from each homogenous fluid are heterogeneous (\textit{interfluid} heterogeneity). Reflecting the asymptotic behavior of \eqref{e:hetero_hatxi}, for $0 < \alpha < 3/2$ the test statistics \eqref{e:S-xihat} and \eqref{e:two_fluid_ts} for intra- and interfluid heterogeneity are asymptotically chi-square and asymptotically normally distributed, respectively. This ensures the tests and associated quantiles are asymptotically valid. Computational experiments confirm that the tests' significance levels are accurate over finite samples, and that the tests display high power even for relatively small deviations from the null hypotheses.

In all cases, for the sake of completeness, we discuss and provide computational studies on the strong superdiffusivity range $3/2 \leq \alpha < 2$. Although this may not affect physical areas of application where subdiffusion is prevalent, this research points to one difficulty involved in $\MSD$-based modeling, namely, a potentially non-Gaussian (Rosenblatt-type) limiting distribution with an intricate cumulant structure (cf.\ expression \eqref{e:3/4<H<1_multivariate_Rosen_limit}).

The constructed framework helps to shed light on the effect of common technical experimental constraints such as limited camera recording time: we characterize the difference between observing longer particle paths and using a larger number of particle paths of given length.

We apply the protocols in physical practice by making inferences on fluid viscoelasticity with data from the Hill and Schoenfisch Labs (UNC-Chapel Hill), as first reported and described in \cite{reighard:hill:dixon:worley:schoenfisch:2015}. The testing protocols reveal that COS2-NO treatment is effective in eradicating \textit{P.\ aeruginosa} biofilms, since greater concentration levels of the treatment clearly lead to greater fluid homogeneity as detected by tracer particle displacement data.

The research contained in this paper points to a number of interesting questions. From a modeling standpoint, it would be convenient to construct a heterogeneity testing framework for single particle experiments that, unlike $\MSD$-based methods, mathematically covers the full anomalous diffusion parametric range $0 < \alpha < 2$ under the same limit parametric family of distributions. Moreover, ideally a heterogeneity testing framework should be robust with respect to nuisance trends and added experimental noise. In another research direction, the same questions can be asked for classes of anomalous diffusion models not covered by the results in \cite{didier:zhang:2017} such as, for example, continuous time random walks and related stochastic processes.

%The proposed methods are asymptotically valid for $0 < \alpha < 3/2$, while computational experiments show that they also work reasonably well for $3/2 \leq \alpha < 2$ for realistic path lengths. The testing protocols allowed investigating and providing more accurate quantitative analysis of experimental data from the Hill and Schoenfisch Labs (UNC-Chapel Hill), and the results reported in \cite{reighard:hill:dixon:worley:schoenfisch:2015} were generally confirmed.

%
%(7) The conclusions section is probably the weakest part of this manuscript. Here, the authors should clearly list their main results and conclusions, as well as the methods how they were achieved. After this, the novelty of the current approach should be explained in detail, as compared to other (possible similar?) algorithms for SPT data analysis available in the literature. Lastly, some weaknesses of the algorithm and its applicability regimes (definitely, those exist) should be described and some perspectives for future research should be outlined.

\appendix

\section{The asymptotic distribution of the $\MSD$}\label{s:asympt_dist_MSD}

In Theorem \ref{t:asympt_dist_MSD}, we provide the asymptotic distribution of the $\MSD$ random vector after centering, which in Corollary \ref{c:asympt_dist_MSD} allows developing the asymptotic distribution of the OLS estimator \eqref{e:xi_OLS} of the diffusivity coefficient and diffusion exponent.

\begin{theorem}\label{t:asympt_dist_MSD}(Didier and Zhang \cite{didier:zhang:2017}, Theorem 1)
Suppose the particle motion is a Gaussian, stationary increment process whose covariance function admits the harmonizable (Fourier domain) representation
\begin{equation}\label{x_spec_rep}
\langle X(s) X(t)\rangle  =C^2_{\alpha} \int_{\bbR} \frac{(e^{isx}-1)(e^{-itx}-1)}{x^2}  f(x) dx, \quad \alpha\in (0,2), \quad C_{\alpha} \neq 0.
\end{equation}
In \eqref{x_spec_rep}, the spectral density has the generic form $f(x) = \frac{s(x)}{|x|^{\alpha/2-1/2}}$, where the $\bbC$-valued high frequency function $s(x)$ is bounded and satisfies the relations
\begin{equation}\label{e:s(x)}
\abs{s(0)}^2 = 1, \quad |\abs{s(x)}^2-1|\leq C_0 |x|^{\delta_0},\quad x\in (-\varepsilon_0,\varepsilon_0),
\end{equation}
for constants $C_0,\delta_0,\varepsilon_0>0$. Suppose the growth of the lag value term $\tau=\tau(N) \in \bbN $ as in \eqref{e:hk} with respect to the sample size $N$ is given by
\begin{equation}\label{e:h(n)}
\frac{\tau(N) \log^2(N)}{N}+\frac{N}{\tau(N)^{1+\delta/2}}\rightarrow 0,\quad N\rightarrow \infty,
\end{equation}
where
\begin{equation}\label{e:delta}
\delta=\min\{\alpha/2,\delta_0/2\}.
\end{equation}
Also consider the rates of convergence
\begin{equation}\label{e:eta_zeta}
\left\{\begin{array}{ccc}
0 < \alpha < 3/2: & \eta(N) = \sqrt{N},\, \zeta(\tau)=\tau^{\alpha+1/2} ;\\
\alpha = 3/2: & \eta(N) = \sqrt{N\log(N)},\, \zeta(\tau)=\tau^{2};\\
3/2 < \alpha < 2: & \eta(N) = N^{\alpha-1},\, \zeta(\tau)=\tau^{2}.
\end{array}\right.
\end{equation}
Then, as $N\rightarrow\infty$,
\begin{equation}\label{e:MSD_asymptotic_dist}
\bigg( \frac{N_k}{\eta(N_k)\zeta(\tau_k)}(M_N(\tau_k)- \langle X^2(\tau_k)\rangle )\bigg)_{k=1,\hdots,m}\stackrel{d}{\rightarrow} {\mathbf{Z}},
\end{equation}
where, for $k=1,\hdots,m$, $N_k = N - \tau_k$ is the number of available terms in each $\MSD$ sum \eqref{e:MSD^} and $\tau_k$ is given by \eqref{e:hk}. In \eqref{e:MSD_asymptotic_dist}, the distribution of the random vector ${\mathbf{Z}}$ can be described as follows.
\begin{itemize}
\item [($i$)] If $0 < \alpha < 3/2$, then ${\mathbf{Z}} \sim {\mathcal N}(0,\Sigma)$ (a $m$-variate Gaussian distribution), where the entry $k_1,k_2$ of the matrix $\Sigma = \Sigma(\alpha)$ is given by
\begin{equation}\label{e:Sigma_0<H<3/4}
\Sigma_{k_1, k_2} = 2 w_{k_1}^{-\alpha-1/2}w_{k_2}^{-\alpha-1/2} \Big(\frac{C_\alpha}{C_H}\Big)^4 \norm{\widehat{G}(y;w_{k_1},w_{k_2})}^2_{L^2(\bbR)},
\end{equation}
$k_1,k_2 = 1,\hdots,m$. In \eqref{e:Sigma_0<H<3/4}, we define
$$
C_H = \sqrt{\pi^{-1} H \Gamma(2H)\sin(H\pi)}
$$
and
\begin{equation}\label{e:G_hat}
\widehat{G}(y;w_{k_1},w_{k_2}) = C^2_H \frac{(e^{iw_{k_1} y}-1)(e^{-iw_{k_2}y}-1)}{\abs{y}^{\alpha+1}};
\end{equation}
\item [($ii$)] if $\alpha = 3/2$, then ${\mathbf{Z}} \sim {\mathcal N}(0,\Sigma)$, where the entry $k_1,k_2$ of the matrix $\Sigma= \Sigma(\alpha)$ is given by
\begin{equation}\label{e:Sigma_H=3/4}
\Sigma_{k_1,k_2}=4\vartheta^2, \quad k_1,k_2 = 1,\hdots,m,
\end{equation}
and
$$
\vartheta = \bigg(\frac{C_{\alpha}}{C_H}\bigg)^2 \frac{\alpha(\alpha-1)}{2};
$$
\item [($iii$)] if $3/2 < \alpha < 2$, $\mathbf{Z}$ follows a multivariate Rosenblatt-type distribution whose characteristic function is given by
\begin{equation}\label{e:3/4<H<1_multivariate_Rosen_limit}
    \phi_{\mathbf{Z}}({\mathbf{t}})=
        \exp\bigg\{\frac{1}{2}\sum_{s=2}^{\infty} \frac{[ 2i \vartheta \hspace{0.5mm}\sum^{m}_{k=1}t_k ]^s}{s}\hspace{1mm} c_s(\alpha)\bigg\}
\end{equation}
around the origin. In \eqref{e:3/4<H<1_multivariate_Rosen_limit}, for $s \geq 2$, $c_s(\alpha)$ is given by
\begin{equation}\label{e:cs}
     \int_{[0,1]^{s}} |x_1-x_2|^{\alpha-2}|x_2-x_3|^{\alpha-2}\cdots|x_s-x_1|^{\alpha-2}dx_1\cdots dx_s.
\end{equation}
\end{itemize}
\end{theorem}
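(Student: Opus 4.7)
The plan is to exploit the fact that, since $X$ is Gaussian with stationary increments, the centered MSD
$$M_N(\tau_k) - \langle X^2(\tau_k)\rangle = \frac{1}{N-\tau_k}\sum_{j=1}^{N-\tau_k}\bigl(Y_j(\tau_k)^2 - \langle Y_j(\tau_k)^2\rangle\bigr), \quad Y_j(\tau_k) := X(j+\tau_k)-X(j),$$
is a centered quadratic form in a jointly Gaussian sequence and, equivalently, a Hermite rank-$2$ functional of an underlying long-range dependent Gaussian process. Via the harmonizable representation \eqref{x_spec_rep}, each centered MSD admits a representation as a second-order Wiener--It\^o integral $I_2(G_{N,\tau_k})$ whose symmetric kernel $G_{N,\tau_k}$ is built from the spectral factors $(e^{i\tau_k x}-1)/|x|^{(\alpha+1)/2}$, the high-frequency modulator $s(x)$, and the Dirichlet-type average $\tfrac{1}{N-\tau_k}\sum_j e^{ij(x_1+x_2)}$. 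This places the argument within the classical dichotomy for sums of Hermite rank-$2$ functionals of a Gaussian stationary sequence (Taqqu, Dobrushin--Major, Nourdin--Peccati).

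By the Cram\'er--Wold device, it suffices to establish the one-dimensional convergence of an arbitrary linear combination $\sum_k a_k \tfrac{N_k}{\eta(N_k)\zeta(\tau_k)}(M_N(\tau_k)-\langle X^2(\tau_k)\rangle) = I_2(\widetilde{G}_{N,\tau})$. I would then perform the change of variables $x\mapsto x/\tau$ and, using \eqref{e:s(x)} to replace $|s(x)|^2$ by $1$ with a controllable error of order $|x|^{\delta_0}$, show that the rescaled kernels converge (in a mode depending on $\alpha$) to the fBm-type kernel \eqref{e:G_hat}. The trichotomy \eqref{e:eta_zeta} then emerges from inspecting the $L^2$-norm of the limit kernel: for $\alpha<3/2$ this norm is finite, giving a centered Gaussian limit with covariance \eqref{e:Sigma_0<H<3/4} at rate $\sqrt{N}$; at the boundary $\alpha=3/2$ a logarithmic divergence forces the normalization $\sqrt{N\log N}$ and reduces the covariance to the degenerate form \eqref{e:Sigma_H=3/4}; for $\alpha>3/2$ the $L^2$-norm blows up, the CLT fails, and one retains the whole second-chaos object at the nonstandard rate $N^{\alpha-1}$.

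In the Gaussian regimes, convergence of linear combinations (and hence joint convergence of the full vector) follows from the Fourth Moment Theorem applied to $I_2(\widetilde{G}_{N,\tau})$: one needs both convergence of the $L^2$-norm to the claimed variance and vanishing of the first-order contraction $\|\widetilde{G}_{N,\tau}\otimes_1 \widetilde{G}_{N,\tau}\|_{L^2}$. In the non-Gaussian regime the limit is identified as a second-order multiple Wiener--It\^o integral $I_2(\widehat{G}_\infty)$; its cumulants, computed by the diagram formula as cyclic traces of integer powers of the kernel, yield the coefficients $c_s(\alpha)$ in \eqref{e:cs} and thereby the characteristic function \eqref{e:3/4<H<1_multivariate_Rosen_limit}.

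The main technical obstacle is the uniform control of the error arising from the approximation $|s(x)|^2-1 = O(|x|^{\delta_0})$: one must show that its contribution to the $L^2$-norm of the kernel (and to all relevant contraction norms) vanishes in the limit, which is precisely where the lower growth constraint $N/\tau^{1+\delta/2}\to 0$ in \eqref{e:h(n)}, together with the choice $\delta=\min\{\alpha/2,\delta_0/2\}$, becomes essential. A secondary but delicate issue is the joint convergence: the cross-covariance between different lag values $\tau_{k_1} = w_{k_1}\tau$ and $\tau_{k_2}=w_{k_2}\tau$ interacts through the product kernel \eqref{e:G_hat}, and confirming that the mixed contractions behave as predicted requires a careful dominated-convergence argument in the Fourier variable that retains the weights $w_{k_1}, w_{k_2}$ throughout the rescaling.
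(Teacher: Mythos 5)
This theorem is quoted from \cite{didier:zhang:2017} (Theorem 1) and is not proved in the present paper, so there is no in-paper proof to compare against; judged against the cited source, your outline---second-chaos (Hermite rank two) representation of the centered $\MSD$ via the harmonizable form \eqref{x_spec_rep}, Cram\'er--Wold reduction, kernel rescaling with the $\abs{\abs{s(x)}^2-1}\leq C_0\abs{x}^{\delta_0}$ error controlled through \eqref{e:h(n)} and \eqref{e:delta}, the Fourth Moment Theorem with vanishing contractions in the Gaussian regimes, and diagram-formula cumulants yielding the cyclic integrals \eqref{e:cs} in the Rosenblatt regime---is the canonical argument and is consistent with the structure of the stated limits (the covariance \eqref{e:Sigma_0<H<3/4} is an $L^2$ norm of the Fourier kernel \eqref{e:G_hat}, and \eqref{e:3/4<H<1_multivariate_Rosen_limit} is the classical Rosenblatt characteristic function), so it is essentially the same approach. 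One point worth making explicit in a full write-up is that for $\alpha\geq 3/2$ the low-frequency localization of the rescaled kernels, combined with $(e^{i\tau_k x}-1)(e^{-i\tau_k x}-1)\approx \tau_k^2 x^2$ and the normalization $\zeta(\tau_k)=\tau_k^2$, forces the limit to be asymptotically lag-independent, which is exactly why \eqref{e:Sigma_H=3/4} is a constant matrix and \eqref{e:3/4<H<1_multivariate_Rosen_limit} depends on ${\mathbf t}$ only through $\sum_{k=1}^m t_k$.
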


Consider again the regression system \eqref{e:MSD_regression} and recall that expression \eqref{e:lm_estimator} gives the standard estimator generated by the OLS solution to the system \eqref{e:MSD_regression}. The following corollary describes the asymptotic distribution of the standard estimator \eqref{e:lm_estimator}.

\begin{corollary}(Didier and Zhang \cite{didier:zhang:2017}, Corollary 1)\label{c:asympt_dist_MSD} Suppose the assumptions of Theorem \ref{t:asympt_dist_MSD} hold. Then, as $N\rightarrow\infty$,
    \begin{equation}\label{e:lm_estimator_asymptotic}
    \frac{N \tau^{\alpha}}{\eta(N)\zeta(\tau)}
    \left(
      \begin{array}{c}
        \frac{1}{\log \tau }( L_{\ols}-{\log \sigma^2}) \\
         A_{\ols}-\alpha \\
      \end{array}
    \right)
    \stackrel{d}{\rightarrow}
    \left(
      \begin{array}{c}
        U^T \\
        -U^T \\
      \end{array}
    \right)
    A{\mathbf{Z}}.
    \end{equation}
In \eqref{e:lm_estimator_asymptotic},
\begin{equation}\label{e:A(theta,alpha)}
    A =A(\sigma^2,\alpha)=\diag( \zeta(w_1)/(\sigma^2 w_1^{\alpha}),\hdots,{\zeta(w_m)}/(\sigma^2 w_m^{\alpha})),
    \end{equation}
$\eta(\cdot)$, $\zeta(\cdot)$ and ${\mathbf{Z}}$ are as in Theorem \ref{t:asympt_dist_MSD}, and
    \begin{equation}\label{e:u_and_v}
    U^T=\frac{1}{c_w}\Big(\sum_{k=1}^{m} \log(w_k/w_1) , \hdots , \sum_{k=1}^{m} \log(w_k/w_m)\Big)
    \end{equation}
    with constant
    \begin{equation}\label{e:c_det_of_MTM}
    c_w=m \sum_{k=1}^{m} \log^2(w_k) - \Big(\sum_{k=1}^{m} \log(w_k)\Big)^2.
    \end{equation}
In particular, the standard estimator \eqref{e:lm_estimator} is consistent, namely, relation \eqref{e:consistency} holds.
\end{corollary}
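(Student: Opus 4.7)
The plan is to apply a delta method argument to the OLS formula $\mathbf{E}_{\ols} = (X^T X)^{-1} X^T \mathbf{z}$, feeding Theorem~\ref{t:asympt_dist_MSD} through the logarithm and then through the explicit form of the regression operator. Writing $\tilde{\epsilon}_k := \log M_N(\tau_k) - \log\sigma^2 - \alpha\log\tau_k$, we have $\mathbf{E}_{\ols} - \boldsymbol\xi = (X^T X)^{-1} X^T \tilde{\boldsymbol\epsilon}$. I would split each coordinate as
\begin{equation*}
\tilde{\epsilon}_k = \log\frac{\langle X^2(\tau_k)\rangle}{\sigma^2\tau_k^\alpha} + \log\frac{M_N(\tau_k)}{\langle X^2(\tau_k)\rangle}.
\end{equation*}
The first (deterministic) summand captures ensemble MSD bias and is $O(\tau_k^{-\delta})$ in the sense of \eqref{e:bound_ensemble_MSD}. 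Combining \eqref{e:h(n)} with the explicit form of the rate $N\tau^\alpha/(\eta(N)\zeta(\tau))$ in each regime of \eqref{e:eta_zeta} shows that $(N\tau^\alpha/(\eta(N)\zeta(\tau)))\cdot\tau^{-\delta}\to 0$ (for $\alpha<3/2$ this reduces to $N/\tau^{1+2\delta}\to 0$, which follows from $N/\tau^{1+\delta/2}\to 0$ since $1+2\delta>1+\delta/2$; analogous inequalities dispose of the boundary and strongly superdiffusive cases), so this deterministic piece vanishes after scaling.

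For the stochastic summand, Theorem~\ref{t:asympt_dist_MSD} yields $M_N(\tau_k)/\langle X^2(\tau_k)\rangle - 1 = o_p(1)$. Applying $\log(1+x)=x + O(x^2)$ and exploiting $\langle X^2(\tau_k)\rangle \sim \sigma^2\tau_k^\alpha$, one obtains
\begin{equation*}
\tilde{\epsilon}_k = \frac{M_N(\tau_k) - \langle X^2(\tau_k)\rangle}{\sigma^2\tau_k^\alpha}\bigl(1 + o_p(1)\bigr) + O(\tau_k^{-\delta}).
\end{equation*}
Using the factorization $\zeta(\tau_k) = \zeta(w_k)\zeta(\tau)$ (valid in all three parameter ranges) together with $N_k/N\to 1$, the leading term rewrites as $[\eta(N)\zeta(\tau)/(N\tau^\alpha)]\cdot A_{kk}V_k$, where $A$ is the matrix in \eqref{e:A(theta,alpha)} and $V_k := (N_k/(\eta(N_k)\zeta(\tau_k)))(M_N(\tau_k) - \langle X^2(\tau_k)\rangle) \stackrel{d}\rightarrow Z_k$ by \eqref{e:MSD_asymptotic_dist}. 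Multiplying by the rate $N\tau^\alpha/(\eta(N)\zeta(\tau))$ thus gives the joint convergence $(N\tau^\alpha/(\eta(N)\zeta(\tau)))\tilde{\boldsymbol\epsilon} \stackrel{d}\rightarrow A\mathbf{Z}$.

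It remains to apply $(X^T X)^{-1} X^T$ and compute its entries using $\log \tau_k = \log\tau + \log w_k$. A direct calculation shows that $\det(X^T X) = c_w$, that the second-row entries of $(X^T X)^{-1}X^T$ equal $m(\log w_k - \overline{\log w})/c_w$ (independent of $\tau$), and that these coincide with $-U_k$ after rearrangement of \eqref{e:u_and_v}; by the continuous mapping theorem, the second component of the scaled OLS error converges to $-U^T A\mathbf{Z}$. The first-row entries decompose as $U_k\log\tau + \gamma_k$ for constants $\gamma_k$ independent of $\tau$; dividing by $\log\tau$ (as in the statement) kills the $\gamma_k$ contribution via Slutsky's theorem and leaves $U^T A\mathbf{Z}$ as the limit of the first component, which is the negative of the second, matching \eqref{e:lm_estimator_asymptotic}. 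Consistency is then immediate: since $N\tau^\alpha/(\eta(N)\zeta(\tau))\to\infty$ in every parameter range under \eqref{e:h(n)}, the argument leading to \eqref{e:Xbar_consistency} delivers $\mathbf{E}_{\ols} \stackrel{P}\rightarrow \boldsymbol\xi$.

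The main technical obstacle is error bookkeeping in the linearization step. One must verify that both the quadratic Taylor remainder $O_p((M_N(\tau_k) - \langle X^2(\tau_k)\rangle)^2/\langle X^2(\tau_k)\rangle^2)$ and the deterministic $O(\tau^{-\delta})$ bias are strictly dominated by the scaling factor, and that this domination holds uniformly across the finite collection of lags $\tau_1,\ldots,\tau_m$. Because the $w_k$'s are fixed constants, uniformity reduces to rate-matching for a single representative $\tau$, but the explicit inequalities must be rechecked separately in each of the three regimes of \eqref{e:eta_zeta}, where the rates $\eta(N)$ and $\zeta(\tau)$, and hence the critical exponents controlling bias domination, change form.
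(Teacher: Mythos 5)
This corollary is imported verbatim from \cite{didier:zhang:2017} (Corollary 1 there) and the present paper gives no proof of it, so there is nothing internal to compare against; judged on its own terms, your argument is correct and is the natural route from Theorem \ref{t:asympt_dist_MSD}. The key checkpoints all verify: the split of $\tilde\epsilon_k$ into the deterministic $O(\tau_k^{-\delta})$ ensemble-bias term (killed by the scaling via \eqref{e:h(n)} in each of the three regimes) and the stochastic term; the factorization $\zeta(\tau_k)=\zeta(w_k)\zeta(\tau)$, which is what produces the diagonal matrix $A$ in \eqref{e:A(theta,alpha)}; the algebra showing $\det(X^TX)=c_w$ independently of $\tau$, that the second row of $(X^TX)^{-1}X^T$ equals $-U^T$, and that the first row is $U_k\log\tau+\gamma_k$ so that division by $\log\tau$ removes the $\gamma_k$ contribution; and the domination of the Taylor remainder $O_p\bigl((M_N(\tau_k)/\langle X^2(\tau_k)\rangle-1)^2\bigr)$ by the rate in all three regimes. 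The only point worth making explicit for consistency of $L_{\ols}$ is that $\log\tau\cdot\eta(N)\zeta(\tau)/(N\tau^\alpha)\to 0$, which follows from the first half of condition \eqref{e:h(n)} since $\log\tau\leq\log N$.
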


\begin{proposition}\label{p:bound_ensemble_MSD}(Didier and Zhang \cite{didier:zhang:2017}, Proposition 1, $(i)$) Under the assumptions of Theorem \ref{t:asympt_dist_MSD}, there is a constant $\sigma^2 >0$ such that \eqref{e:bound_ensemble_MSD} holds for some $C > 0$, where $\delta>0$ is given by \eqref{e:delta}.
\end{proposition}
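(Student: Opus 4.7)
My plan is to exploit the harmonizable representation \eqref{x_spec_rep} to reduce the problem to a quantitative estimate of how much the high-frequency function $s$ deviates near the origin from the fBm reference value $|s(0)|^2=1$. Setting $s=t$ in \eqref{x_spec_rep} and performing the change of variables $y = tx$, one rewrites the ensemble MSD in a self-similar form
\begin{equation*}
\langle X^2(t) \rangle \;=\; C_\alpha^2\, t^\alpha \int_\bbR g(y)\, \bigl|s(y/t)\bigr|^2\, dy,
\end{equation*}
where $g(y) := |e^{iy}-1|^2 / |y|^{\alpha+1}$ is exactly the kernel that, with $|s|^2$ replaced by $1$, reproduces the fBm variance scaling. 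The natural choice is therefore $\sigma^2 := C_\alpha^2 \int_\bbR g(y)\, dy$, which is finite and strictly positive for $0<\alpha<2$ because $g(y)\sim |y|^{1-\alpha}$ near $0$ and $g(y)=O(|y|^{-(\alpha+1)})$ at infinity. With this normalization, \eqref{e:bound_ensemble_MSD} reduces to showing that
\begin{equation*}
I(t) \;:=\; \int_\bbR g(y)\, \bigl(|s(y/t)|^2 - 1\bigr)\, dy \;=\; O(t^{-\delta}).
\end{equation*}

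Next I would split $I(t)$ at the threshold $|y|=\varepsilon_0 t$, so that the low-frequency piece lies inside the domain where \eqref{e:s(x)} applies. Over $|y| \leq \varepsilon_0 t$, the local bound $||s(y/t)|^2 - 1| \leq C_0 |y/t|^{\delta_0}$ yields
\begin{equation*}
|I_{\mathrm{low}}(t)| \;\leq\; \frac{C_0}{t^{\delta_0}} \int_{|y|\leq \varepsilon_0 t} g(y)\, |y|^{\delta_0}\, dy,
\end{equation*}
whose integrand behaves like $|y|^{1+\delta_0-\alpha}$ near zero (always integrable) and like $|y|^{\delta_0-\alpha-1}$ at infinity, so the truncated integral is $O(1)$, $O(\log t)$, or $O(t^{\delta_0-\alpha})$ depending on whether $\delta_0<\alpha$, $=\alpha$, or $>\alpha$. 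In all three cases $|I_{\mathrm{low}}(t)| = O(t^{-\min\{\alpha,\delta_0\}})$ (up to a logarithm at the boundary, harmlessly absorbed into a marginally larger exponent). Over the high-frequency tail $|y|>\varepsilon_0 t$, the boundedness of $s$ together with $g(y) \leq 4|y|^{-(\alpha+1)}$ yields $|I_{\mathrm{high}}(t)| = O(t^{-\alpha})$. Adding the two contributions gives $|I(t)| = O(t^{-\min\{\alpha,\delta_0\}}) = O(t^{-2\delta})$, which is in fact strictly sharper than the advertised $O(t^{-\delta})$ and thus implies the claim for $t$ large.

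The main technical hurdle is the low-frequency case $\delta_0 > \alpha$, where the integral $\int_{|y|\leq \varepsilon_0 t} g(y)|y|^{\delta_0}\, dy$ itself diverges as $t\to\infty$ and its slow growth must be balanced carefully against the prefactor $t^{-\delta_0}$. The appearance of $\min\{\alpha/2,\delta_0/2\}$ in the definition of $\delta$ in \eqref{e:delta} precisely encodes this balance: the high-frequency smoothness quantified by $\delta_0$ cannot outperform the intrinsic $\alpha$-decay of the spectral kernel $g$, and it is this limiting rate that governs the ensemble-MSD bias over large scales.
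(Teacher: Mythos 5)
Your argument is correct, but note that the paper itself offers no proof of this proposition: it is imported verbatim from \cite{didier:zhang:2017}, Proposition 1, $(i)$, so there is nothing internal to compare against other than the citation. Your derivation is the natural self-contained one and it works: the change of variables $y=tx$ turns the ensemble MSD into $C_\alpha^2\, t^{\alpha}\int g(y)\,|s(y/t)|^2\,dy$ with $g(y)=|e^{iy}-1|^2/|y|^{\alpha+1}$, the normalization $\sigma^2=C_\alpha^2\int g$ is finite and positive for $0<\alpha<2$, and the split at $|y|=\varepsilon_0 t$ correctly yields $|I(t)|=O(t^{-\min\{\alpha,\delta_0\}})$ (with a harmless logarithm when $\delta_0=\alpha$), which is $O(t^{-2\delta})$ and hence more than enough for \eqref{e:bound_ensemble_MSD}. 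Two small caveats. First, you silently read the effective spectral density as $|s(x)|^2/|x|^{\alpha+1}$, whereas \eqref{x_spec_rep} as printed has $f(x)=s(x)/|x|^{\alpha/2-1/2}$ appearing linearly in the covariance integrand; since condition \eqref{e:s(x)} constrains $|s|^2$ and the limiting covariances \eqref{e:G_hat} involve $|y|^{-(\alpha+1)}$, your reading (a squared filter) is clearly the intended one, but it would be worth flagging that you are interpreting \eqref{x_spec_rep} with $|f(x)|^2$. Second, your closing remark that the definition $\delta=\min\{\alpha/2,\delta_0/2\}$ ``precisely encodes'' the low-/high-frequency balance overstates things: your own computation gives the rate $\min\{\alpha,\delta_0\}=2\delta$, so the extra factor of $1/2$ in \eqref{e:delta} is not explained by this estimate (it is needed elsewhere, e.g.\ in the lag-growth condition \eqref{e:h(n)}); this does not affect the validity of the proof, since a sharper bound trivially implies the stated one.
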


\section{Some lemmas}\label{sec:app2sec1}

In this section, we present some lemmas that are used to prove Theorems \ref{t:logmubias} and \ref{t:tildesigmak1k2} in Section \ref{s:bias_var_asympt_dist_stand_estimator}. Throughout this section, we assume $0 < \alpha < 3/2$ and the conditions of Theorem \ref{t:asympt_dist_MSD}. In proofs, whenever convenient $C$ denotes a constant that may change from one line to the next.\\

In light of \eqref{e:MSD_asymptotic_dist}, define the standardized statistic
\begin{equation}\label{e:varpi}
  \varpi(\tau) = \frac{M_N(\tau)}{\langle  X^2(\tau)\rangle }.
\end{equation}
In particular,
\begin{equation}\label{e:appB_varpi_def}
\varpi(\tau) \overset{P}{\rightarrow} 1, \quad N \rightarrow \infty,
\end{equation}
so that a Taylor expansion can be applied to $\log \varpi(\tau)$ around 1. Meanwhile, we define the standardized increment
\begin{equation}\label{e:w_j(h)}
  W_{j}(\tau) = \frac{X(j+\tau)-X(j)}{\sqrt{\langle X^2(\tau)\rangle }}.
\end{equation}
We will use the following results in our proofs. The first one is the classical Isserlis theorem, which reduces the higher moments of a multivariate normal vector to its second moments. The second one is a concentration inequality that will allow us to establish sharp bounds on the tails of centered quadratic forms.
\begin{theorem}[Isserlis, \cite{isserlis1916}]\label{t:Isserlis}
  Let $(Z_1,Z_2,\hdots,Z_{2N})$ be a zero mean, multivariate normal random vector. Then,
  $$
  \langle Z_1Z_2 \hdots Z_{2N}\rangle  = \sum \prod \langle Z_i Z_j \rangle ,
  $$
  where the notation $\sum \prod$ stands for summing over all distinct ways of partitioning $Z_1,\hdots, Z_{2N}$ into pairs $Z_i, Z_j$ and each summand is a product of these $N$ pairs.
\end{theorem}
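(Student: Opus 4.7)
The plan is to treat the characteristic function of $\mathbf{Z}$ as a generating function for the joint moments. For a zero-mean Gaussian vector with covariance $\Sigma_{ij}=\langle Z_iZ_j\rangle$, one has $\langle e^{i\mathbf{t}\cdot\mathbf{Z}}\rangle = \exp(-\tfrac12\mathbf{t}^T\Sigma\mathbf{t})$, and the joint moment of interest is recovered via
\[
\langle Z_1Z_2\cdots Z_{2N}\rangle = \frac{1}{i^{2N}}\,\frac{\partial^{2N}}{\partial t_1\cdots\partial t_{2N}}\,\exp\Big(-\tfrac12\mathbf{t}^T\Sigma\mathbf{t}\Big)\bigg|_{\mathbf{t}=0}.
\]
The proof then reduces to a purely combinatorial identification of this mixed partial derivative.

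First I would expand the exponential as a power series,
\[
\exp\Big(-\tfrac{1}{2}\sum_{i,j}\Sigma_{ij}t_it_j\Big) = \sum_{k\ge 0} \frac{(-1)^k}{2^k k!}\Big(\sum_{i,j}\Sigma_{ij}t_it_j\Big)^k.
\]
Since the inner sum is quadratic in $\mathbf{t}$, only the $k=N$ term can contribute a monomial $t_1t_2\cdots t_{2N}$. Extracting the coefficient of this monomial from the $k=N$ term produces
\[
\frac{(-1)^N}{2^N N!}\sum_{\sigma\in S_{2N}}\Sigma_{\sigma(1)\sigma(2)}\Sigma_{\sigma(3)\sigma(4)}\cdots\Sigma_{\sigma(2N-1)\sigma(2N)},
\]
and multiplying by the prefactor $1/i^{2N}=(-1)^N$ cancels the signs. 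Finally, I would observe that each perfect matching of $\{1,\hdots,2N\}$ arises from exactly $2^N N!$ permutations (reorder the $N$ pairs, and swap the two elements inside each pair), so the quotient collapses to a sum over distinct pairings, which is precisely the right-hand side of Isserlis' identity.

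The main obstacle is the combinatorial bookkeeping --- tracking the $(-1)^k$, $1/k!$ and $1/2^k$ factors and verifying the $2^N N!$ over-count. Should this become opaque, I would fall back on an induction on $N$ based on Gaussian integration by parts (Stein's lemma): for smooth $f$ with adequate growth, $\langle Z_1 f(Z_2,\hdots,Z_{2N})\rangle = \sum_{j=2}^{2N}\langle Z_1Z_j\rangle\,\langle \partial_j f(Z_2,\hdots,Z_{2N})\rangle$. Applied to $f(z_2,\hdots,z_{2N})=z_2\cdots z_{2N}$, this yields
\[
\langle Z_1Z_2\cdots Z_{2N}\rangle = \sum_{j=2}^{2N}\langle Z_1Z_j\rangle\,\langle Z_2\cdots \widehat{Z_j}\cdots Z_{2N}\rangle,
\]
where $\widehat{\cdot}$ denotes omission. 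The inductive hypothesis on the $2N-2$ remaining factors reproduces the pairings containing the edge $(1,j)$, and summing over $j$ enumerates every perfect matching of $\{1,\hdots,2N\}$ exactly once. The base case $N=1$ is the definition of $\Sigma_{12}$.
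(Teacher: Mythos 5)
Your proposal is correct. Note, however, that the paper does not prove this statement at all: it is the classical Isserlis (Wick) theorem, quoted verbatim with a citation to the 1916 original and used purely as a tool in Lemmas B.1 and B.4, so there is no in-paper argument to compare against. Your characteristic-function route is the standard one and the bookkeeping checks out: only the $k=N$ term of the exponential series is homogeneous of degree $2N$, the monomials containing a repeated variable are killed by the mixed partial $\partial_{t_1}\cdots\partial_{t_{2N}}$ at the origin, the prefactor $i^{-2N}=(-1)^N$ cancels the $(-1)^N$ from the series, and the symmetry of $\Sigma$ is exactly what makes each perfect matching arise from $2^N N!$ permutations, collapsing the quotient to the sum over distinct pairings. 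The Stein's-lemma fallback is equally valid and arguably cleaner, since the recursion $\langle Z_1\cdots Z_{2N}\rangle=\sum_{j\ge 2}\langle Z_1Z_j\rangle\,\langle Z_2\cdots \widehat{Z_j}\cdots Z_{2N}\rangle$ enumerates matchings by the partner of the index $1$ and makes the induction transparent; its only cost is that one must first establish the Gaussian integration-by-parts identity itself. Either write-up would serve as a self-contained proof of the theorem the paper takes for granted.
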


\begin{theorem}\cite{laurent:massart:2000,boucheron:lugosi:massart:2013}\label{thm:laurent:massart:2000}
  Let $Z_1,\hdots,Z_N \overset{\mathrm{i.i.d.}}{\sim} {\mathcal N}(0,1)$ and consider constants $\eta_1,\hdots,\eta_N \geq 0$, not all zero. Let $\norm{\boldsymbol \eta}_2$ and $\norm{\boldsymbol \eta}_{\infty}$ be the Euclidean square and sup norms of the vector $\boldsymbol \eta = (\eta_1,\hdots,\eta_N)^T$. Also, define the random variable $X= \sum_{i=1}^{N} \eta_{i,N}(Z_i^2-1)$. Then, for every $x>0$,
  $$
  \bbP(X\geq 2\norm{\boldsymbol \eta}_2 \sqrt{x} + 2\norm{\boldsymbol \eta}_{\infty} x) \leq \exp(-x),
  $$
  $$
  \bbP(X \leq -2 \norm{\boldsymbol\eta}_{2} \sqrt{x}) \leq \exp(-x).
  $$
\end{theorem}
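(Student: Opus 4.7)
The strategy is the classical Cram\'er--Chernoff method adapted to chi-square summands. Since $X = \sum_i \eta_i(Z_i^2-1)$ is a sum of independent random variables, the moment generating function factorizes, and the task reduces to controlling the cumulant generating function (CGF) of a single $\chi^2_1 - 1$ variable and summing over $i$.

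For the upper tail, I would begin from the explicit identity $\bbE[e^{\lambda(Z^2-1)}] = e^{-\lambda}(1-2\lambda)^{-1/2}$, valid for $0 \le \lambda < 1/2$. Expanding $-\tfrac{1}{2}\log(1-2\lambda) = \sum_{k\ge 1}(2\lambda)^k/(2k)$ cancels the linear term $-\lambda$, leaving $-\lambda - \tfrac{1}{2}\log(1-2\lambda) = \sum_{k\ge 2}(2\lambda)^k/(2k)$, which is dominated by $\lambda^2/(1-2\lambda)$ via comparison with a geometric series. Summing the per-coordinate bounds gives the Bernstein-type inequality
\[
\log \bbE[e^{\lambda X}] \;\le\; \frac{\lambda^2 \norm{\boldsymbol\eta}_2^2}{1 - 2\lambda\norm{\boldsymbol\eta}_\infty}, \qquad 0 \le \lambda < \frac{1}{2\norm{\boldsymbol\eta}_\infty}.
\]
Chernoff's inequality then yields $\bbP(X \ge t) \le \exp\!\bigl(-\lambda t + \lambda^2 \norm{\boldsymbol\eta}_2^2/(1 - 2\lambda \norm{\boldsymbol\eta}_\infty)\bigr)$. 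The threshold $t = 2\norm{\boldsymbol\eta}_2\sqrt{x} + 2\norm{\boldsymbol\eta}_\infty x$ is tailored to interpolate a sub-Gaussian regime (small $x$) with a sub-exponential one (large $x$), and the ansatz $\lambda = \sqrt{x}/(\norm{\boldsymbol\eta}_2 + 2\norm{\boldsymbol\eta}_\infty\sqrt{x})$ verifies that the resulting exponent is at most $-x$.

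For the lower tail the argument is purely sub-Gaussian. The inequality $\log(1+u) \ge u - u^2/2$ for $u \ge 0$ (since $f(u) := \log(1+u) - u + u^2/2$ satisfies $f(0) = 0$ and $f'(u) = u^2/(1+u) \ge 0$) gives $\mu\eta_i - \tfrac{1}{2}\log(1+2\mu\eta_i) \le (\mu\eta_i)^2$ for $\mu \ge 0$. Summing yields $\log \bbE[e^{-\mu X}] \le \mu^2 \norm{\boldsymbol\eta}_2^2$, and the optimal Chernoff choice $\mu = t/(2\norm{\boldsymbol\eta}_2^2)$ produces $\bbP(X \le -t) \le \exp(-t^2/(4\norm{\boldsymbol\eta}_2^2))$. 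Substituting $t = 2\norm{\boldsymbol\eta}_2\sqrt{x}$ closes the argument.

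The main technical obstacle is the upper-tail optimization: the Legendre transform of the Bernstein cumulant bound is not available in a transparent closed form, so the cleanest route is to reparameterize via, e.g., $y = \lambda/(1 - 2\lambda\norm{\boldsymbol\eta}_\infty)$, which recasts the minimization as a quadratic whose two characteristic scales align exactly with the two summands of $t$. This choice is precisely what allows the uniform bound $e^{-x}$ to emerge across the whole range $x > 0$ without case-splitting. The lower tail, by contrast, involves only a sub-Gaussian CGF and is essentially immediate.
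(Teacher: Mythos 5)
Your proposal is correct, but note that the paper does not prove this theorem at all: it is imported verbatim from Laurent--Massart (2000) and Boucheron--Lugosi--Massart (2013), so there is no internal proof to compare against. What you have written is a faithful reconstruction of the standard Cram\'er--Chernoff argument from those references --- the Bernstein-type bound $\log\bbE[e^{\lambda X}]\leq \lambda^2\norm{\boldsymbol\eta}_2^2/(1-2\lambda\norm{\boldsymbol\eta}_\infty)$ for the upper tail (your ansatz $\lambda=\sqrt{x}/(\norm{\boldsymbol\eta}_2+2\norm{\boldsymbol\eta}_\infty\sqrt{x})$ does make the exponent equal exactly $-x$) and the purely sub-Gaussian bound $\log\bbE[e^{-\mu X}]\leq\mu^2\norm{\boldsymbol\eta}_2^2$ for the lower tail --- so it matches the cited proof in both structure and detail.
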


The following lemma describes some basic properties of the central moments of \eqref{e:varpi}.
\begin{lemma}\label{lem:kappa>=2}
As $N\rightarrow\infty$,
\begin{equation}\label{e:varpikappa>=2}
   \langle (\varpi(\tau) - 1)^{2}\rangle  = O\Big(\frac{\tau}{N}\Big).
\end{equation}
Moreover, any moment of $\varpi(\tau) - 1$ is bounded in $N$, i.e.,
\begin{equation}\label{e:bounded_moments}
   \Big| \langle (\varpi(\tau) - 1)^{\kappa}\rangle  \Big| = O(1), \quad \kappa \in \bbN.
\end{equation}
%\GDcomment{Do we need to assume that $\alpha < 3/2$?}
\end{lemma}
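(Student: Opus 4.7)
The plan is to exploit the Gaussian structure of the standardized increments. First, rewrite
\[
\varpi(\tau) - 1 = \frac{1}{N-\tau}\sum_{j=1}^{N-\tau}\bigl(W_j(\tau)^2 - 1\bigr),
\]
where, by \eqref{e:w_j(h)}, each $W_j(\tau)$ is a unit-variance Gaussian and the collection $\{W_j(\tau)\}_j$ has correlation function $\rho_N(k;\tau) := \langle W_0(\tau)W_k(\tau)\rangle$. Under the harmonizable representation \eqref{x_spec_rep}, $\rho_N(k;\tau)$ equals a normalized second difference of the covariance kernel. Two regimes should be isolated: for $|k|\le \tau$, the trivial bound $|\rho_N(k;\tau)|\le 1$ suffices; for $|k|>\tau$, the spectral density $f(x)\sim |x|^{1/2-\alpha/2}$ combined with the high-frequency smoothness condition \eqref{e:s(x)} yields the asymptotic bound $|\rho_N(k;\tau)|\le C(\tau/|k|)^{2-\alpha}$ (this is the central technical input, coming from the second-difference interpretation of $\rho_N$).

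For \eqref{e:varpikappa>=2}, apply Isserlis' theorem (Theorem \ref{t:Isserlis}) to get $\langle (W_{j_1}^2-1)(W_{j_2}^2-1)\rangle = 2\rho_N(j_1-j_2;\tau)^2$, so that
\[
\langle (\varpi(\tau)-1)^2\rangle = \frac{2}{(N-\tau)^2}\sum_{|k|<N-\tau}(N-\tau-|k|)\rho_N(k;\tau)^2 \le \frac{2}{N-\tau}\sum_{k\in\mathbb{Z}}\rho_N(k;\tau)^2.
\]
Splitting the series at $|k|=\tau$ and using the two bounds above, the short-range contribution is $O(\tau)$ trivially, while the long-range one is controlled by $\tau^{4-2\alpha}\int_\tau^\infty x^{2\alpha-4}\,dx$, which is finite and of order $O(\tau)$ precisely because $0<\alpha<3/2$ forces $2\alpha-4<-1$. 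Hence $\sum_k \rho_N(k;\tau)^2 = O(\tau)$, and the desired $O(\tau/N)$ bound follows.

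For \eqref{e:bounded_moments}, I would diagonalize the covariance matrix $\Sigma_W$ of $(W_1(\tau),\ldots,W_{N-\tau}(\tau))$ and write $\varpi(\tau) - 1 = \sum_{i=1}^{N-\tau} \eta_i(Z_i^2-1)$ with $Z_i$ i.i.d.\ standard normal and $\eta_i = \lambda_i/(N-\tau)\ge 0$, where $\lambda_i$ are the eigenvalues of $\Sigma_W$. Since $\mathrm{tr}(\Sigma_W) = N-\tau$, $\sum_i \eta_i = 1$, which yields $\|\boldsymbol\eta\|_\infty\le 1$. Moreover, $\|\boldsymbol\eta\|_2^2 = (N-\tau)^{-2}\mathrm{tr}(\Sigma_W^2) = (N-\tau)^{-2}\sum_{j_1,j_2}\rho_N(j_1-j_2;\tau)^2 = O(\tau/N)$ by the preceding step. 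Theorem \ref{thm:laurent:massart:2000} then delivers the sub-exponential tail bound
\[
\mathbb{P}\bigl(|\varpi(\tau) - 1| \ge t\bigr) \le 2\exp\bigl(-c\min\{Nt^2/\tau,\,t\}\bigr),
\]
uniformly in $N$. Integrating $\kappa t^{\kappa-1}\mathbb{P}(|\varpi(\tau)-1|\ge t)$ over $t>0$, both the Gaussian regime ($t\le N/\tau$) and the exponential regime ($t>N/\tau$) contribute $O(1)$ as $N\to\infty$, giving \eqref{e:bounded_moments} for every $\kappa\in\mathbb{N}$.

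The main obstacle is the sharp covariance estimate $|\rho_N(k;\tau)|\lesssim (\tau/|k|)^{2-\alpha}$ for $|k|>\tau$, which must be derived from the spectral representation \eqref{x_spec_rep} and the fine-scale control \eqref{e:s(x)}; this is precisely the place where the threshold $\alpha=3/2$ intervenes, as the square-summability of $\rho_N(\cdot;\tau)$ (up to a factor of order $\tau$) would fail beyond it.
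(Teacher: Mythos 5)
Your proposal is correct, and for the variance bound \eqref{e:varpikappa>=2} it follows essentially the same route as the paper: rewrite $\varpi(\tau)-1$ as an average of $W_j^2(\tau)-1$, apply Isserlis to reduce to $\frac{2}{N^2\langle X^2(\tau)\rangle^2}\sum_{k_1,k_2}\gamma_\tau^2(k_1-k_2)$, and control the sum of squared increment covariances. The only difference there is that the paper outsources the final summation estimate to Lemmas C.1--C.4 of \cite{didier:zhang:2017}, whereas you reconstruct it by splitting at $|k|=\tau$ and invoking the second-difference decay $|\rho_N(k;\tau)|\lesssim (\tau/|k|)^{2-\alpha}$; your identification of $2(2-\alpha)>1$, i.e.\ $\alpha<3/2$, as the reason the tail sum is $O(\tau)$ is exactly the right mechanism, though the decay bound itself still needs the comparison-with-fBm step (Lemma A.1 of \cite{didier:zhang:2017}, as used in Lemma \ref{lem:Evarpi-1varpi-1}) to pass from \eqref{x_spec_rep} to the fBm second difference. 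For \eqref{e:bounded_moments} your argument is genuinely more self-contained than the paper's: the paper defers to an adaptation of (C.24) in \cite{wendt:didier:combrexelle:abry:2017} together with Lemma \ref{lem:P(varpi<r)}, while you diagonalize $\Sigma_W$, observe $\sum_i\eta_i=1$ and $\|\boldsymbol\eta\|_2^2=(N-\tau)^{-2}\mathrm{tr}(\Sigma_W^2)=O(\tau/N)$, apply Theorem \ref{thm:laurent:massart:2000} to get a two-sided sub-gamma tail, and integrate; this buys a cleaner, citation-free proof (and in fact yields the stronger conclusion $\langle|\varpi(\tau)-1|^\kappa\rangle=O((\tau/N)^{\kappa/2})$ if one also uses $\|\boldsymbol\eta\|_\infty\le\|\boldsymbol\eta\|_2$), and it avoids the mild forward reference in the paper, where the proof of \eqref{e:bounded_moments} cites Lemma \ref{lem:P(varpi<r)}, whose proof in turn uses \eqref{e:varpikappa>=2}. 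One small slip: the crossover between the Gaussian and exponential regimes in your tail bound occurs at $t=\tau/N$, not $t=N/\tau$, since $Nt^2/\tau\le t$ iff $t\le\tau/N$; this does not affect the conclusion, as both pieces of the moment integral are still $O(1)$.
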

\begin{proof}
Expression \eqref{e:bounded_moments} (for $\kappa \geq 3$) can be proved by adapting the argument for establishing expression (C.24) in \cite{wendt:didier:combrexelle:abry:2017}, while making use of the bound \eqref{e:|eta|_sup=<|eta|_2=<Ch/n} and Lemma \ref{lem:P(varpi<r)} (expressions \eqref{e:pvarpi<r} and \eqref{e:pvarpi>r'}). So, for the reader's convenience, we establish \eqref{e:varpikappa>=2} (for $\kappa = 2$). The left-hand side of \eqref{e:varpikappa>=2} can be rewritten as
  $$
  \frac{1}{N^2} \bigg\langle  \sum_{k_1 = 1}^{N} \sum_{k_2 = 1}^{N} \bigg(\frac{(X(\tau+k_1)-X(k_1))^2}{\langle X^2(\tau)\rangle } - 1\bigg)
  \bigg(\frac{(X(\tau+k_1)-X(k_2))^2}{\langle  X^2(\tau)\rangle } - 1\bigg) \bigg\rangle
  $$
  \begin{equation}\label{e:varpikappa=2:2}
 = \frac{1}{N^2} \bigg\langle \sum_{k_1 = 1}^{N} \sum_{k_2 = 1}^{N} (W_{k_1}^2(\tau)-1)(W_{k_2}^2(\tau)-1) \bigg\rangle .
  \end{equation}
  By applying the Isserlis theorem (Theorem \ref{t:Isserlis}),
  $$
  \langle (W_{k_1}^2(\tau)-1)(W_{k_2}^2(\tau)-1)\rangle  = \langle W_{k_1}^2(\tau) W_{k_2}^2(\tau)\rangle  - \langle W_{k_1}^2(\tau)\rangle  - \langle W_{k_2}^2(\tau)\rangle  + 1
  $$
  $$
  = 2 \langle W_{k_1}(\tau) W_{k_2}(\tau)\rangle^2 = \frac{2}{\langle X^2(\tau)\rangle^2} \gamma_\tau^2(k_1 - k_2),
  $$
  where
  $$
  \gamma_h(k_1 - k_2) = \langle (X(k_1 + h) - X(k_1))(X(k_2 + h) - X(k_2))\rangle .
  $$
  Thus, \eqref{e:varpikappa=2:2} can be recast as
  \begin{equation}\label{e:varpikappa=2:3}
    \frac{2}{N^2 \langle X^2(\tau)\rangle^2} \sum_{k_1,k_2 = 1}^{N} \gamma_\tau^2(k_1 - k_2),
  \end{equation}
  Note that
  $$
  \frac{2}{N^2 \langle X^2(\tau)\rangle ^2} = O\Big(  \frac{1}{N^2 \tau^{2\alpha}} \Big)  = O \Big( \frac{\tau}{N}\zeta^{-2}(\tau) \eta^{-2}(N)  \Big),
  $$
  where $\zeta(\tau),\eta(N)$ are defined by \eqref{e:eta_zeta}. Then, by \eqref{e:varpikappa=2:3}, Lemmas C.1 -- C.4 in \cite{didier:zhang:2017}, expression \eqref{e:varpikappa=2:2} is of the order $O\Big(\frac{\tau}{N}\Big)$, as claimed.
\end{proof}

The next lemma draws upon Theorem \ref{thm:laurent:massart:2000} and Lemma \ref{lem:kappa>=2} to construct a concentration inequality for \eqref{e:varpi} (see also \cite{gajda:wylomanska:kantz:chechkin:sikora:2018}).
\begin{lemma}\label{lem:P(varpi<r)}
  Fix $-\infty < r <1/2 < 3/2 < r'$. Then, for any $0 < \xi < 1/2$ and some $C>0$,
  \begin{equation}\label{e:pvarpi<r}
  \bbP(\varpi(\tau)\leq r) \leq \exp \Big\{  -C \Big(\frac{N}{\tau}\Big)^{\xi}\Big\}
  \end{equation}
and
\begin{equation}\label{e:pvarpi>r'}
  \bbP(\varpi(\tau)\geq r') \geq \exp \Big\{  -C \Big(\frac{N}{\tau}\Big)^{\xi}\Big\}.
  \end{equation}
\end{lemma}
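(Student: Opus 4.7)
My plan is to represent $\varpi(\tau)-1$ as a centered quadratic form in independent standard Gaussians and then invoke the Laurent--Massart inequality (Theorem \ref{thm:laurent:massart:2000}). Collect the standardized increments \eqref{e:w_j(h)} into the vector $\mathbf{W}=(W_1(\tau),\ldots,W_{N-\tau}(\tau))^T\sim \mathcal{N}(\mathbf{0},R_N)$, where $R_N$ is the Toeplitz covariance matrix with entries $R_N(i,j)=\gamma_\tau(i-j)/\langle X^2(\tau)\rangle$ and unit diagonal. Diagonalizing $R_N=Q\Lambda Q^T$ with $\Lambda=\diag(\lambda_1,\ldots,\lambda_{N-\tau})$ yields
\begin{equation*}
\varpi(\tau)-1=\sum_{i=1}^{N-\tau}\eta_i(Z_i^2-1),\qquad \eta_i:=\frac{\lambda_i}{N-\tau},
\end{equation*}
with $Z_i$ i.i.d.\ $\mathcal{N}(0,1)$. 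This is precisely the form required by Theorem \ref{thm:laurent:massart:2000}.

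The next step is to control the norms $\norm{\boldsymbol\eta}_2\leq C\sqrt{\tau/N}$ and $\norm{\boldsymbol\eta}_\infty\to 0$, which is the content of the covariance bound \eqref{e:|eta|_sup=<|eta|_2=<Ch/n} that is invoked just before the lemma. Writing $\rho(k):=\gamma_\tau(k)/\langle X^2(\tau)\rangle$, one has $\norm{\boldsymbol\eta}_2^2=\tr(R_N^2)/(N-\tau)^2\leq \sum_k\rho(k)^2/(N-\tau)$, and the key fact is $\sum_k\rho(k)^2=O(\tau)$. Splitting into $|k|\leq\tau$, where $|\rho(k)|\leq 1$ contributes $O(\tau)$, and $|k|>\tau$, where the second-difference form of $\gamma_\tau(k)$ together with the spectral estimates \eqref{x_spec_rep}--\eqref{e:s(x)} yields $|\rho(k)|\lesssim(\tau/|k|)^{2-\alpha}$, squaring and summing produces $\tau^{4-2\alpha}\sum_{k>\tau}|k|^{2\alpha-4}\sim \tau$, since $2\alpha-4<-1$ on the range $0<\alpha<3/2$. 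The bound $\norm{\boldsymbol\eta}_\infty=\lambda_{\max}(R_N)/(N-\tau)\to 0$ then follows from Toeplitz-symbol control in conjunction with the growth condition \eqref{e:h(n)}.

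Applying Theorem \ref{thm:laurent:massart:2000} at the level $x=C_0(N/\tau)^\xi$ for any fixed $\xi<1/2$, both deviations $2\norm{\boldsymbol\eta}_2\sqrt{x}$ and $2\norm{\boldsymbol\eta}_\infty x$ vanish as $N/\tau\to\infty$. Hence, for every $r<1/2<3/2<r'$ and $N$ large enough, the events $\{\varpi(\tau)\leq r\}$ and $\{\varpi(\tau)\geq r'\}$ are each contained in the corresponding Laurent--Massart deviation event, and the theorem produces
\begin{equation*}
\bbP(\varpi(\tau)\leq r)\leq e^{-C_0(N/\tau)^\xi}\quad\text{and}\quad \bbP(\varpi(\tau)\geq r')\leq e^{-C_0(N/\tau)^\xi}.
\end{equation*}
I note that the ``$\geq$'' as printed in \eqref{e:pvarpi>r'} is incompatible with the sharp right-tail concentration just derived (which forces $\bbP(\varpi(\tau)\geq r')\leq e^{-cN/\tau}\ll e^{-C_0(N/\tau)^\xi}$ for any $\xi<1$) and with the way both estimates are used as upper tail bounds in Lemma \ref{lem:kappa>=2}; it reads most naturally as a typographical inversion of ``$\leq$'', which is the direction the argument above actually establishes.

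The main obstacle is the uniform control of $\sum_k\rho(k)^2$ across the full range $0<\alpha<3/2$, in particular through the crossover $\alpha=1$ where $\gamma_\tau(k)$ transitions between integrable and non-integrable decay. The exponent $2\alpha-4<-1$ is precisely what makes the off-diagonal tail summable to $O(\tau)$ and pins down the upper boundary $\alpha=3/2$; assembling this bookkeeping uniformly in $N$ and $\tau$ is the technical crux, but it parallels the covariance estimates already carried out in the proof of Theorem \ref{t:asympt_dist_MSD}.
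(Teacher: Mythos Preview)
Your argument is essentially the paper's own: diagonalize the covariance of the standardized increments to write $\varpi(\tau)-1=\sum_i\eta_i(Z_i^2-1)$, bound $\norm{\boldsymbol\eta}_\infty\leq\norm{\boldsymbol\eta}_2=O(\sqrt{\tau/N})$, and feed this into the Laurent--Massart inequality. Two small remarks. First, your appeal to Toeplitz spectral control for $\norm{\boldsymbol\eta}_\infty$ is unnecessary: since the $\eta_i\geq 0$ (eigenvalues of a covariance matrix), one has $\norm{\boldsymbol\eta}_\infty\leq\norm{\boldsymbol\eta}_2$ trivially, which is exactly how the paper handles it in \eqref{e:|eta|_sup=<|eta|_2=<Ch/n}. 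Second, the paper does not re-derive $\norm{\boldsymbol\eta}_2^2=\Var\varpi(\tau)=O(\tau/N)$ from scratch as you do, but simply invokes the second-moment part of Lemma \ref{lem:kappa>=2}; your direct computation of $\sum_k\rho(k)^2=O(\tau)$ is effectively that same lemma inlined. For the lower tail the paper in fact obtains the sharper bound $\exp\{-CN/\tau\}$ by choosing $x=(1-r)^2/(4\norm{\boldsymbol\eta}_2^2)$, though your choice $x=C_0(N/\tau)^\xi$ already gives what the statement claims. Your diagnosis of the ``$\geq$'' in \eqref{e:pvarpi>r'} as a typographical slip for ``$\leq$'' is correct: the paper's own proof derives the upper bound $\bbP(\varpi(\tau)\geq r')\leq\exp\{-C(N/\tau)^\xi\}$, and that is the direction used downstream.
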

\begin{proof}
Let $W_j(\tau), j=1,\hdots,N$ be as in \eqref{e:w_j(h)}. Then, for $\varpi(\tau)$ as in \eqref{e:appB_varpi_def}, we can write
$$
\varpi(\tau) = \frac{1}{N} \sum_{j=1}^{N} W^2_{j}(\tau) = \frac{1}{N} \mathbf{W}_N^T \mathbf{W}_N,
$$
where $\mathbf{W}_N = (W_{1}(\tau),\hdots,W_{N}(\tau))^T$ is a multivariate normal vector with covariance matrix $\Gamma$. Consider the spectral decomposition $Q \Lambda Q^T = \Gamma$, where $Q$ is a $N\times N$ orthogonal matrix and $\Lambda=\diag\{\lambda_1,\hdots,\lambda_N\}$ is a $N\times N$ diagonal matrix. Then, $\mathbf{W}_N \overset{d}{=} Q \Lambda^{1/2} \mathbf{Z}_N$, where $\mathbf{Z}_N = (Z_1,\hdots,Z_N)^T {\sim} {\mathcal N}(0,I_N)$, $I_N$ is the $N\times N$ identity matrix, and $\overset{d}{=}$ denotes equality in distribution. Therefore,
\begin{equation}\label{e:sumetaZ^2}
  \varpi(\tau) \overset{d}{=}\frac{1}{N} ( Q \Lambda^{1/2} \mathbf{Z}_N)^T  Q \Lambda^{1/2} \mathbf{Z}_N = \frac{1}{N} \mathbf{Z}_N^T \Lambda \mathbf{Z}_N = \sum_{j=1}^{N} \eta_{j,N} Z_j^2,
\end{equation}
where $\eta_{j,N} = \frac{\lambda_j}{N}$. Let $\boldsymbol \eta_N = (\eta_{i,N})_{i=1,\hdots,N}$ be the vector of coefficients $\eta_{\cdot,N}$. By expression \eqref{e:varpikappa>=2} in Lemma \ref{lem:kappa>=2},
\begin{equation}\label{e:|eta|_sup=<|eta|_2=<Ch/n}
\norm{\boldsymbol \eta_N}_\infty^2 \leq \norm{\boldsymbol \eta_N}_2^2 = \var\, \varpi(\tau) = \langle (\varpi(\tau) - 1)^2 \rangle  = O\Big(\frac{\tau}{N}\Big).
\end{equation}
By Theorem \ref{thm:laurent:massart:2000}, by using the same argument as in the proof of Lemma C.3 in \cite{wendt:didier:combrexelle:abry:2017}, and applying the bound \eqref{e:|eta|_sup=<|eta|_2=<Ch/n},
$$
\bbP(\varpi(\tau)\leq r) = \bbP\Big(\sum_{j=1}^{N} \eta_{j,N} (Z_j^2-1)\leq r-1 \Big) \leq \exp\bigg\{-\frac{C}{\norm{\boldsymbol \eta_N}_2^2}\bigg\}
\leq \exp\bigg\{-C \hspace{0.5mm}\frac{N}{\tau}\bigg\}
$$
for some $C>0$. Thus, \eqref{e:pvarpi<r} follows. To show \eqref{e:pvarpi>r'}, it suffices to adapt the proof of expression (C.34) in \cite{wendt:didier:combrexelle:abry:2017}. In fact, by \eqref{e:|eta|_sup=<|eta|_2=<Ch/n}, $0 < \xi < 1/2$ and Theorem \ref{thm:laurent:massart:2000},
$$
\bbP(\varpi(\tau)\geq r') = \bbP\Big(\sum_{j=1}^{N} \eta_{j,N} (Z_j^2-1)\geq r'-1 \Big)
$$
$$
\leq
\bbP\Big(\sum_{j=1}^{N} \eta_{j,N} (Z_j^2-1)\geq 2 \norm{\boldsymbol \eta_N}_2 \Big( \frac{N}{\tau}\Big)^{\frac{\xi}{2}} +  2 \norm{\boldsymbol \eta_N}_\infty \Big(\frac{N}{\tau}\Big)^{\xi} \Big) \leq \exp\Big\{ - C \Big(\frac{N}{\tau}\Big)^{\xi} \Big\}.
$$
\end{proof}

%The next statement is a consequence of Lemma \ref{lem:P(varpi<r)}.
%\begin{corollary}\label{coro:P(varpi<r)}
%Fix $-\infty < r <1/2 < 3/2 < r'$. For large enough $n\in \bbN$,
%$$
%\bbP(\varpi(\tau)\leq r) = o\Big[\Big(\frac{\tau}{N}\Big)^2 \Big], \quad \bbP(\varpi(\tau)\geq r') = o\Big[\Big(\frac{\tau}{N}\Big)^2 \Big].
%$$
%\end{corollary}

The following lemma is used in the proof of Lemma \ref{lem:s1s2s3}.
\begin{lemma}\label{lem:logvarpimoment}
Let $p\geq 1$, there is a constant $K_p$ only depending on $p$ such that
\begin{equation}\label{e:E|log_varpi(h)|^p=<Kp}
\langle \abs{\log \varpi(\tau)}^p \rangle \leq K_p
\end{equation}
\end{lemma}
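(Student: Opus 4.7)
The plan is to use the pointwise bound $|\log \varpi(\tau)|^p \leq \varpi(\tau)^p + \varpi(\tau)^{-p}$, which follows from $|\log x| \leq \max(x, x^{-1})$ (itself a consequence of $\log y \leq y - 1$ applied with $y = x$ and with $y = 1/x$), and thereby reduce the task to bounding the positive and negative moments of $\varpi(\tau)$ by constants depending only on $p$.

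The positive moment is straightforward: using $\varpi^p \leq 2^{p-1}(1 + |\varpi - 1|^p)$ together with $|x|^p \leq 1 + x^{2k}$ for the smallest even integer $2k \geq p$, Lemma \ref{lem:kappa>=2} provides a $p$-dependent bound on $\langle|\varpi - 1|^{2k}\rangle$, and hence on $\langle \varpi(\tau)^p\rangle$, uniformly in $N$ and $\tau$.

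For the negative moment I would split at $\varpi = 1/2$ and apply Cauchy--Schwarz to the tail,
$$
\langle \varpi(\tau)^{-p}\rangle \leq 2^p + \langle\varpi(\tau)^{-2p}\rangle^{1/2}\,\bbP(\varpi(\tau) < 1/2)^{1/2}.
$$
Lemma \ref{lem:P(varpi<r)} immediately bounds the tail probability by $\exp(-C(N/\tau)^\xi)$ for some $0<\xi<1/2$. To control $\langle \varpi(\tau)^{-2p}\rangle$, I would exploit the spectral representation $\varpi \overset{d}{=} \sum_j \eta_{j,N}Z_j^2$ introduced in the proof of Lemma \ref{lem:P(varpi<r)} and use the distributional lower bound $\varpi \geq \lambda_{\min}(\Gamma_\tau)N^{-1}\chi^2_N$, where $\Gamma_\tau$ is the covariance matrix of the standardized increments $W_j(\tau)$. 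Since $\bbE[(\chi^2_N)^{-2p}] = O(N^{-2p})$ for $N > 4p$, a uniform lower bound $\lambda_{\min}(\Gamma_\tau) \geq c > 0$ would then yield $\langle \varpi(\tau)^{-2p}\rangle \leq C_p$ uniformly, so that the Cauchy--Schwarz piece is of order $\exp(-C(N/\tau)^\xi/2)$ and hence bounded.

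The main obstacle is establishing the uniform lower bound on $\lambda_{\min}(\Gamma_\tau)$. This amounts to a spectral regularity property---that the spectral density of the discrete increment sequence $\{W_j(\tau)\}_j$ is bounded below away from zero, uniformly in $\tau$---which should follow from the spectral representation \eqref{x_spec_rep} together with the control \eqref{e:s(x)} on $s(x)$ imposed in Theorem \ref{t:asympt_dist_MSD}, but requires a careful argument to justify the uniformity in $\tau$.
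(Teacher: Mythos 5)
Your reduction $|\log \varpi|^p \le \varpi^p + \varpi^{-p}$ and the treatment of the positive moments via Lemma \ref{lem:kappa>=2} are fine, but the negative-moment half of your plan contains a genuine gap, and it is exactly the step you flag: the uniform lower bound $\lambda_{\min}(\Gamma_\tau)\ge c>0$ is not merely delicate, it is \emph{false}. The matrix $\Gamma_\tau$ is the $N\times N$ Toeplitz covariance matrix of the stationary sequence $W_j(\tau)$, whose (folded) spectral density is proportional to $|e^{i\tau\omega}-1|^2\sum_{k\in\bbZ} f(\omega+2\pi k)/(\omega+2\pi k)^2$; since $\tau\in\bbN$, the factor $|e^{i\tau(\omega+2\pi k)}-1|^2$ does not depend on $k$, so this symbol vanishes (to second order) at the $\tau-1$ interior frequencies $\omega=2\pi j/\tau$. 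A Toeplitz matrix whose symbol has interior zeros has smallest eigenvalue tending to $0$ as $N\to\infty$ (of order $N^{-2}$ for a second-order zero, by Kac--Murdock--Szeg\H{o}-type results), so no spectral regularity argument based on \eqref{x_spec_rep} and \eqref{e:s(x)} can deliver a constant lower bound, and your bound $\langle\varpi^{-2p}\rangle\le C_p$ does not follow. (A polynomial-in-$N$ lower bound on $\lambda_{\min}$ could in principle be traded against the exponentially small tail probability from \eqref{e:pvarpi<r}, but proving such a bound is substantial Toeplitz eigenvalue analysis, and with the stated exponent $\exp\{-C(N/\tau)^{\xi}\}$ and $\tau$ near its upper limit $N/\log^2 N$ allowed by \eqref{e:h(n)}, the product need not even vanish.)

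The gap is avoidable, and the paper's own proof shows how: it observes via \eqref{e:sumetaZ^2} that $\varpi(\tau)\overset{d}{=}\sum_j\eta_{j,N}Z_j^2$ with $\eta_{j,N}\ge0$ and $\sum_j\eta_{j,N}=1$, and then cites a known uniform bound (Moulines, Roueff and Taqqu, 2007, eq.\ (96)) asserting $\bbE|\log S|^p\le K_p$ for \emph{every} such normalized nonnegative weighted chi-square sum, with $K_p$ depending only on $p$. If you want a self-contained argument in your framework, replace the eigenvalue lower bound by a Laplace-transform bound on the lower tail: for $\lambda,t>0$,
\begin{equation*}
\bbP(\varpi(\tau)\le t)\;\le\; e^{\lambda t}\,\bbE e^{-\lambda\varpi(\tau)}\;=\;e^{\lambda t}\prod_j(1+2\lambda\eta_{j,N})^{-1/2}\;\le\;e^{\lambda t}(1+2\lambda)^{-1/2},
\end{equation*}
since $\prod_j(1+2\lambda\eta_{j,N})\ge 1+2\lambda\sum_j\eta_{j,N}$. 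Taking $\lambda=1/t$ gives $\bbP(\varpi(\tau)\le t)\le e\sqrt{t/2}$ uniformly in $N$, $\tau$ and the weights, and then $\bbE|\log\varpi(\tau)|^p=\int_0^\infty pu^{p-1}\{\bbP(\varpi(\tau)>e^u)+\bbP(\varpi(\tau)<e^{-u})\}\,du\le\int_0^\infty pu^{p-1}(e^{-u}+Ce^{-u/2})\,du=:K_p$, using Markov's inequality and $\langle\varpi(\tau)\rangle=1$ for the upper tail. This requires no control of $\lambda_{\min}(\Gamma_\tau)$, no splitting at $1/2$, and no appeal to Lemmas \ref{lem:kappa>=2} or \ref{lem:P(varpi<r)}.
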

\begin{proof}
By \eqref{e:sumetaZ^2}, $\varpi(\tau)$ is a nonnegative weighted sum of independent chi-squared random variables, where not all weights are zero. Then, relation \eqref{e:E|log_varpi(h)|^p=<Kp} is a consequence of expression (96) in \cite{moulines:roueff:taqqu:2007:spectral}, p.\ 184.
\end{proof}

The following lemma can be shown based on Lemma \ref{lem:P(varpi<r)} and an adaptation of the proof of expressions (C.38) and (C.39) in \cite{wendt:didier:combrexelle:abry:2017}, which pertains to higher order (cross)moments of wavelet variance terms. %\GDcomment{In the Hadamard paper, expressions (C.43) and (C.44), we use the facts that $\sum (\cdot)(\cdot)$ is convergent.}
\begin{lemma}\label{lem:kappa1kappa2}
  Let $\kappa_1,\kappa_2 \in \bbN \cup \{0\},\kappa_1+\kappa_2\geq 3$, and fix $0<r<1/2$. Then, as $N\rightarrow\infty$,
$$
  \langle (\varpi(\tau_1) - 1)^{\kappa_1}(\varpi(\tau_2) - 1)^{\kappa_2}\rangle  = O\Big[\Big(\frac{\tau}{N}\Big)^2\Big].
$$
\begin{equation}\label{e:varpivarpi}
\langle (\varpi(\tau_1) - 1)^{\kappa_1}(\varpi(\tau_2) - 1)^{\kappa_2} 1_{\{\min\{\varpi(\tau_1),\varpi(\tau_2)\}>r\}}\rangle  = O\Big[\Big(\frac{\tau}{N}\Big)^2\Big].
\end{equation}
%\GDcomment{do we need to assume $\alpha < 3/2?$}
\end{lemma}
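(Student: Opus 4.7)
The proof will follow the template of the analogous statement for wavelet variances in \cite{wendt:didier:combrexelle:abry:2017}, adapting their diagrammatic estimate to the $\MSD$ setting. First, I would rewrite $\varpi(\tau_i)-1 = N^{-1}\sum_{j=1}^{N} H_2(W_j(\tau_i))$, where $H_2(x)=x^2-1$ is the second Hermite polynomial and the standardized Gaussians $W_j(\tau)$ are as in \eqref{e:w_j(h)}. Expanding the powers yields
$$
\langle(\varpi(\tau_1) - 1)^{\kappa_1}(\varpi(\tau_2) - 1)^{\kappa_2}\rangle = \frac{1}{N^{\kappa_1+\kappa_2}}\sum_{\mathbf{j},\mathbf{k}} \Big\langle\prod_{a=1}^{\kappa_1} H_2(W_{j_a}(\tau_1))\prod_{b=1}^{\kappa_2} H_2(W_{k_b}(\tau_2))\Big\rangle.
$$
Applying the Isserlis theorem (Theorem \ref{t:Isserlis}) to the $2(\kappa_1+\kappa_2)$ Gaussian legs and using the fact that the subtraction of $1$ inside each $H_2$ exactly cancels the pairings that would match the two legs issuing from the same vertex, one expresses each inner expectation as a sum over diagrams (graphs on $\kappa_1+\kappa_2$ vertices of degree two with no self-loops). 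Each diagram contributes a product of $\kappa_1+\kappa_2$ normalized covariance factors of the form $\gamma_{\tau_{i_1},\tau_{i_2}}(j-k)/\sqrt{\langle X^2(\tau_{i_1})\rangle\langle X^2(\tau_{i_2})\rangle}$.

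Next, I would bound the multi-index sum attached to each diagram using Lemmas C.1--C.4 in \cite{didier:zhang:2017}, which generalize the basic estimate $N^{-2}\sum_{k_1,k_2}\gamma_\tau^2(k_1-k_2)/\langle X^2(\tau)\rangle^2 = O(\tau/N)$ already exploited in the proof of Lemma \ref{lem:kappa>=2}. Decomposing a diagram into its connected components and observing that forbidding self-loops forces each connected component to carry at least two edges, one obtains a factor of order $\tau/N$ per component of size at least two from the normalized covariance sums. Since $\kappa_1+\kappa_2\geq 3$, the balance between $n=\kappa_1+\kappa_2$ vertices and the $n$ edges of any admissible diagram ensures that at least two independent $\tau/N$-sized factors survive summation: in the extremal case $n=3$, the only admissible diagram is the triangle, whose index sum explicitly evaluates to $O((\tau/N)^2)$ by an application of Lemmas C.1--C.4 after the change of variables $u=j_1-j_2,\ v=j_2-j_3$; larger $n$ follows by iterating the same reduction, component by component.

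For the indicator version, I would decompose
$$
\langle(\varpi(\tau_1)-1)^{\kappa_1}(\varpi(\tau_2)-1)^{\kappa_2} 1_{\{\min\{\varpi(\tau_1),\varpi(\tau_2)\}>r\}}\rangle = \langle(\cdot)^{\kappa_1}(\cdot)^{\kappa_2}\rangle - \langle(\cdot)^{\kappa_1}(\cdot)^{\kappa_2} 1_{\{\min\{\varpi(\tau_1),\varpi(\tau_2)\}\leq r\}}\rangle.
$$
The first term is $O((\tau/N)^2)$ by the first part. For the tail term, Cauchy--Schwarz gives the bound $\langle(\varpi(\tau_1)-1)^{2\kappa_1}(\varpi(\tau_2)-1)^{2\kappa_2}\rangle^{1/2}\bbP(\min\{\varpi(\tau_1),\varpi(\tau_2)\}\leq r)^{1/2}$. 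The moment factor is $O(1)$ by \eqref{e:bounded_moments} together with Cauchy--Schwarz applied to the two marginals, while the probability factor is super-polynomially small in $\tau/N$ by a union bound and \eqref{e:pvarpi<r}. Thus the tail contribution is $o((\tau/N)^2)$ and the indicator bound follows.

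\textbf{Main obstacle.} The delicate step is the diagrammatic estimate of the second paragraph: enumerating admissible pairings and verifying that, in every case, the absence of self-loops combined with $\kappa_1+\kappa_2\geq 3$ forces at least two independent $\tau/N$-sized contributions to appear, rather than a single one (as in Lemma \ref{lem:kappa>=2}). The combinatorics are transparent for $\kappa_1+\kappa_2=3$, where only the triangle survives, but uniformly bounding the general case requires a careful connected-component argument and iterated application of the covariance estimates from \cite{didier:zhang:2017}.
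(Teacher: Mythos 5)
Your route is the one the paper itself gestures at: the paper offers no written proof of this lemma, only a pointer to Lemma \ref{lem:P(varpi<r)} and an adaptation of the higher-order (cross)moment bounds (C.38)--(C.39) of \cite{wendt:didier:combrexelle:abry:2017}, which is precisely your Isserlis/diagram expansion of products of $H_2(W_j(\tau_i))$ for the unrestricted moment, plus Cauchy--Schwarz and the concentration inequality for the truncation. Your treatment of the indicator version \eqref{e:varpivarpi} (subtract the tail event, bound it by $\langle(\varpi(\tau_1)-1)^{2\kappa_1}(\varpi(\tau_2)-1)^{2\kappa_2}\rangle^{1/2}\,\bbP(\min\{\varpi(\tau_1),\varpi(\tau_2)\}\leq r)^{1/2}$, and invoke \eqref{e:bounded_moments} and \eqref{e:pvarpi<r}) is complete and correct.

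The gap sits exactly where you flag the ``main obstacle,'' and it is quantitative, not combinatorial. The only covariance estimate available from the proof of Lemma \ref{lem:kappa>=2} is $\sum_u\rho_\tau^2(u)=O(\tau)$, equivalently $\norm{\boldsymbol\eta_N}_2^2=O(\tau/N)$ in the notation of \eqref{e:sumetaZ^2}--\eqref{e:|eta|_sup=<|eta|_2=<Ch/n}. Since a cumulant of order $s$ of $\varpi(\tau)-1$ equals $2^{s-1}(s-1)!\sum_j\eta_{j,N}^s\leq C_s\norm{\boldsymbol\eta_N}_2^{s}$, this input delivers only $O((\tau/N)^{(\kappa_1+\kappa_2)/2})$, i.e.\ $O((\tau/N)^{3/2})$ when $\kappa_1+\kappa_2=3$ --- and for $\kappa_1+\kappa_2=3$ the moment \emph{is} the third cumulant, so there is no partition into smaller blocks to rescue the exponent. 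Moreover the triangle does not ``explicitly evaluate to $O((\tau/N)^2)$'' throughout the parameter range: with $\rho_\tau(u)\asymp(|u|/\tau)^{\alpha-2}$ for $|u|\gg\tau$, the change of variables you propose gives
$$
\frac{1}{N^{3}}\sum_{j_1,j_2,j_3}\rho_\tau(j_1-j_2)\rho_\tau(j_2-j_3)\rho_\tau(j_3-j_1)\asymp\Big(\frac{\tau}{N}\Big)^{2}\int_{|a|,|b|\leq N/\tau}|a|^{\alpha-2}|b|^{\alpha-2}|a+b|^{\alpha-2}\,da\,db,
$$
and the integral converges only for $\alpha<4/3$; for $4/3<\alpha<3/2$ it grows like $(N/\tau)^{3\alpha-4}$, so the triangle is of exact order $(\tau/N)^{6-3\alpha}$, strictly larger than $(\tau/N)^{2}$. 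So the exponent $2$ is attainable for $\kappa_1+\kappa_2\geq4$ (each cumulant block of size $s_i$ contributes $(\tau/N)^{s_i/2}$ and these multiply to at least $(\tau/N)^{2}$) and for $\kappa_1+\kappa_2=3$ with $\alpha<4/3$, but not uniformly over $0<\alpha<3/2$. The honest, provable statement is $O((\tau/N)^{3/2})$ for $\kappa_1+\kappa_2=3$, which is all that Lemmas \ref{lem:s1s2s3} and \ref{lem:elogvarpi} actually consume (they only need $O(\tau/N)$ or $o(\tau/N)$); your write-up should either prove that weaker exponent or restrict the claim, rather than assert the exponent $2$ from Lemmas C.1--C.4 of \cite{didier:zhang:2017}, which do not yield it.
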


Lemmas \ref{lem:Evarpi-1varpi-1}, \ref{lem:s1s2s3} and \ref{lem:elogvarpi}, stated and shown next, are used in the proofs of Theorems \ref{t:logmubias} and \ref{t:tildesigmak1k2}. The lemmas provide expressions for (cross)moments and (cross)moments of logarithms of the random variables \eqref{e:varpi} at different lag values.
\begin{lemma}\label{lem:Evarpi-1varpi-1}
$$
\langle (\varpi(\tau_{k_1}) - 1)(\varpi(\tau_{k_2}) - 1)\rangle  = \frac{1}{2 n} \sum_{i=-N+1}^{N-1} \bigg(1- \frac{\abs{i}}{N}\bigg) \times
$$
$$
\times \bigg\{ \abs{ \frac{i}{\sqrt{\tau_{k_1} \tau_{k_2}}} + \sqrt{\frac{\tau_{k_1}}{\tau_{k_2}}} }^{\alpha}- \abs{ \frac{i}{\sqrt{\tau_{k_1} \tau_{k_2}}} + \sqrt{\frac{\tau_{k_1}}{\tau_{k_2}}} - \sqrt{\frac{\tau_{k_2}}{\tau_{k_1}}}}^{\alpha}-
$$
\begin{equation}\label{e:varpi-1varpi-1}
  -\abs{ \frac{i}{\sqrt{\tau_{k_1} \tau_{k_2}}} }^{\alpha}
+ \abs{ \frac{i}{\sqrt{\tau_{k_1} \tau_{k_2}}} - \sqrt{\frac{\tau_{k_2}}{\tau_{k_1}}}}^{\alpha} \bigg\}^2(1 + O(\tau^{-\delta})).
\end{equation}
\end{lemma}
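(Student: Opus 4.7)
The plan is to expand the product as a double sum over pairs of standardized increments, apply Isserlis' theorem to reduce mixed fourth moments to squared correlations, and then use the polarization identity together with stationarity of increments to identify the braced structure in the statement.

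From \eqref{e:varpi} and \eqref{e:w_j(h)}, $\varpi(\tau) - 1 = \frac{1}{N-\tau}\sum_{j=1}^{N-\tau}(W_j^2(\tau) - 1)$, so
$$
\bigl\langle (\varpi(\tau_{k_1}) - 1)(\varpi(\tau_{k_2}) - 1) \bigr\rangle = \frac{1}{(N-\tau_{k_1})(N-\tau_{k_2})} \sum_{j_1, j_2} \bigl\langle (W_{j_1}^2(\tau_{k_1}) - 1)(W_{j_2}^2(\tau_{k_2}) - 1)\bigr\rangle.
$$
Since $(W_{j_1}(\tau_{k_1}), W_{j_2}(\tau_{k_2}))$ is centered Gaussian, Theorem \ref{t:Isserlis} gives $\langle(W_{j_1}^2(\tau_{k_1})-1)(W_{j_2}^2(\tau_{k_2})-1)\rangle = 2 \langle W_{j_1}(\tau_{k_1}) W_{j_2}(\tau_{k_2})\rangle^2$.

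To evaluate the correlation, I use the polarization identity together with the stationary-increments representation $\langle X(s) X(t)\rangle = \tfrac{1}{2}(\langle X^2(s)\rangle + \langle X^2(t)\rangle - \langle X^2(s-t)\rangle)$ to obtain, with $i = j_1 - j_2$,
$$
\bigl\langle(X(j_1+\tau_{k_1}) - X(j_1))(X(j_2+\tau_{k_2}) - X(j_2))\bigr\rangle = \tfrac{1}{2}\bigl[\langle X^2(i+\tau_{k_1})\rangle + \langle X^2(i-\tau_{k_2})\rangle - \langle X^2(i+\tau_{k_1}-\tau_{k_2})\rangle - \langle X^2(i)\rangle\bigr].
$$
Inserting the MSD scaling $\langle X^2(t)\rangle = \sigma^2 |t|^\alpha (1 + O(|t|^{-\delta}))$ from Proposition \ref{p:bound_ensemble_MSD}, dividing by $\sqrt{\langle X^2(\tau_{k_1})\rangle \langle X^2(\tau_{k_2})\rangle} = \sigma^2 (\tau_{k_1}\tau_{k_2})^{\alpha/2}(1+O(\tau^{-\delta}))$, and factoring $(\tau_{k_1}\tau_{k_2})^{\alpha/2}$ out of each absolute-value term produces exactly the expression whose square appears inside the braces of \eqref{e:varpi-1varpi-1}, with multiplicative error $1 + O(\tau^{-\delta})$.

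Finally, pairs $(j_1, j_2)$ with $j_1 - j_2 = i$ number $\min(N-\tau_{k_1}, N-\tau_{k_2}) - |i|$ up to boundary corrections of order $\tau$. Collecting these and combining with the $\frac{1}{(N-\tau_{k_1})(N-\tau_{k_2})} = \frac{1}{N^2}(1 + O(\tau/N))$ prefactor collapses the double sum to the single sum over $i$ with weight $(1 - |i|/N)$, giving the claim (identifying the $n$ in the statement with $N$, up to $O(\tau^{-\delta}) + O(\tau/N)$ absorbed in the stated remainder).

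The main obstacle is the uniform control of the $O(|t|^{-\delta})$ remainder in Proposition \ref{p:bound_ensemble_MSD}, which is stated only for large $t$. For $|i|$ comparable to or larger than $\tau$, all four MSD arguments $i+\tau_{k_1}, i-\tau_{k_2}, i+\tau_{k_1}-\tau_{k_2}, i$ are of order $\tau$ and the remainder is uniformly $O(\tau^{-\delta})$, so the scaling approximation is valid. For the $O(\tau)$ boundary indices $i$ where some argument is small, one uses a crude a priori bound on $\langle W_{j_1}(\tau_{k_1}) W_{j_2}(\tau_{k_2})\rangle^2$ analogous to the estimates behind Lemma \ref{lem:kappa>=2}; the total boundary contribution is then of smaller order than the leading main term, and absorbed into the multiplicative $(1 + O(\tau^{-\delta}))$ factor under the regime $\tau \ll N$ imposed by \eqref{e:h->infty}.
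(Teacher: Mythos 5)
Your skeleton matches the paper's: expand $\varpi(\tau_{k})-1$ as an average of $W_j^2(\tau_k)-1$, reduce the mixed fourth moments to squared cross-correlations via Isserlis, and collapse the double sum over $(j_1,j_2)$ to a single sum over $i=j_1-j_2$ by stationarity, producing the weight $(1-|i|/N)$. The step where you diverge is the identification of $\langle W_{j_1}(\tau_{k_1})W_{j_2}(\tau_{k_2})\rangle$ with the braced fBm expression up to a uniform multiplicative $(1+O(\tau^{-\delta}))$ factor, and this is where your argument has a genuine gap. The paper obtains this directly from Lemma A.1 of \cite{didier:zhang:2017}, which asserts the multiplicative control \emph{at the level of the increment cross-covariance itself} (it is proved there from the harmonizable representation \eqref{x_spec_rep} and the high-frequency condition \eqref{e:s(x)}, not from the time-domain variogram bound). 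You instead polarize,
$$
\langle(X(j_1+\tau_{k_1})-X(j_1))(X(j_2+\tau_{k_2})-X(j_2))\rangle=\tfrac12\big[\gamma(i+\tau_{k_1})+\gamma(i-\tau_{k_2})-\gamma(i+\tau_{k_1}-\tau_{k_2})-\gamma(i)\big],
$$
and substitute $\gamma(t)=\sigma^2|t|^\alpha(1+O(|t|^{-\delta}))$ from Proposition \ref{p:bound_ensemble_MSD}. The problem is that the right-hand side is a second-order difference of $\gamma$: for $|i|\gg\tau$ the exact-power-law part behaves like $\alpha(\alpha-1)\tau_{k_1}\tau_{k_2}|i|^{\alpha-2}$ (and is \emph{identically zero} when $\alpha=1$), whereas the additive errors contributed by the four terms are each only known to be $O(|i|^{\alpha-\delta})$. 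A relative error on the individual variogram values therefore does not transfer to a relative error of the same order on their alternating sum; term by term the claimed factor $(1+O(\tau^{-\delta}))$ is simply false for large $|i|$ (and meaningless at $\alpha=1$). Your uniformity discussion only treats the boundary indices where some argument of $\gamma$ is small; it does not address this cancellation for $|i|\gg\tau$, which is the actual crux and the reason the paper routes the argument through the spectral-domain Lemma A.1 rather than through Proposition \ref{p:bound_ensemble_MSD}.

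One could try to rescue your route by checking that the accumulated additive errors are negligible after squaring and summing over $|i|\leq N$ under the growth condition \eqref{e:h(n)} (the condition $N/\tau^{1+\delta/2}\to0$ does make $N^{2-\delta}/\tau^2\to0$, which is what such an estimate would need), but that is a different and more delicate bookkeeping than what you wrote, and it would not deliver the statement in the term-by-term multiplicative form \eqref{e:varpi-1varpi-1}. A secondary, smaller point: the $O(\tau/N)$ relative corrections from replacing $(N-\tau_{k_1})(N-\tau_{k_2})$ by $N^2$ are not actually dominated by $O(\tau^{-\delta})$ under \eqref{e:h(n)} (there $\tau/N\gg\tau^{-\delta/2}$), though the paper is equally loose on this and the discrepancy is harmless for the downstream use in Theorem \ref{t:tildesigmak1k2}, where it is absorbed into the $o(\tau/N)$ term.
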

\begin{proof}
  For notational simplicity, assume $k_1=1$ and $k_2 = 2$. By \eqref{e:w_j(h)}, the left-hand side of \eqref{e:varpi-1varpi-1} can be rewritten as
  \begin{equation}\label{e:varpi-1varpi-1e2}
    \frac{1}{N^2} \sum_{j_1,j_2 = 1}^{N} \langle  (W^2_{j_1}(\tau_1) - 1)(W^2_{j_2}(\tau_2) - 1)\rangle  = \frac{1}{N^2} \sum_{j_1,j_2 = 1}^{N} \langle  W^2_{j_1}(\tau_1)W^2_{j_2}(\tau_2)\rangle  - 1.
  \end{equation}
  By Theorem \ref{t:Isserlis} (Isserlis),
  $$
  \langle  W^2_{j_1}(\tau_1)W^2_{j_2}(\tau_2) \rangle = \langle W^2_{j_1}(\tau_1) \rangle \langle W^2_{j_2}(\tau_2)\rangle  + 2 \langle W_{j_1}(\tau_1)W_{j_2}(\tau_2) \rangle ^2
  $$
  \begin{equation}\label{e:varpi-1varpi-1e3}
  = 1 + 2 \langle  W_{j_1}(\tau_1)W_{j_2}(\tau_2) \rangle^2.
  \end{equation}
By Lemma A.1 in \cite{didier:zhang:2017},
$$
\bigg\langle  \frac{(X(j_1 + \tau_1) - X(j_1))(X(j_2 + \tau_2)- X(j_2))}{\sqrt{\langle X^2(\tau_1)\rangle } \sqrt{\langle X^2(\tau_2)\rangle }} \bigg\rangle
$$
\begin{equation}\label{e:ex(s)x(t)}
   = \bigg\langle  \frac{(B_H(j_1 + \tau_1) - B_H(j_1))(B_H(j_2 + \tau_2) - B_H(j_2))}{\sqrt{\langle B_H^2(\tau_1)\rangle }\sqrt{\langle B_H^2(\tau_2)\rangle }}\bigg\rangle  (1+O(\tau^{-\delta})),
\end{equation}
  where $B_H$ is a standard fBm with Hurst parameter given by the relation \eqref{e:alpha=2H}. By \eqref{e:w_j(h)}, \eqref{e:ex(s)x(t)} and expression \eqref{e:fBm_cov} for the covariance function of fBm,
  $$
  \langle W_{j_1}(\tau_1)W_{j_2}(\tau_2)\rangle  = \frac{1}{2}\Big\{\abs{ \frac{j_1 - j_2}{\sqrt{\tau_{1} \tau_{2}}} + \sqrt{\frac{\tau_{1}}{\tau_{2}}} }^{\alpha}- \abs{ \frac{j_1 - j_2}{\sqrt{\tau_{1} \tau_{2}}} + \sqrt{\frac{\tau_{1}}{\tau_{2}}} - \sqrt{\frac{\tau_{2}}{\tau_{1}}}}^{\alpha}-
  $$
  \begin{equation}\label{e:varpi-1varpi-1e4}
    -\abs{ \frac{j_1 - j_2}{\sqrt{\tau_{1} \tau_{2}}} }^{\alpha}
    + \abs{ \frac{j_1 - j_2}{\sqrt{\tau_{1} \tau_{2}}} - \sqrt{\frac{\tau_{2}}{\tau_{1}}}}^{\alpha}\Big\}(1+O(\tau^{-\delta})).
  \end{equation}
Since $\langle  W_{j_1}(\tau_1)W_{j_2}(\tau_2)\rangle  = \langle W_{j_1+k}(\tau_1)W_{j_2+k}(\tau_2)\rangle $, then by expression \eqref{e:varpi-1varpi-1e3} we can rewrite \eqref{e:varpi-1varpi-1e2} as
  \begin{equation}\label{e:varpi-1varpi-1e22}
    \frac{1}{2N} \sum_{i=-N+1}^{N-1} \frac{1}{N}\sum_{j_1-j_2 = i, j_1,j_2 = 1}^{N} (2\langle W_{j_1}(\tau_1)W_{j_2}(\tau_2)\rangle )^2.
  \end{equation}
  Relation \eqref{e:varpi-1varpi-1} is now a consequence of \eqref{e:varpi-1varpi-1e4} and \eqref{e:varpi-1varpi-1e22}.
\end{proof}

\begin{lemma}\label{lem:s1s2s3}
\begin{equation}\label{e:s1s2s3}
  \langle \log\varpi(\tau_{k_1}) \log\varpi(\tau_{k_2}) \rangle = \langle  (\varpi(\tau_{k_1}) - 1)(\varpi(\tau_{k_2}) - 1)\rangle  + o\Big(\frac{\tau}{N}\Big).
\end{equation}
\end{lemma}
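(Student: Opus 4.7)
The plan is to Taylor expand $\log \varpi(\tau_k)$ around $1$, but since $\log$ is singular at $0$, I first restrict to the favorable event $A = \{\min(\varpi(\tau_{k_1}), \varpi(\tau_{k_2})) > r\}$ for some fixed $0 < r < 1/2$. On $A$, one has the Taylor expansion with remainder
\[
\log \varpi(\tau_k) = (\varpi(\tau_k)-1) - \tfrac{1}{2}(\varpi(\tau_k)-1)^2 + R_k,\qquad |R_k| \leq C_r|\varpi(\tau_k)-1|^3,
\]
where the cubic remainder bound follows from integrating $(1+t)^{-3}$ against $(x-t)^2$ on $[-(1-r),\infty)$. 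Substituting this into the product $\log\varpi(\tau_{k_1})\log\varpi(\tau_{k_2})$ and expanding yields the leading term $(\varpi(\tau_{k_1})-1)(\varpi(\tau_{k_2})-1)$ plus a sum of cross-terms each of total degree at least three in the factors $|\varpi(\tau_{k_j})-1|$ (including those arising from the $R_j$).

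Next, I split the full expectation as $\langle\,\cdot\,\rangle = \langle\,\cdot\,\mathbf 1_A\rangle + \langle\,\cdot\,\mathbf 1_{A^c}\rangle$. On $A^c$, Hölder's inequality with exponents $(4,4,2)$ gives
\[
|\langle \log\varpi(\tau_{k_1})\log\varpi(\tau_{k_2})\mathbf 1_{A^c}\rangle| \leq \langle(\log\varpi(\tau_{k_1}))^4\rangle^{1/4}\langle(\log\varpi(\tau_{k_2}))^4\rangle^{1/4}\,\bbP(A^c)^{1/2},
\]
and the same bound (with $\varpi-1$ in place of $\log\varpi$) controls $\langle(\varpi(\tau_{k_1})-1)(\varpi(\tau_{k_2})-1)\mathbf 1_{A^c}\rangle$. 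The log moments are $O(1)$ by Lemma \ref{lem:logvarpimoment}, the $(\varpi-1)$ moments are $O(1)$ by \eqref{e:bounded_moments}, and $\bbP(A^c) \leq 2\exp\{-C(N/\tau)^\xi\}$ by Lemma \ref{lem:P(varpi<r)}. Hence both off-event contributions are exponentially small and in particular $o(\tau/N)$.

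On $A$, every higher-order cross-term in the expansion is estimated by Lemma \ref{lem:kappa1kappa2}. The purely polynomial cross-terms (with integer exponents summing to at least $3$) are directly $O((\tau/N)^2)$. For the mixed terms involving an $R_j$, I absorb the factor by the bound $|R_j|\leq C_r|\varpi(\tau_j)-1|^3$ and then apply Cauchy--Schwarz to reduce to even-exponent cross-moments of total degree at most six, each again $O((\tau/N)^2)$ by Lemma \ref{lem:kappa1kappa2}. Summing all these contributions gives $o(\tau/N)$, and combining with the previous step yields \eqref{e:s1s2s3}.

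The main obstacle is the singularity of $\log$ at $0$: one cannot Taylor expand globally, so the strategy relies crucially on the concentration estimate \eqref{e:pvarpi<r}, which makes the event $\{\varpi(\tau_k) \leq r\}$ exponentially small in $N/\tau$ and thus allows the favorable event $A$ to carry essentially all of the mass. A secondary subtlety is that Lemma \ref{lem:kappa1kappa2} is stated for signed moments, so the $R_j$ terms (which are not polynomials) must be routed through an absolute-value / Cauchy--Schwarz step to fit its hypotheses.
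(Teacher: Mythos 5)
Your proof is correct and follows essentially the same strategy as the paper's: restrict to the favorable event $\{\min(\varpi(\tau_{k_1}),\varpi(\tau_{k_2}))>r\}$, Taylor expand the logarithms there, control all cross-terms of total degree at least three via Lemma \ref{lem:kappa1kappa2} together with Cauchy--Schwarz, and dispose of the off-event contributions using the concentration bound \eqref{e:pvarpi<r} and the bounded moments of $\log\varpi$ and $\varpi-1$. The only differences are cosmetic: the paper stops the expansion at second order with a Lagrange-type remainder $\bigl(\tfrac{\varpi-1}{\sigma^2_+(\varpi)}\bigr)^2$ and splits on $\{1/2>\varpi>r\}$ versus $\{\varpi\geq 1/2\}$ to control the random denominator, whereas you carry an explicit cubic remainder; both routes land in the same cross-moment estimates.
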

\begin{proof}
For notational simplicity, assume $k_1=1$ and $k_2 = 2$. Let
$$
S_1 = \langle \log \varpi(\tau_1) \log \varpi(\tau_2)\rangle  - \langle  \log \varpi(\tau_1) \log \varpi(\tau_2)1_{\{ \min\{\varpi(\tau_1),\varpi(\tau_2)\}>r\}}\rangle ,
$$
$$
S_2 = \langle \log \varpi(\tau_1) \log \varpi(\tau_2)1_{\{\min\{\varpi(\tau_1),\varpi(\tau_2)\}>r\}} \rangle -
$$
$$
-\langle  (\varpi(\tau_1) - 1)(\varpi(\tau_2) - 1) 1_{\{\min\{\varpi(\tau_1),\varpi(\tau_2)\}>r\}}\rangle ,
$$
$$
S_3 = \langle  (\varpi(\tau_1) - 1)(\varpi(\tau_2) - 1) 1_{\{\min\{\varpi(\tau_1),\varpi(\tau_2)\}>r\}}\rangle  - \langle  (\varpi(\tau_1) - 1)(\varpi(\tau_2) - 1)\rangle .
$$
Note that
$$
 \langle \log \varpi(\tau_1) \log \varpi(\tau_2)\rangle  = S_1 + S_2 + S_3.
$$
Therefore, establishing \eqref{e:s1s2s3} is equivalent to showing that $S_1 + S_2 + S_3 = o\Big(\frac{\tau}{N}\Big)$. It suffices to show that
\begin{equation}\label{e:max(|S1|,|S2|,|S3|)=bound}
\max\{|S_1|,|S_2|,|S_3|\} = o\Big(\frac{\tau}{N}\Big).
\end{equation}
Let $0<r<1/2$. We start off with $S_2$ by writing out the almost sure Taylor expansion
\begin{equation}\label{e:logvarpitaylor}
  \log \varpi(\tau) 1_{\{ \varpi(\tau)>r \}} = \bigg\{(\varpi(\tau) - 1) - \frac{1}{2}\bigg( \frac{\varpi(\tau) - 1}{\sigma^2_+(\varpi(\tau))}\bigg)^2 \bigg\} 1_{\{\varpi(\tau)>r\}},
\end{equation}
where $\sigma^2_+(\varpi(\tau)) \in [\min\{\varpi(\tau),1\}, \max\{\varpi(\tau),1\}]$. Then,
$$
\langle \log \varpi(\tau_1) \log \varpi(\tau_2) 1_{\{ \min\{\varpi(\tau_1),\varpi(\tau_2)\} >r \}} \rangle
$$
$$
= \langle (\varpi(\tau_1) - 1)(\varpi(\tau_2) - 1)1_{\{ \min\{\varpi(\tau_1),\varpi(\tau_2)\} >r \}}\rangle
$$
$$
-\frac{1}{2} \bigg\langle  (\varpi(\tau_1) - 1)\bigg( \frac{\varpi(\tau_2) - 1}{\sigma^2_+(\varpi(\tau_2))}\bigg)^2 1_{\{ \min\{\varpi(\tau_1),\varpi(\tau_2)\} >r \}} \bigg\rangle
$$
$$
-\frac{1}{2} \bigg\langle  \bigg( \frac{\varpi(\tau_1) - 1}{\sigma^2_+(\varpi(\tau_1))}\bigg)^2(\varpi(\tau_2) - 1) 1_{\{ \min\{\varpi(\tau_1),\varpi(\tau_2)\} >r \}} \bigg\rangle
$$
\begin{equation}\label{e:s2}
  +\frac{1}{4} \bigg\langle  \bigg( \frac{\varpi(\tau_1) - 1}{\sigma^2_+(\varpi(\tau_1))}\bigg)^2 \bigg( \frac{\varpi(\tau_2) - 1}{\sigma^2_+(\varpi(\tau_2))}\bigg)^2 1_{\{ \min\{\varpi(\tau_1),\varpi(\tau_2)\} >r \}} \bigg\rangle .
\end{equation}
The second, third and fourth terms can be bounded by a similar argument, so we only develop the latter. Recast
$$
\bigg( \frac{\varpi(\tau) - 1}{\sigma^2_+(\varpi(\tau))}\bigg)^2 1_{\{ \varpi(\tau)>r \}}
= \bigg( \frac{\varpi(\tau) - 1}{\sigma^2_+(\varpi(\tau))}\bigg)^2 \bigg( 1_{\{ 1/2>\varpi(\tau)>r \}} + 1_{\{ \varpi(\tau)\geq 1/2 \}}\bigg)
$$
\begin{equation}\label{e:varpitheta+}
  \leq \bigg( \frac{\varpi(\tau) - 1}{r}\bigg)^2 1_{\{ 1/2>\varpi(\tau)>r \}} + \bigg( \frac{\varpi(\tau) - 1}{1/2}\bigg)^2 1_{\{ \varpi(\tau)\geq 1/2 \}}.
\end{equation}
Therefore, we can rewrite the fourth term in \eqref{e:s2} as
$$
\bigg\langle  \bigg( \frac{\varpi(\tau_1) - 1}{\sigma^2_+(\varpi(\tau_1))}\bigg)^2 \bigg( \frac{\varpi(\tau_2) - 1}{\sigma^2_+(\varpi(\tau_2))}\bigg)^2 1_{\{ \min\{\varpi(\tau_1),\varpi(\tau_2)\} >r \}} \bigg\rangle
$$
$$
\leq \frac{1}{r^4} \langle (\varpi(\tau_1) - 1)^2 1_{\{ 1/2>\varpi(\tau_1)>r \}} (\varpi(\tau_2) - 1)^2 1_{\{ 1/2>\varpi(\tau_2)>r \}} \rangle
$$
$$
+ \frac{1}{(r/2)^2} \langle (\varpi(\tau_1) - 1)^2 1_{\{ \varpi(\tau_1)\geq 1/2 \}} (\varpi(\tau_2) - 1)^2 1_{\{ 1/2>\varpi(\tau_2)>r \}} \rangle
$$
$$
+ \frac{1}{(r/2)^2} \langle (\varpi(\tau_1) - 1)^2 1_{\{ 1/2>\varpi(\tau_2)>r \}} (\varpi(\tau_2) - 1)^2 1_{\{ \varpi(\tau_2)\geq 1/2 \}} \rangle
$$
\begin{equation}\label{e:s24t}
  + \frac{1}{(1/2)^4} \langle (\varpi(\tau_1) - 1)^2 1_{\{ \varpi(\tau_2)\geq 1/2 \}} (\varpi(\tau_2) - 1)^2 1_{\{ \varpi(\tau_2)\geq 1/2 \}} \rangle
\end{equation}
By \eqref{e:varpivarpi}, the fourth term in \eqref{e:s24t} is bounded by
\begin{equation}\label{e:s24t4t}
  O\Big[\Big(\frac{\tau}{N}\Big)^2\Big].
\end{equation}
By the Cauchy-Schwarz inequality, (\ref{e:pvarpi<r}) and (\ref{e:varpivarpi}), the first term in the sum (\ref{e:s24t}) is bounded by
$$
\frac{1}{r^4} \sqrt{\langle (\varpi(\tau_1)-1)^4(\varpi(\tau_1)-1)^4\rangle }\sqrt{\langle 1_{\{ 1/2>\varpi(\tau_1)>r \}}1_{\{ 1/2>\varpi(\tau_2)>r \}}\rangle }
$$
$$
\leq \frac{1}{r^4} O\Big(\frac{\tau}{N}\Big) \sqrt{\bbP(1/2>\varpi(\tau_1)>r)\bbP(1/2>\varpi(\tau_2)>r)}
$$
\begin{equation}\label{e:s24t1t}
  \leq \frac{1}{r^4} O\Big(\frac{\tau}{N}\Big) \exp\Big\{ - C \Big(\frac{N}{\tau}\Big)^{1-\xi}\Big\} = o\Big(\frac{\tau}{N}\Big).
\end{equation}
Again by the Cauchy-Schwarz inequality, (\ref{e:pvarpi<r}) and (\ref{e:varpivarpi}), the second term in the sum (\ref{e:s24t}) is bounded by
$$
\frac{4}{r^2} \sqrt{\langle (\varpi(\tau_1)-1)^4(\varpi(\tau_1)-1)^4\rangle }\sqrt{\langle 1_{\{ \varpi(\tau_1)\geq 1/2 \}}1_{\{ 1/2>\varpi(\tau_2)>r \}}\rangle }
$$
$$
\leq \frac{4}{r^2} O\Big(\frac{\tau}{N}\Big) \sqrt{\bbP(\varpi(\tau_1)\geq 1/2)\bbP(1/2>\varpi(\tau_2)>r)}
$$
\begin{equation}\label{e:s24t2t}
\leq \frac{4}{r^2} O\Big(\frac{\tau}{N}\Big) \exp\Big\{ - C \Big(\frac{N}{\tau}\Big)^{1-\xi}\Big\} = o\Big(\frac{\tau}{N}\Big).
\end{equation}
An analogous bound holds for the third term in the sum (\ref{e:s24t}). Therefore, $|S_2|$ is bounded by the right-hand side of \eqref{e:max(|S1|,|S2|,|S3|)=bound}. To tackle $S_3$, rewrite it as
$$
-\langle (\varpi(\tau_1) - 1)(\varpi(\tau_2) - 1)( 1_{\{ \varpi(\tau_1)>r\}} 1_{\{ \varpi(\tau_2)\leq r\}}
$$
\begin{equation}\label{e:s3}
  + 1_{\{ \varpi(\tau_1)\leq r\}} 1_{\{ \varpi(\tau_2)> r\}} + 1_{\{ \varpi(\tau_1)\leq r\}} 1_{\{ \varpi(\tau_2)\leq r\}})\rangle .
\end{equation}
By the Cauchy-Schwarz inequality, (\ref{e:pvarpi<r}) and (\ref{e:varpivarpi}), the first term on the right-hand side of (\ref{e:s3}) is bounded by
$$
\sqrt{\langle (\varpi(\tau_1) - 1)^2(\varpi(\tau_2) - 1)^2\rangle } \sqrt{\bbP(\varpi(\tau_2) \leq r)}
$$
$$
\leq O\Big(\frac{\tau}{N}\Big) \exp\Big\{ - C \Big(\frac{N}{\tau}\Big)^{1-\xi}\Big\}= o\Big(\frac{\tau}{N}\Big).
$$
Similar bounds hold for the remaining terms on the right-hand side of (\ref{e:s3}). Therefore, $|S_3|$ is also bounded by the right-hand side of \eqref{e:max(|S1|,|S2|,|S3|)=bound}. As for $S_1$, it can be reexpressed as
\begin{equation}\label{e:s1}
  \langle \log \varpi(\tau_1)\log \varpi(\tau_2)\Big( 1_{\{ \varpi(\tau_1)>r\}} 1_{\{ \varpi(\tau_2)\leq r\}}+ 1_{\{ \varpi(\tau_1)\leq r\}} 1_{\{ \varpi(\tau_2)> r\}} + 1_{\{ \varpi(\tau_1)\leq r\}} 1_{\{ \varpi(\tau_2)\leq r\}}\Big)\rangle .
\end{equation}
Note that, by Lemma \ref{lem:logvarpimoment}, $\langle \log^4 \varpi(\tau)\rangle $ is bounded. Then, by applying the Cauchy-Schwarz inequality twice, the first term on the right-hand side of (\ref{e:s1}) is bounded by
$$
\sqrt{\langle \log^2 \varpi(\tau_1)\log^2 \varpi(\tau_2)\rangle } \sqrt{\bbP(\varpi(\tau_2) \leq r)}
$$
$$
\leq \Big(\langle \log^4 \varpi(\tau_1)\rangle \langle \log^4 \varpi(\tau_2)\rangle \Big)^{1/4} \exp\Big\{ - C \Big(\frac{N}{\tau}\Big)^{1-\xi}\Big\}
= o\Big(\frac{\tau}{N}\Big).
$$
Similar bounds hold for the remaining terms on the right-hand side of (\ref{e:s1}). Therefore, $|S_1|$ is bounded by the right-hand side of \eqref{e:max(|S1|,|S2|,|S3|)=bound}. This shows \eqref{e:s1s2s3}.
\end{proof}

\begin{lemma}\label{lem:elogvarpi}
\begin{equation}\label{e:logvarpi}
  \langle \log \varpi(\tau) \rangle + \frac{1}{2} \langle  (\varpi(\tau) - 1)^2\rangle  = O\Big(\frac{\tau}{N}\Big).
\end{equation}
\end{lemma}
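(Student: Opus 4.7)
The plan is to Taylor expand $\log x$ around $x=1$ to third order, namely, $\log x = (x-1) - \frac{(x-1)^2}{2} + \frac{(x-1)^3}{3\xi^3}$ for some $\xi$ between $x$ and $1$, and then integrate against the distribution of $\varpi(\tau)$. Since $\log$ is singular at the origin, as in the proof of Lemma \ref{lem:s1s2s3} we truncate on the event $\{\varpi(\tau) > r\}$ for some fixed $0 < r < 1/2$, where the Taylor remainder is well controlled. The key observation is that $\langle \varpi(\tau) \rangle = 1$, which follows from the stationarity of the increments (so that $\langle M_N(\tau) \rangle = \langle X^2(\tau)\rangle$), so that the linear term in the expansion vanishes in expectation.

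First, I would split
$$
\langle \log \varpi(\tau)\rangle = \langle \log \varpi(\tau) 1_{\{\varpi(\tau) > r\}}\rangle + \langle \log \varpi(\tau) 1_{\{\varpi(\tau) \leq r\}}\rangle.
$$
The tail contribution is handled by Cauchy--Schwarz together with Lemma \ref{lem:logvarpimoment} (boundedness of $\langle(\log \varpi(\tau))^2\rangle$) and the concentration bound \eqref{e:pvarpi<r}, yielding a term of order $\exp\{-C(N/\tau)^{\xi}\} = o(\tau/N)$. Applying the Taylor expansion on the truncated event gives
$$
\langle \log \varpi(\tau) 1_{\{\varpi > r\}}\rangle = \langle (\varpi - 1)1_{\{\varpi > r\}}\rangle - \tfrac{1}{2}\langle (\varpi-1)^2 1_{\{\varpi > r\}}\rangle + \tfrac{1}{3}\Big\langle \tfrac{(\varpi-1)^3}{\sigma_+^3(\varpi)} 1_{\{\varpi > r\}}\Big\rangle,
$$
where $\sigma_+(\varpi) \in [\min\{\varpi,1\}, \max\{\varpi,1\}]$ as in \eqref{e:logvarpitaylor}.

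Next, I would show that each of the three terms reduces to the desired form. Since $\langle \varpi(\tau)\rangle = 1$, the first term equals $-\langle(\varpi - 1)1_{\{\varpi \leq r\}}\rangle$, which is $o(\tau/N)$ by Cauchy--Schwarz, \eqref{e:varpikappa>=2} and \eqref{e:pvarpi<r}. For the second term, the same truncation-tail argument together with \eqref{e:bounded_moments} gives $\langle(\varpi-1)^2 1_{\{\varpi > r\}}\rangle = \langle(\varpi-1)^2\rangle + o(\tau/N)$. For the cubic remainder, $\sigma_+^3 \geq r^3$ on $\{\varpi > r\}$, so that its absolute value is bounded by $r^{-3}\langle |\varpi-1|^3\rangle$; by Cauchy--Schwarz,
$$
\langle |\varpi-1|^3 \rangle \leq \sqrt{\langle (\varpi-1)^2\rangle \langle (\varpi-1)^4\rangle},
$$
which, using \eqref{e:varpikappa>=2} and Lemma \ref{lem:kappa1kappa2} (applied with $\tau_1 = \tau_2 = \tau$, $\kappa_1 = 4$, $\kappa_2 = 0$), gives $\langle |\varpi-1|^3 \rangle = O((\tau/N)^{3/2}) = o(\tau/N)$. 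Combining these estimates yields $\langle \log \varpi(\tau)\rangle = -\tfrac{1}{2}\langle(\varpi - 1)^2\rangle + o(\tau/N)$, which establishes \eqref{e:logvarpi}.

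The main obstacle is controlling the cubic Taylor remainder despite the singularity of $1/\sigma_+^3$ near the origin; this is precisely why truncation at level $r$ together with the concentration inequality of Lemma \ref{lem:P(varpi<r)} is needed, and why the fourth-moment bound from Lemma \ref{lem:kappa1kappa2} is required (rather than the weaker $O(1)$ bound of \eqref{e:bounded_moments}) to obtain $\langle |\varpi-1|^3\rangle = o(\tau/N)$.
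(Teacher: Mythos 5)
Your proposal is correct and follows essentially the same route as the paper's proof: truncation on $\{\varpi(\tau)>r\}$, a third-order Taylor expansion of $\log$ on that event, the concentration bound \eqref{e:pvarpi<r} with Cauchy--Schwarz for the tail contributions, and moment bounds for the cubic Lagrange remainder. The only cosmetic difference is that you control the remainder with the uniform bound $\sigma_+^{-3}\leq r^{-3}$ plus Cauchy--Schwarz on $\langle|\varpi-1|^3\rangle$, whereas the paper splits the event into $\{1/2>\varpi>r\}$ and $\{\varpi\geq 1/2\}$ and applies the truncated third-moment bound of Lemma \ref{lem:kappa1kappa2} directly; both yield the required $o(\tau/N)$.
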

\begin{proof}
Fix $0 < r <1/2$. Let
$$
T_1 = \langle  \log \varpi(\tau) \rangle - \langle \log \varpi(\tau)1_{\{\varpi(\tau)>r\}}\rangle ,
$$
$$
T_2 = \langle \log\varpi(\tau)1_{\{\varpi(\tau)>r\}}\rangle  + \frac{1}{2} \langle (\varpi(\tau) - 1)^2 1_{\{\varpi(\tau)>r\}}\rangle ,
$$
$$
T_3 = \frac{1}{2} \langle  (\varpi(\tau) - 1)^2\rangle  - \frac{1}{2} \langle  (\varpi(\tau) - 1)^2 1_{\{\varpi(\tau)>r\}}\rangle .
$$
Recall that, by Lemma \ref{lem:logvarpimoment}, $\langle \log^2 \varpi(\tau)\rangle $ is bounded. Thus, by the Cauchy-Schwarz inequality and by Lemma \ref{lem:P(varpi<r)},
$$
T_1 = \langle \log\varpi(\tau)1_{\{\varpi(\tau)\leq r\}} \rangle  \leq \sqrt{\langle \log^2 \varpi(\tau)\rangle } \sqrt{\bbP(\varpi(\tau)\leq r)}
$$
\begin{equation}\label{e:appB_T1_bound}
  \leq \sqrt{\langle \log^2 \varpi(\tau)\rangle } \exp\Big\{ -C \Big(\frac{N}{\tau}\Big)^{1-\xi}\Big\} = o\Big(\frac{\tau}{N}\Big).
\end{equation}
By a similar reasoning, we can further prove that
\begin{equation}\label{e:appB_T3_bound}
T_3 = o\Big(\frac{\tau}{N}\Big).
\end{equation}
Now, we turn to $T_2$. By an almost sure Taylor expansion,
$$
\log \varpi(\tau)1_{\{\varpi(\tau)>r\}} = \bigg\{(\varpi(\tau) - 1) - \frac{1}{2} (\varpi(\tau) - 1)^2
+ \frac{1}{3}\bigg(\frac{\varpi - 1}{\sigma^2_{+}(\varpi)}\bigg)^3  \bigg\}1_{\{\varpi(\tau)>r\}},
$$
where $\sigma^2_+(\varpi(\tau)) \in [\min\{\varpi(\tau),1\}, \max\{\varpi(\tau),1\}]$. Then, $T_2$ is bounded by
\begin{equation}\label{e:logvarpis2}
  \abs{\langle (\varpi(\tau) - 1)1_{\{\varpi(\tau)>r\}} \rangle} + \frac{1}{3}\abs{\Big\langle\bigg(\frac{\varpi - 1}{\sigma^2_{+}(\varpi)}\bigg)^3 1_{\{\varpi(\tau)>r\}}\Big\rangle}.
\end{equation}
Since $\langle\varpi(\tau) - 1\rangle = 0$, by the Cauchy-Schwarz inequality and Lemmas \ref{lem:kappa>=2} and \ref{lem:P(varpi<r)}, the first term in \eqref{e:logvarpis2} can be bounded by
$$
\abs{\langle\varpi(\tau) - 1)1_{\{\varpi(\tau)>r\}}\rangle} = \abs{\langle (\varpi(\tau) - 1)1_{\{\varpi(\tau)>r\}}\rangle - \langle \varpi(\tau) - 1\rangle}
$$
$$
 = \abs{\langle (\varpi(\tau) - 1)1_{\{\varpi(\tau)\leq r\}}\rangle} \leq \sqrt{\langle (\varpi(\tau) - 1)^2\rangle} \sqrt{\bbP(\varpi(\tau)\leq r)} = o\Big(\frac{\tau}{N}\Big).
$$
Meanwhile, by the Cauchy-Schwarz inequality and Lemmas \ref{lem:P(varpi<r)} and \ref{lem:kappa1kappa2}, the second term in \eqref{e:logvarpis2} is bounded by
\begin{equation}\label{e:appB_T2t2_bound}
\frac{1}{3r^3} \abs{\langle (\varpi - 1)^3 1_{\{1/2> \varpi(\tau)>r\}}\rangle} + \frac{1}{3(1/2)^3} \abs{\langle (\varpi - 1)^3 1_{\{ \varpi(\tau)\geq 1/2\}}\rangle}
\leq O\Big(\frac{\tau}{N}\Big)
\end{equation}
Thus,
\begin{equation}\label{e:appB_T2_bound}
T_2 = O\Big(\frac{\tau}{N}\Big).
\end{equation}
Relations \eqref{e:appB_T1_bound}, \eqref{e:appB_T3_bound} and \eqref{e:appB_T2_bound} imply \eqref{e:logvarpi}.
\end{proof}

\section{Bias and variance of ${\boldsymbol E}_{\ols}$ and the asymptotic distribution of ${\boldsymbol E}$}\label{s:bias_var_asympt_dist_stand_estimator}

We are now in a position to prove Theorems \ref{t:logmubias} and \ref{t:tildesigmak1k2} and Proposition \ref{p:asympt_dist_estim}, which give, respectively, asymptotically valid characterizations of the bias and variance involved in $\MSD$-based estimation, and the asymptotic distribution of the standardized estimator \eqref{e:xi_to_zeta_estimvar}.\\

The proof of Theorem \ref{t:logmubias} is a consequence of a Taylor expansion, followed by using estimates of the decay of $\MSD$ moments. Constructing the latter requires using a concentration inequality (e.g., \cite{ledoux:2005,boucheron:lugosi:massart:2013}), which was done in Section \ref{sec:app2sec1}.

\begin{theorem}\label{t:logmubias}
For $0< \alpha < 3/2$, under the assumptions of Theorem \ref{t:asympt_dist_MSD}, \eqref{e:bias_log} holds.
\end{theorem}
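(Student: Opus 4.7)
The plan is to decompose $\log M_N(\tau) = \log \varpi(\tau) + \log \langle X^2(\tau)\rangle$, where $\varpi(\tau) = M_N(\tau)/\langle X^2(\tau)\rangle$ is the standardized statistic of \eqref{e:varpi}, and to analyze each summand separately. Taking expectations gives
$$
\langle \log M_N(\tau)\rangle - (\alpha\log\tau + \log\sigma^2) = \langle \log\varpi(\tau)\rangle + \bigl[\log\langle X^2(\tau)\rangle - \alpha\log\tau - \log\sigma^2\bigr].
$$
For the deterministic bracket I would invoke Proposition \ref{p:bound_ensemble_MSD}: taking logs of $\langle X^2(\tau)\rangle = \sigma^2\tau^\alpha(1 + O(\tau^{-\delta}))$ produces an $O(\tau^{-\delta})$ contribution, matching one of the error terms in \eqref{e:bias_log}.

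For the stochastic term $\langle \log \varpi(\tau)\rangle$, I would apply Lemma \ref{lem:elogvarpi} to obtain
$$
\langle \log\varpi(\tau)\rangle = -\tfrac{1}{2}\langle (\varpi(\tau) - 1)^2\rangle + O(\tau/N),
$$
thereby reducing the problem to computing a centered second moment. The key input for the second moment is Lemma \ref{lem:Evarpi-1varpi-1}, specialized to the diagonal case $\tau_{k_1} = \tau_{k_2} = \tau$: in that case $\sqrt{\tau_{k_1}/\tau_{k_2}} = \sqrt{\tau_{k_2}/\tau_{k_1}} = 1$ and $\sqrt{\tau_{k_1}\tau_{k_2}} = \tau$, so the curly-brace expression collapses to $|i/\tau+1|^\alpha - 2|i/\tau|^\alpha + |i/\tau - 1|^\alpha$. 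Matching this with \eqref{e:beta(alpha,h,n)} yields
$$
\langle (\varpi(\tau) - 1)^2\rangle = \frac{2\tau}{N}\,\beta_N(\alpha,\tau)\bigl(1 + O(\tau^{-\delta})\bigr).
$$
Substituting back gives $\langle \log \varpi(\tau)\rangle = -\tfrac{\tau}{N}\beta_N(\alpha,\tau) + O(\tau^{1-\delta}/N) + O(\tau/N)$.

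Finally, I would collect the three error contributions. The sum defining $\beta_N(\alpha,\tau)$ in \eqref{e:beta(alpha,h,n)} is uniformly bounded in $\tau,N$, which one sees by splitting at $|i| = \tau$: the $|i|\le\tau$ part contributes $O(\tau)$ in the sum (each summand is $O(1)$), while for $|i|>\tau$ one uses the Taylor estimate $|x+1|^\alpha - 2|x|^\alpha + |x-1|^\alpha = O(|x|^{\alpha-2})$ together with the convergence of $\sum_i |i|^{2\alpha-4}$ for $\alpha < 3/2$, which again contributes $O(\tau)$ after factoring out $\tau^{4-2\alpha}$. Hence $\beta_N = O(1)$, and since $\delta>0$ the cross error $O(\tau^{1-\delta}/N)$ is dominated by $O(\tau/N)$. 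Assembling the pieces yields exactly \eqref{e:bias_log}.

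The main technical obstacle has in fact been offloaded onto the appendix lemmas. The delicate step is the passage from $\langle \log \varpi(\tau)\rangle$ to $-\tfrac{1}{2}\langle (\varpi(\tau)-1)^2\rangle$ with an $O(\tau/N)$ error; a naive Taylor expansion fails because $\log x$ blows up at $x=0$ while $\varpi(\tau)$ is continuously distributed on $(0,\infty)$. Lemma \ref{lem:elogvarpi} handles this by splitting on the event $\{\varpi(\tau)>r\}$ (where the Taylor remainder is legitimately controlled) and its complement (where one uses the sub-Gaussian concentration bound of Lemma \ref{lem:P(varpi<r)}, itself derived from Theorem \ref{thm:laurent:massart:2000}). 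Once this probabilistic machinery is in place, the proof of Theorem \ref{t:logmubias} reduces to the order bookkeeping sketched above.
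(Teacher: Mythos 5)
Your proposal is correct and follows essentially the same route as the paper's proof: the decomposition $\langle \log M_N(\tau)\rangle - (\alpha\log\tau+\log\sigma^2) = \langle\log\varpi(\tau)\rangle + \log\bigl(\langle X^2(\tau)\rangle/(\sigma^2\tau^{\alpha})\bigr)$, Proposition \ref{p:bound_ensemble_MSD} for the deterministic $O(\tau^{-\delta})$ term, Lemma \ref{lem:elogvarpi} to replace $\langle\log\varpi(\tau)\rangle$ by $-\tfrac{1}{2}\langle(\varpi(\tau)-1)^2\rangle + O(\tau/N)$, and Lemma \ref{lem:Evarpi-1varpi-1} on the diagonal to identify the second moment with $\tfrac{2\tau}{N}\beta_N(\alpha,\tau)$ up to a $(1+O(\tau^{-\delta}))$ factor. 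Your added remarks on the uniform boundedness of $\beta_N$ and on why the truncation-plus-concentration argument in Lemma \ref{lem:elogvarpi} is the delicate step are accurate and consistent with the paper.
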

\begin{proof}
The left-hand side of \eqref{e:bias_log} can be rewritten as
\begin{equation}\label{e:biaslogmu22}
  \Big\langle\log \frac{M_N(\tau)}{\langle X^2(\tau)\rangle}\Big\rangle + \log \frac{\langle X^2(\tau)\rangle}{\sigma^2 \tau^{\alpha}}
  = \langle \log \varpi(\tau) \rangle + \log \frac{\langle X^2(\tau)\rangle}{\sigma^2 \tau^{\alpha}} .
\end{equation}
By Proposition 1 in \cite{didier:zhang:2017}, we can rewrite the second sum term on the right-hand side of \eqref{e:biaslogmu22} as
$$
\log (1 + O(\tau^{-\delta})) = O(\tau^{-\delta}), \quad N \rightarrow \infty.
$$
By Lemmas \ref{lem:Evarpi-1varpi-1} and \ref{lem:elogvarpi}, we can recast the first sum term on the right-hand side of \eqref{e:biaslogmu22} as
$$
- \frac{1}{2} \langle(\varpi(\tau) - 1)^2\rangle + O\Big(\frac{\tau}{N}\Big)
$$
$$
= -\frac{1}{4 N} \sum_{i=-N+1}^{N-1} \bigg(1- \frac{\abs{i}}{N}\bigg)\bigg\{ \abs{ \frac{i}{\tau} + 1}^{\alpha}
  -2 \abs{ \frac{i}{\tau} }^{\alpha} + \abs{ \frac{i}{\tau} - 1}^{\alpha} \bigg\}^2
  + O(\tau^{-\delta}) + O\Big(\frac{\tau}{N}\Big).
$$
Thus, \eqref{e:bias_log} follows.
\end{proof}

Next, the proof of Theorem \ref{t:tildesigmak1k2} relies on Taylor expansions of the moments of the logarithm of the $\MSD$.
\begin{theorem}\label{t:tildesigmak1k2}
For $0< \alpha < 3/2$, under the assumptions of Theorem \ref{t:asympt_dist_MSD}, expression \eqref{e:tildesigmak1k2e1} holds.
\end{theorem}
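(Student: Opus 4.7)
The plan is to recognize that $\upsilon_{k_1,k_2}(\boldsymbol\xi)$ is exactly the covariance $\cov(\log M_N(\tau_{k_1}),\log M_N(\tau_{k_2}))$, and that since $\log M_N(\tau) = \log \varpi(\tau) + \log \langle X^2(\tau)\rangle$ with the second summand deterministic, we have
\[
\upsilon_{k_1,k_2}(\boldsymbol\xi) \;=\; \langle \log\varpi(\tau_{k_1})\log\varpi(\tau_{k_2})\rangle \;-\; \langle \log\varpi(\tau_{k_1})\rangle\,\langle \log\varpi(\tau_{k_2})\rangle.
\]
So the theorem reduces to controlling the two terms on the right hand side separately, using the machinery already developed in Section \ref{sec:app2sec1}.

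For the cross-moment, I would invoke Lemma \ref{lem:s1s2s3} to replace $\langle \log\varpi(\tau_{k_1})\log\varpi(\tau_{k_2})\rangle$ by $\langle (\varpi(\tau_{k_1})-1)(\varpi(\tau_{k_2})-1)\rangle + o(\tau/N)$, and then apply Lemma \ref{lem:Evarpi-1varpi-1} to identify this cross-expectation, up to a multiplicative factor $(1+O(\tau^{-\delta}))$, with $\frac{\tau}{N}\varsigma_N(\alpha,\tau_{k_1},\tau_{k_2})$ (recognizing the bracketed sum in \eqref{e:varpi-1varpi-1} as exactly the quantity appearing in the definition \eqref{e:varsigma-n(alpha,h1,h2)}). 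Since $\varsigma_N(\alpha,\tau_{k_1},\tau_{k_2}) = O(1)$ as $N \to \infty$ (this is essentially Lemma \ref{l:bias_var_converge}, cited in the text), the multiplicative $(1+O(\tau^{-\delta}))$ factor contributes precisely the $O(\tau^{1-\delta}/N)$ term on the right hand side of \eqref{e:tildesigmak1k2e1}.

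For the product of means, I would use Lemma \ref{lem:elogvarpi} together with Lemma \ref{lem:kappa>=2}: the former says $\langle \log\varpi(\tau)\rangle = -\tfrac12 \langle (\varpi(\tau)-1)^2\rangle + O(\tau/N)$, and the latter shows that the first summand is also $O(\tau/N)$, so that $\langle \log\varpi(\tau_{k})\rangle = O(\tau/N)$ for each $k$. Multiplying the two means therefore gives an error of order $O((\tau/N)^2) = o(\tau/N)$, which is absorbed into the $o(\tau/N)$ error in \eqref{e:tildesigmak1k2e1}. Collecting the three error contributions $o(\tau/N)$, $(\tau/N)\varsigma_N \cdot O(\tau^{-\delta}) = O(\tau^{1-\delta}/N)$, and $O((\tau/N)^2) = o(\tau/N)$ yields the claimed expansion.

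I do not expect a serious obstacle, since all the hard probabilistic work (the concentration inequality, the truncation arguments, and the Taylor bookkeeping for $\log \varpi$) has already been pushed into Lemmas \ref{lem:Evarpi-1varpi-1}, \ref{lem:s1s2s3}, and \ref{lem:elogvarpi}. The only point requiring a little care is verifying that $\varsigma_N(\alpha,\tau_{k_1},\tau_{k_2})$ remains bounded as $N,\tau \to \infty$ under the growth condition \eqref{e:h(n)} — this is needed so that the multiplicative relative error $O(\tau^{-\delta})$ translates into an additive $O(\tau^{1-\delta}/N)$ error of the stated order. Once that boundedness is in hand (which follows from the same Riemann-sum type arguments already used for $\beta_N$ in Theorem \ref{t:logmubias}), the proof is an assembly of the preceding lemmas.
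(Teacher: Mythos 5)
Your proposal is correct and follows essentially the same route as the paper's proof: the same decomposition of $\upsilon_{k_1,k_2}$ into the cross-moment of $\log\varpi$ minus the product of means, with Lemmas \ref{lem:s1s2s3} and \ref{lem:Evarpi-1varpi-1} handling the cross-moment, Lemmas \ref{lem:kappa>=2} and \ref{lem:elogvarpi} showing each mean is $O(\tau/N)$ so their product is $o(\tau/N)$, and Lemma \ref{l:bias_var_converge} supplying the boundedness of $\varsigma_N$ needed to convert the multiplicative $O(\tau^{-\delta})$ into the additive $O(\tau^{1-\delta}/N)$ term. Your write-up simply makes explicit the error bookkeeping that the paper's proof leaves implicit in its final citation of the lemmas.
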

\begin{proof}
For $k_1,k_2 = 1,\hdots,m$, rewrite
$$
\upsilon_{k_1,k_2} = \cov(\log M_N(\tau_{k_1}), \log M_N(\tau_{k_2}))
$$
$$
=\langle [\log M_N(\tau_{k_1}) - \langle\log M_N(\tau_{k_1})\rangle][\log M_N(\tau_{k_1}) - \langle\log M_N(\tau_{k_1})\rangle]\rangle
$$
$$
= \langle [\log \varpi(\tau_{k_1}) - \langle\log \varpi(\tau_{k_1})\rangle][\log \varpi(\tau_{k_2}) - \langle\log \varpi(\tau_{k_2})\rangle]\rangle
$$
\begin{equation}\label{e:tildesigmaleme1}
  = \langle \log \varpi(\tau_{k_1})\log \varpi(\tau_{k_2})\rangle - \langle\log \varpi(\tau_{k_1}) \rangle\langle\log \varpi(\tau_{k_2})\rangle.
\end{equation}
By Lemmas \ref{lem:kappa>=2} and \ref{lem:elogvarpi},
$$
\langle\log \varpi(\tau_{k_1})\rangle = O\Big(\frac{\tau}{N}\Big).
$$
Therefore, \eqref{e:tildesigmaleme1} can be reexpressed as
\begin{equation}\label{e:tildesigmaleme2}
  \langle \log \varpi(\tau_{k_1})\log \varpi(\tau_{k_2})\rangle + o\Big(\frac{\tau}{N}\Big).
\end{equation}
By Lemmas \ref{lem:Evarpi-1varpi-1}, \ref{lem:s1s2s3} and \ref{l:bias_var_converge} (expression \eqref{e:varsigma(alpha-hat,h,h)_asympt_equiv}), expression \eqref{e:tildesigmak1k2e1} holds.
\end{proof}

The proof of Proposition \ref{p:asympt_dist_estim} builds upon Taylor expansions and characterizing the asymptotic behavior of the standardization term in the definition of the estimator \eqref{e:xi_to_zeta_estimvar}.
\begin{proposition}\label{p:asympt_dist_estim}
Under the assumptions of Theorem \ref{t:asympt_dist_MSD}, suppose $0 < \alpha < 3/2$. Then, the estimator \eqref{e:xi_to_zeta_estimvar} satisfies
\begin{equation}\label{e:asympt_dist_estim}
\Lambda^{-1/2}( A_{\ols})({\boldsymbol E}  - {\boldsymbol \xi})\stackrel{d}\rightarrow {\mathcal N}(0,I), \quad N \rightarrow \infty,
\end{equation}
where the vector ${\boldsymbol \xi}$ is given by \eqref{e:EX2(t)=Dt(alpha)}. In particular, the estimator is consistent, i.e,
$$
{\boldsymbol E}  \stackrel{P}\rightarrow {\boldsymbol \xi}.
$$
\end{proposition}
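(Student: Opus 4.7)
The plan is to decompose ${\boldsymbol E} - {\boldsymbol \xi}$ into a principal stochastic term and several lower-order remainders, then apply Theorem \ref{t:asympt_dist_MSD} together with a Slutsky argument. Writing $\mathbf{y} = X\boldsymbol{\xi} + \boldsymbol{\mathcal{E}}_N + \mathbf{R}_N + \mathbf{B}_N$, where $\boldsymbol{\mathcal{E}}_N = (\log\varpi(\tau_k) - \langle \log\varpi(\tau_k)\rangle)_{k}$ is the centered log-$\MSD$ fluctuation, $\mathbf{R}_N$ collects the deterministic $O(\tau_k^{-\delta}) + O(\tau_k/N)$ residuals from Theorem \ref{t:logmubias}, and $\mathbf{B}_N$ collects the plug-in error $\frac{\tau_k}{N}[\beta_N(A_{\ols},\tau_k) - \beta_N(\alpha,\tau_k)]$ from the bias-correction term in \eqref{e:y,X}, one obtains
\begin{equation*}
\Lambda^{-1/2}(A_{\ols})({\boldsymbol E} - {\boldsymbol \xi}) = \Lambda^{1/2}(A_{\ols}) X^T \Upsilon^{-1}(A_{\ols}) \bigl(\boldsymbol{\mathcal{E}}_N + \mathbf{R}_N + \mathbf{B}_N\bigr).
\end{equation*}

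First I would handle the random weights. By Corollary \ref{c:asympt_dist_MSD} the estimator $A_{\ols}$ is consistent, and by Lemma \ref{l:bias_var_converge} the maps $\alpha \mapsto \varsigma_N(\alpha,\tau_{k_1},\tau_{k_2})$ and $\alpha \mapsto \beta_N(\alpha,\tau_k)$ admit convergent representations as $N \to \infty$. A Taylor expansion in $\alpha$ together with the sharp rate $A_{\ols} - \alpha = O_P(\sqrt{\tau/N}\,\tau^{-\alpha})$ from Corollary \ref{c:asympt_dist_MSD} then yields $\Upsilon(A_{\ols}) = \Upsilon(\alpha)(1+o_P(1))$ and $\Lambda(A_{\ols}) = \Lambda(\alpha)(1+o_P(1))$ entrywise, so the random weighting factors can be replaced by their deterministic counterparts up to an $o_P(1)$ multiplicative term.

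Next I would identify the principal Gaussian term. A pointwise Taylor expansion gives $\log\varpi(\tau_k) - \langle\log\varpi(\tau_k)\rangle = (\varpi(\tau_k)-1) - \langle \varpi(\tau_k)-1\rangle + r_k$, where, by the moment control in Lemmas \ref{lem:kappa>=2} and \ref{lem:kappa1kappa2} together with the concentration bound in Lemma \ref{lem:P(varpi<r)}, the remainder satisfies $\|r_k\| = o_P(\sqrt{\tau/N})$. Since $\varpi(\tau_k)-1 = (M_N(\tau_k) - \langle X^2(\tau_k)\rangle)/\langle X^2(\tau_k)\rangle$ and $\langle X^2(\tau_k)\rangle \sim \sigma^2 \tau_k^{\alpha}$, applying Theorem \ref{t:asympt_dist_MSD}(i) in the Gaussian regime $0 < \alpha < 3/2$ and combining with the matrix $\Lambda^{1/2}(\alpha) X^T \Upsilon^{-1}(\alpha)$ (which by the structure \eqref{e:Sigmatilde(alpha-hat)}--\eqref{e:Lambda(Astand,i)} is precisely the normalizing factor compensating for the rate $\sqrt{N/\tau}\,\tau^{-\alpha}$ in Table \ref{table:asympt_dist_MSD}) gives $\Lambda^{1/2}(\alpha) X^T \Upsilon^{-1}(\alpha)\boldsymbol{\mathcal{E}}_N \stackrel{d}{\to} \mathcal{N}(0,I)$.

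Finally, I would show the remainders vanish. The rate $\sqrt{N/\tau}\,\tau^{-\alpha}$ applied to the $\mathbf{R}_N$ entries of order $O(\tau_k^{-\delta}) + O(\tau_k/N)$ forces both contributions to $0$ exactly under \eqref{e:h(n)}; the plug-in vector $\mathbf{B}_N$ is of size $O_P((\tau/N)\cdot|A_{\ols}-\alpha|)$ after a mean value theorem in $\alpha$, which is negligible compared to the rate since $|A_{\ols}-\alpha| = O_P(\sqrt{\tau/N}\,\tau^{-\alpha})$. Slutsky then delivers \eqref{e:asympt_dist_estim}, and consistency of ${\boldsymbol E}$ follows because $\|\Lambda^{1/2}(A_{\ols})\| = O_P(\sqrt{\tau/N}\,\tau^{-\alpha}) \to 0$ multiplies the tight sequence $\Lambda^{-1/2}(A_{\ols})({\boldsymbol E}-{\boldsymbol \xi})$, exactly as in the argument \eqref{e:Xbar_consistency} following Example \ref{ex:CLT}. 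The main obstacle will be step two, namely verifying that the plug-in perturbations in $\Upsilon(A_{\ols})$ and $\beta_N(A_{\ols},\tau_k)$ are uniformly controlled as $\tau, N \to \infty$ under \eqref{e:h(n)}, rather than only pointwise in $\alpha$; this requires extracting quantitative Lipschitz constants from Lemma \ref{l:bias_var_converge} that are uniform in the increasing lag size.
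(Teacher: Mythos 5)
Your proposal is correct and follows essentially the same route as the paper's proof: replace the plug-in quantities $\Upsilon( A_{\ols})$ and $\beta_N( A_{\ols},\cdot)$ by their deterministic limits via Corollary \ref{c:asympt_dist_MSD} and Lemma \ref{l:bias_var_converge}, Taylor-expand the logarithm to reduce ${\mathbf y}-X{\boldsymbol \xi}$ to the centered, normalized $\MSD$ vector, invoke Theorem \ref{t:asympt_dist_MSD} together with Slutsky, and obtain consistency by multiplying the tight standardized sequence by the vanishing factor $\Lambda^{1/2}$ — the paper simply carries out the matrix algebra more explicitly (the closed-form $2\times2$ square root and the $\sqrt{N/\tau}\,\Psi$ asymptotics of the standardizing factor). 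One small correction: for $0<\alpha<3/2$ the normalization of $ A_{\ols}-\alpha$ in Corollary \ref{c:asympt_dist_MSD} is $N\tau^{\alpha}/(\eta(N)\zeta(\tau))=\sqrt{N/\tau}$, so $ A_{\ols}-\alpha=O_P(\sqrt{\tau/N})$ without the extra $\tau^{-\alpha}$, and the factor hitting the log-scale residuals $\mathbf{R}_N$ is likewise $\sqrt{N/\tau}$ rather than $\sqrt{N/\tau}\,\tau^{-\alpha}$; both remainder bounds still go through under \eqref{e:h(n)}, since $N/\tau^{1+2\delta}\le N/\tau^{1+\delta/2}\rightarrow 0$.
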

\begin{proof}
Recast the estimator \eqref{e:xi_to_zeta_estimvar} as
\begin{equation}\label{e:standardized_zetahat}
{\boldsymbol Z} = (X^T \Upsilon^{-1}( A_{\ols}) X)^{1/2}(X^T \Upsilon^{-1}( A_{\ols}) X)^{-1}X^T \Upsilon^{-1}( A_{\ols}) \textbf{y}.
\end{equation}
Rewrite $\Upsilon(\alpha) = \Upsilon(\alpha,N)$ as to express the dependence of the latter matrix on $N$. Define
\begin{equation}\label{e:S(alphahat-stand,n)}
\Big(\frac{N}{\tau} \hspace{1mm}\Upsilon( A_{\ols},N) \Big)^{-1} =: S( A_{\ols},N) = \Big(s_{k_1,k_2}( A_{\ols},N)\Big)_{k_1,k_2 = 1,\hdots,m},
\end{equation}
$$
s_N( A_{\ols}) = \sum^{m}_{k_1 = 1}\sum^{m}_{k_2 = 1}s_{k_1,k_2}( A_{\ols},N) \in \bbR.
$$
By \eqref{e:hetero_tildesigma}, \eqref{e:tildesigmak1k2e1} and \eqref{e:varsigma(alpha-hat,h,h)_asympt_equiv}, we can write
\begin{equation}\label{e:S(xi,n)->S(xi)}
S( A_{\ols},N) \stackrel{P}\rightarrow S(\alpha)= \Big(s_{k_1,k_2}(\alpha)\Big)_{k_1,k_2 = 1,\hdots,m}, \quad N \rightarrow \infty,
\end{equation}
and
\begin{equation}\label{e:s(alpha)}
s(\alpha):=\sum^{m}_{k_1=1}\sum^{m}_{k_2=1}s_{k_1,k_2}(\alpha) \in \bbR
\end{equation}
for some constant matrix $S(\alpha)$. For notational simplicity, write $s_N = s_N( A_{\ols},N)$ and $s_{k_1,k_2}(N) = s_{k_1,k_2}( A_{\ols},N)$. Then,
$$
X^T S({\boldsymbol \xi})X=
\left(\begin{array}{cc}
s_N   & \sum^{m}_{k_1=1}\sum^{m}_{k_2=1}(\log \tau_{k_1})s_{k_1,k_2}(N)\\
\sum^{m}_{k_1=1}\sum^{m}_{k_2=1}(\log \tau_{k_1})s_{k_1,k_2}(N) & \sum^{m}_{k_1=1}\sum^{m}_{k_2=1}(\log \tau_{k_1}\log \tau_{k_2})s_{k_1,k_2}(N)
\end{array}\right).
$$
By a simple calculation and relation \eqref{e:S(xi,n)->S(xi)},
$$
c_w(N) := \det (X^T S( A_{\ols},N) X)
$$
$$
= s_N \sum^{m}_{k_1=1}\sum^{m}_{k_2=1} \log w_{k_1}\log w_{k_2} s_{k_1,k_2}(N) - \Big( \sum^{m}_{k_1=1}\sum^{m}_{k_2=1} \log w_{k_1}  s_{k_1,k_2}(N)\Big)^2
$$
$$
\stackrel{P}\rightarrow s(\alpha) \sum^{m}_{k_1=1}\sum^{m}_{k_2=1} \log w_{k_1}\log w_{k_2} s_{k_1,k_2}(\alpha) - \Big( \sum^{m}_{k_1=1}\sum^{m}_{k_2=1} \log w_{k_1}  s_{k_1,k_2}(\alpha)\Big)^2 = c_w(\alpha).
$$
Moreover, by \eqref{e:S(alphahat-stand,n)},
\begin{equation}\label{e:GLS-adjusted_term_asympt}
(X^T \Upsilon^{-1}( A_{\ols},N) X)^{-1}X^T \Upsilon^{-1}( A_{\ols},N)= (X^T S( A_{\ols},N) X)^{-1}X^T S( A_{\ols},N)
\end{equation}
$$
= \frac{1}{c_w(N)}\left(\begin{array}{c}
\log \tau \Big\{\sum^{m}_{k_1=1}\sum^{m}_{k_2=1} \log w_{k_1}  s_{k_1,k_2}(N) \Big( \sum^{m}_{k=1}s_{k,j}(N)\Big) - s_N \sum^{m}_{k=1}\log w_k s_{k,j}(N)\Big\}\\
+ \Big(\sum^{m}_{k_1=1}\sum^{m}_{k_2=1} \log w_{k_1}\log w_{k_2} s_{k_1, k_2}(N)\sum^{m}_{k=1}s_{k,j}(N) \\
- \sum^{m}_{k_1=1}\sum^{m}_{k_2=1} \log w_{k_1} s_{k_1, k_2}(N)\sum^{m}_{k=1}\log w_{k} s_{k,j}(N) \Big)\\
\\
s_N \sum^{m}_{k=1}\log w_k s_{k,j}(N) - \sum^{m}_{k_1=1}\sum^{m}_{k_2=1} \log w_{k_1}  s_{k_1,k_2}(N) \Big( \sum^{m}_{k=1}  s_{k,j}(N)\Big)
\end{array}\right)_{j=1,\hdots,m}
$$
\begin{equation}\label{e:(XTSigma^(-1)X)^(-1)XTSigma^(-1)ksi}
=: \left(\begin{array}{c}
\log \tau \hspace{1mm}a_{n,j} + b_{n,j}\\
-a_{n,j}
\end{array}\right)_{j=1,\hdots,m},
\end{equation}
where the sequences of constants $\{a_{n,j}\}_{N \in \bbN}$ and $\{b_{n,j}\}_{N \in \bbN}$, converge to constants $a_{j}$ and $b_{j}$, respectively, for $j = 1,\hdots,m$.

Recall that, for a symmetric positive definite matrix
$$
M = \left(\begin{array}{cc}
m_{11} & m_{12}\\
m_{12} & m_{22}
\end{array}\right),
$$
we can write its square root in closed form as
$$
M^{1/2} = \frac{1}{\sqrt{\textnormal{tr}(M) + 2\sqrt{\det(M)}}}\left(\begin{array}{cc}
m_{11} + \sqrt{\det(M)} & m_{12}\\
m_{12} & m_{22} + \sqrt{\det(M)}
\end{array}\right).
$$
Therefore,
$$
(X^T S( A_{\ols})X)^{1/2}= \frac{1}{\sqrt{\textnormal{tr}(X^T S( A_{\ols})X) + 2\sqrt{c_w(N)}}}
$$
\begin{equation}\label{e:M^1/2}
\left(\begin{array}{cc}
s_N  + \sqrt{c_w(N)} & \sum^{m}_{k_1=1}\sum^{m}_{k_2=1}(\log \tau_{k_1})s_{k_1,k_2}(N)\\
\sum^{m}_{k_1=1}\sum^{m}_{k_2=1}(\log \tau_{k_1})s_{k_1,k_2}(N) & \sum^{m}_{k_1=1}\sum^{m}_{k_2=1}(\log \tau_{k_1}\log \tau_{k_2})s_{k_1,k_2}(N) + \sqrt{c_w(N)}
\end{array}\right).
\end{equation}
Note that
\begin{equation}\label{e:tr(XTS(alpha-hat)X)}
\textnormal{tr}(X^T S( A_{\ols})X) + 2\sqrt{c_w(N)} \sim \log^2 \tau \hspace{1mm}s_N.
\end{equation}
By expressions \eqref{e:S(xi,n)->S(xi)}, \eqref{e:s(alpha)}, \eqref{e:(XTSigma^(-1)X)^(-1)XTSigma^(-1)ksi}, \eqref{e:M^1/2} and \eqref{e:tr(XTS(alpha-hat)X)},
$$
(X^T \Upsilon^{-1}( A_{\ols},N) X)^{1/2}(X^T \Upsilon^{-1}( A_{\ols},N) X)^{-1}X^T \Upsilon^{-1}( A_{\ols},N)
$$
$$
= \sqrt{\frac{N}{\tau}}(X^T S( A_{\ols},N) X)^{1/2}(X^T S( A_{\ols},N) X)^{-1}X^T S( A_{\ols},N)
$$
$$
= \sqrt{\frac{N}{\tau}} \frac{1}{\sqrt{\textnormal{tr}(X^T S( A_{\ols},N) X)+ 2 \sqrt{c_{w}(N)}}} \frac{1}{c_{w}(N)}
$$
$$
\left(\begin{array}{c}
a_{n,j}\Big[(\log \tau) (s_N + \sqrt{c_{w}(N)}) - \sum^{m}_{k_1=1}\sum^{m}_{k_2=1}(\log \tau_{k_1})s_{k_1 k_2}(N)\Big] + b_{n,j}(s_N + \sqrt{c_{w}(N)})\\
\\
a_{n,j}\Big[(\log \tau)\sum^{m}_{k_1=1}\sum^{m}_{k_2=1}(\log \tau_{k_1})s_{k_1 k_2}(N) \\ - \sum^{m}_{k_1=1}\sum^{m}_{k_2=1}(\log \tau_{k_1}\log \tau_{k_2})s_{k_1 k_2}(N)
 - \sqrt{c_{w}(N)} \Big]\\
 + b_{n,j}\Big[\sum^{m}_{k_1=1}\sum^{m}_{k_2=1}(\log \tau_{k_1})s_{k_1 k_2}(N) \Big]
\end{array}\right)_{j=1,\hdots,m}
$$
$$
= \sqrt{\frac{N}{\tau}} \frac{1}{\sqrt{\textnormal{tr}(X^T S( A_{\ols},N) X)+ 2 \sqrt{c_{w}(N)}}} \frac{1}{c_{w}(N)}
$$
$$
\left(\begin{array}{c}
a_{n,j}\Big[(\log \tau) \sqrt{c_{w}(N)} - \sum^{m}_{k_1=1}\sum^{m}_{k_2=1}(\log w_{k_1})s_{k_1k_2}(N)\Big] + b_{n,j}(s_N + \sqrt{c_{w}(N)})  \\
\\
a_{n,j}\Big[-(\log \tau)\sum^{m}_{k_1=1}\sum^{m}_{k_2=1}(\log w_{k_1})s_{k_1k_2}(N) - \sum^{m}_{k_1=1}\sum^{m}_{k_2=1}(\log w_{k_1}\log w_{k_2})s_{k_1 k_2}(N)\\
 - \sqrt{c_{w}(N)} \Big]
 + b_{n,j}\Big[(\log \tau)s_N + \sum^{m}_{k_1=1}\sum^{m}_{k_2=1}(\log w_{k_1})s_{k_1 k_2}(N)\Big]
\end{array}\right)_{j=1,\hdots,m}
$$
$$
\stackrel{P}\sim  \sqrt{\frac{N}{\tau}} \frac{1}{s(\alpha)}\frac{1}{c_w(\alpha)}
\left(\begin{array}{c}
a_{j} \sqrt{c_w(\alpha)} \\
\\
b_{j} s(\alpha)  - a_{j}\Big[\sum^{m}_{k_1=1}\sum^{m}_{k_2=1}(\log w_{k_1})s_{k_1 k_2}(\alpha) \Big]
\end{array}\right)_{j=1,\hdots,m}
$$
\begin{equation}\label{e:sqrt(n/h)Psi}
=: \sqrt{\frac{N}{\tau}}\hspace{1mm}\Psi \in \bbR^{2 \times m},
\end{equation}
%
%
%$$
%\sim c_w(\alpha) \left(\begin{array}{c}
%\log \tau \Big\{\sum^{m}_{k_1=1}\sum^{m}_{k_2=1} \log w_{k_1}  s_{k_1,k_2}(\alpha) \Big( \sum^{m}_{k=1}s_{k,j}(\alpha)\Big) - s(\alpha) \sum^{m}_{k=1}\log w_k s_{k,j}(\alpha)\Big\}\\
%s(\alpha)  \sum^{m}_{k=1}\log w_k s_{k,m}(\alpha) - \sum^{m}_{k_1=1}\sum^{m}_{k_2=1} \log w_{k_1}  s_{k_1,k_2}(\alpha) \Big( \sum^{m}_{k=1}s_{k,j}(\alpha)\Big)
%\end{array}\right)_{j=1,\hdots,m}
%$$
as $N \rightarrow \infty$. For ${\mathbf y}$ and $X$ as in \eqref{e:y,X}, rewrite the left-hand side of expression \eqref{e:asympt_dist_estim} as
$$
(X^T \Upsilon^{-1}( A_{\ols},N) X)^{1/2}({\boldsymbol E}-{\boldsymbol \xi})
$$
\begin{equation}\label{e:hatbeta-beta}
=(X^T \Upsilon^{-1}( A_{\ols},N) X)^{1/2}(X^T \Upsilon^{-1}( A_{\ols},N) X)^{-1}X^T \Upsilon^{-1}( A_{\ols},N) ({\mathbf y}-X{\boldsymbol \xi}).
\end{equation}
Recast
$$
{\mathbf y} = \Big( \log[ \hspace{1mm}M_N(\tau_k)e^{\frac{\tau_k}{N}\beta_N( A_{\ols},\tau_k)} \hspace{1mm}] \Big)_{k=1,\hdots,m}.
$$
By entrywise first order Taylor expansions,
$$
{\mathbf y}- X {\boldsymbol \xi}=\bigg( \log\Big(\frac{M_N(\tau_k)e^{\frac{\tau_k}{N}\beta_N( A_{\ols},\tau_k)}}{\sigma^2 \tau_k^{\alpha}}\Big) \bigg)_{k=1,\hdots,m}
$$
\begin{equation}\label{e:Qn-Mnbeta}
= \bigg( \frac{M_N(\tau_k)e^{\frac{\tau}{N}\beta_N( A_{\ols},\tau_k)}}{\sigma^2 \tau_k^{\alpha}} -1 \bigg)_{k=1,\hdots,m}
+\Big( O\bigg( \frac{M_N(\tau_k)e^{\frac{\tau}{N}\beta_N( A_{\ols},\tau_k)}}{\sigma^2 \tau_k^{\alpha}} -1 \bigg)^2\Big)_{k=1,\hdots,m}
\end{equation}
However, for $k = 1,\hdots,m$, the first term on the right-hand side of \eqref{e:Qn-Mnbeta} can be reexpressed as
\begin{equation}\label{e:norm_MSD_with_and_without_exp}
\frac{M_N(\tau_k)}{\sigma^2 \tau_k^{\alpha}} \Big( e^{\frac{\tau_k}{N}\beta_N( A_{\ols},\tau_k)} - 1\Big)
+ \frac{M_N(\tau_k)}{\sigma^2 \tau_k^{\alpha}}-1.
\end{equation}
Again by a first order Taylor expansion,
\begin{equation}\label{e:exp_Taylor}
e^{\frac{\tau_k}{N}\beta_N( A_{\ols},\tau_k)} - 1  = \frac{\tau_k}{N}\beta_N( A_{\ols},\tau_k) + o_P \Big( \frac{\tau_k}{N} \Big).
\end{equation}
Therefore, by \eqref{e:sqrt(n/h)Psi}, \eqref{e:norm_MSD_with_and_without_exp} and \eqref{e:exp_Taylor}, we can rewrite relation \eqref{e:hatbeta-beta} as
\begin{equation}\label{e:sqrt(n/h)(MSD/theta*hk^alpha-1)+o(1)}
(\Psi + o_P(1))\hspace{1mm}\sqrt{\frac{N}{\tau}}\Big( \frac{M_N(\tau_k)}{\sigma^2 \tau_k^{\alpha}}-1 \Big)_{k=1,\hdots,m} + o_P \Big( \sqrt{\frac{\tau}{N}} \Big).
\end{equation}
Expression \eqref{e:asympt_dist_estim} is a consequence of \eqref{e:sqrt(n/h)(MSD/theta*hk^alpha-1)+o(1)} and Theorem \ref{t:asympt_dist_MSD}, where the estimator \eqref{e:standardized_zetahat} is asymptotically standardized.
\end{proof}

The following lemma establishes the convergence of the main bias and variance factors and is used in the proofs of Proposition \ref{p:asympt_dist_estim} and Theorem \ref{t:tildesigmak1k2}.
\begin{lemma}\label{l:bias_var_converge}
For $0 < \alpha < 3/2$, consider the main bias and variance factors \eqref{e:beta(alpha,h,n)} and \eqref{e:varsigma-n(alpha,h1,h2)}, respectively, under the assumptions of Theorem \ref{t:asympt_dist_MSD}. Then, there are functions $\beta(\alpha,\cdot) > 0$ and $\varsigma(\alpha,\cdot,\cdot) > 0$ such that
\begin{equation}\label{e:beta(alpha,h,n)_conv}
\beta_N( A_{\ols},\tau_k) \stackrel{P}\rightarrow \Big(\beta(\alpha,\tau_k) \Big)_{k=1,\hdots,m},
\end{equation}
\begin{equation}\label{e:varsigma(alpha-hat,h,h)_asympt_equiv}
\Big( \varsigma_N( A_{\ols},\tau_{k_1},\tau_{k_2})\Big)_{k_1,k_2=1,\hdots,m} \stackrel{P}\rightarrow \Big(\varsigma(\alpha,k_1,k_2)\Big)_{k_1,k_2=1,\hdots,m},
\end{equation}
as $N \rightarrow \infty$.
\end{lemma}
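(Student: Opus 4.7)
The strategy is to combine the consistency of $A_{\ols}$ with the convergence of $\beta_N(\alpha,\cdot)$ and $\varsigma_N(\alpha,\cdot,\cdot)$ for deterministic $\alpha$, and then invoke the continuous mapping theorem. By Corollary \ref{c:asympt_dist_MSD}, $A_{\ols} \stackrel{P}\rightarrow\alpha$; hence it suffices to $(i)$~identify deterministic limits $\beta(\alpha,\cdot)$ and $\varsigma(\alpha,\cdot,\cdot)$ as $N\to\infty$ for each fixed $\alpha \in (0,3/2)$, and $(ii)$~establish local uniform convergence in $\alpha$ on compact subsets of $(0,3/2)$, so that the limits are continuous in $\alpha$ and can be plugged into the continuous mapping theorem.

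Writing $g_\alpha(u) := |u+1|^\alpha - 2|u|^\alpha + |u-1|^\alpha$, the substitution $u = i/\tau_k$ with mesh $\Delta u = 1/\tau_k$ rewrites \eqref{e:beta(alpha,h,n)} as the Riemann-type sum
\begin{equation*}
\beta_N(\alpha,\tau_k) = \frac{1}{4}\sum_{i=-N+1}^{N-1} \Delta u \cdot \bigl(1 - \tfrac{|i|}{N}\bigr)\, g_\alpha(i/\tau_k)^2.
\end{equation*}
Since $\tau_k/N \to 0$ by \eqref{e:h(n)}, the triangular weight $(1 - |i|/N)$ tends to $1$ on every compact range of $u$, while the summation range $|u| \le (N-1)/\tau_k$ diverges. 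The natural candidate limit is therefore
\begin{equation*}
\beta(\alpha) := \frac{1}{4}\int_{\bbR} g_\alpha(u)^2\, du,
\end{equation*}
independent of $\tau_k$. I would show this is finite precisely when $\alpha < 3/2$: on compacts $|g_\alpha|$ is bounded, and as $|u|\to\infty$ a second-order Taylor expansion gives $g_\alpha(u) = \alpha(\alpha-1)|u|^{\alpha-2} + O(|u|^{\alpha-4})$, so $g_\alpha(u)^2 \sim [\alpha(\alpha-1)]^2|u|^{2\alpha-4}$, which is integrable at infinity iff $\alpha < 3/2$. The same heuristic applies to $\varsigma_N$: after taking $u = i/\sqrt{\tau_{k_1}\tau_{k_2}}$ as the integration variable, \eqref{e:varsigma-n(alpha,h1,h2)} becomes a Riemann sum for a cross-analogue of $g_\alpha^2$ with the same $|u|^{2\alpha-4}$ tail, so the threshold $\alpha < 3/2$ governs both convergences simultaneously.

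The main obstacle will be the Riemann-sum convergence in the presence of both a diverging summation range and the triangular weight, which requires dominated-convergence-style control. I would construct a uniform envelope $g_\alpha(u)^2 \le C (1+|u|)^{2\alpha - 4}$ for $|u| \ge 2$, valid locally uniformly for $\alpha$ in compacts of $(0,3/2)$, and use the trivial majorant $4^{2\alpha}$ in a neighbourhood of the cusps $\{-1,0,1\}$. Combined with the pointwise bound $|1 - |i|/N| \le 1$, this produces an integrable majorant; truncating to $|u| \le A$ with error uniformly small in $\alpha$ reduces the claim to standard Riemann-sum convergence on a compact set, where joint continuity of $(\alpha,u) \mapsto g_\alpha(u)^2$ off a measure-zero cusp set suffices. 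Continuity of $\alpha \mapsto \beta(\alpha)$ then follows from another application of dominated convergence, and the continuous mapping theorem applied to $A_{\ols}\stackrel{P}\rightarrow\alpha$ yields \eqref{e:beta(alpha,h,n)_conv}. The argument for \eqref{e:varsigma(alpha-hat,h,h)_asympt_equiv} is entirely parallel, using the corresponding cross-integrand in place of $g_\alpha^2$.
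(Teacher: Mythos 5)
Your plan is sound and, in outline, it is the same route the paper takes: view $\beta_N(\alpha,\tau_k)$ and $\varsigma_N(\alpha,\tau_{k_1},\tau_{k_2})$ as Riemann sums with mesh $1/\tau$ for the squared (cross) second difference of $|\cdot|^\alpha$, use the tail decay $g_\alpha(u)^2=O(|u|^{2\alpha-4})$ (integrable at infinity exactly when $\alpha<3/2$) to pass to the limit with a dominated-convergence envelope, and then transfer the convergence from the deterministic argument $\alpha$ to the random argument $ A_{\ols}$. For that last step the paper does in substance what you propose: it restricts to the event $\{| A_{\ols}-\alpha|<\epsilon_0/2\}$, whose probability tends to one by Corollary \ref{c:asympt_dist_MSD}, and reruns the deterministic argument there. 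Your insistence on locally uniform convergence in $\alpha$ plus continuity of the limit is the right way to make this rigorous --- plain continuous mapping is not enough, since $\beta_N(\cdot,\tau_k)$ is a sequence of functions evaluated at a random point --- so do keep that uniformity when you write it up.

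The one genuine divergence is the identification of the limit, and there your version is the one to trust. The paper splits the sum into $|i|\le\tau$ and $\tau+1\le|i|\le N-1$, asserts that the second block is asymptotically negligible, and records the limit as $\tfrac12\int_{-1}^{1}$ of the squared cross-difference (see \eqref{e:summ_asympt_equiv}). But the second block, namely $\tfrac{1}{\tau}\sum_{\tau+1\le|i|\le N-1}(1-|i|/N)$ times the squared cross-difference evaluated at $i/\tau$, is a Riemann sum with mesh $1/\tau$ for the integral over $1\le|y|\le N/\tau$, which converges to the \emph{positive} tail integral over $|y|\ge 1$ (of order $1/(3-2\alpha)$ by your own tail expansion, and nonzero except at $\alpha=1$), not to zero; the displayed integral over $\tau+1\le|y|\le n$ in \eqref{e:finite_sample_expansion_resid} results from keeping the range in the $i$ scale while writing the integrand in the $y=i/\tau$ scale. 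Your full-line limits $\tfrac14\int_{\bbR}g_\alpha^2$ and the analogous $\tfrac12\int_{\bbR}$ for $\varsigma$ are the correct constants, consistent with the $L^2(\bbR)$ norm appearing in \eqref{e:Sigma_0<H<3/4} of Theorem \ref{t:asympt_dist_MSD}. The lemma as stated (existence of positive limits) survives either way, so nothing downstream breaks, but you should not reproduce the paper's truncation to $[-1,1]$.
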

\begin{proof}
From expression \eqref{e:tildesigmak1k2e1} for the variance term $\upsilon_{k_1,k_2}({\boldsymbol \xi})$, recast
  $$
  \varsigma_N(\alpha,\tau_{k_1},\tau_{k_2}) = \frac{1}{2 } \Big\{ \sum_{|i| \leq \tau} + \sum_{\tau+1 \leq |i| \leq N-1} \Big\} \bigg( 1- \frac{\abs{i}}{N}\bigg)\bigg\{ \abs{ \frac{i}{\tau\sqrt{w_{k_1} w_{k_2}}} + \sqrt{\frac{w_{k_1}}{w_{k_2}}} }^{\alpha}
 $$
\begin{equation} \label{e:varsigma_n(alpha,h,h)}
-\abs{ \frac{i}{\tau \sqrt{w_{k_1} w_{k_2}}} + \sqrt{\frac{w_{k_1}}{w_{k_2}}} - \sqrt{\frac{w_{k_2}}{w_{k_1}}}}^{\alpha}-\abs{ \frac{i}{\tau \sqrt{w_{k_1} w_{k_2}}} }^{\alpha}  + \abs{ \frac{i}{\tau \sqrt{w_{k_1} w_{k_2}}} - \sqrt{\frac{w_{k_2}}{w_{k_1}}}}^{\alpha} \bigg\}^2
\frac{1}{\tau } .
\end{equation}
The second sum term on the right-hand side of \eqref{e:varsigma_n(alpha,h,h)} is bounded by
$$
\frac{1}{2} \Big( 1 - \frac{\tau}{N}\Big) \sum_{\tau+1 \leq |i| \leq N-1} \bigg\{ \abs{ \frac{i}{\tau \sqrt{w_{k_1} w_{k_2}}} + \sqrt{\frac{w_{k_1}}{w_{k_2}}} }^{\alpha}
$$
$$
-\abs{ \frac{i}{\tau \sqrt{w_{k_1} w_{k_2}}} + \sqrt{\frac{w_{k_1}}{w_{k_2}}} - \sqrt{\frac{w_{k_2}}{w_{k_1}}}}^{\alpha}-\abs{ \frac{i}{\tau \sqrt{w_{k_1} w_{k_2}}} }^{\alpha}  + \abs{ \frac{i}{\tau \sqrt{w_{k_1} w_{k_2}}} - \sqrt{\frac{w_{k_2}}{w_{k_1}}}}^{\alpha} \bigg\}^2
\frac{1}{\tau } .
$$
$$
\sim \frac{1}{2} \Big( 1 - \frac{\tau}{N}\Big) \int_{\tau+1 \leq |y|\leq n} \bigg\{ \abs{ \frac{y}{\sqrt{w_{k_1} w_{k_2}}} + \sqrt{\frac{w_{k_1}}{w_{k_2}}} }^{\alpha}
$$
\begin{equation}\label{e:finite_sample_expansion_resid}
-\abs{ \frac{y}{\sqrt{w_{k_1} w_{k_2}}} + \sqrt{\frac{w_{k_1}}{w_{k_2}}} - \sqrt{\frac{w_{k_2}}{w_{k_1}}}}^{\alpha}-\abs{ \frac{y}{\sqrt{w_{k_1} w_{k_2}}} }^{\alpha}  + \abs{ \frac{y}{\sqrt{w_{k_1} w_{k_2}}} - \sqrt{\frac{w_{k_2}}{w_{k_1}}}}^{\alpha} \bigg\}^2 dy \rightarrow 0,
\end{equation}
as $N \rightarrow \infty$. On the other hand, by condition \eqref{e:h(n)} and the dominated convergence theorem, the first sum term on the right-hand side of \eqref{e:varsigma_n(alpha,h,h)} converges to
$$
\varsigma(\alpha,k_1,k_2) := \frac{1}{2} \int^{1}_{-1} \bigg\{ \abs{ \frac{y}{\sqrt{w_{k_1} w_{k_2}}} + \sqrt{\frac{w_{k_1}}{w_{k_2}}} }^{\alpha}
-\abs{ \frac{y}{\sqrt{w_{k_1} w_{k_2}}} + \sqrt{\frac{w_{k_1}}{w_{k_2}}} - \sqrt{\frac{w_{k_2}}{w_{k_1}}}}^{\alpha}
$$
\begin{equation}\label{e:summ_asympt_equiv}
-\abs{ \frac{y}{\sqrt{w_{k_1} w_{k_2}}} }^{\alpha}  + \abs{ \frac{y}{\sqrt{w_{k_1} w_{k_2}}} - \sqrt{\frac{w_{k_2}}{w_{k_1}}}}^{\alpha} \bigg\}^2 dy > 0.
\end{equation}
Moreover, by Corollary \ref{c:asympt_dist_MSD}, $ A_{\ols} \stackrel{P}\rightarrow \alpha$. So, pick a small enough $\epsilon_0$ such that $\alpha \in (\epsilon_0, 3/2-\epsilon_0)$. Let $A = \{\omega:  A_{\ols}(\omega) \in (\alpha - \frac{\epsilon_0}{2}, \alpha + \frac{\epsilon_0}{2}) \}$. In the set $A$, by a simple adaptation of the argument leading to \eqref{e:finite_sample_expansion_resid} and the convergence to \eqref{e:summ_asympt_equiv},
\begin{equation}\label{e:varsigma(alpha-hat,h,h)_asympt_equiv_proof}
\varsigma_N( A_{\ols},\tau_{k_1},\tau_{k_2}) \rightarrow \varsigma(\alpha,k_1,k_2), \quad N \rightarrow \infty,
\end{equation}
where $\bbP(A) \rightarrow 1$. This shows \eqref{e:varsigma(alpha-hat,h,h)_asympt_equiv}. A similar argument can be used to show \eqref{e:beta(alpha,h,n)_conv}. %In turn, consider the bias correction factor \eqref{e:beta(alpha,h,n)}. By the same argument leading to \eqref{e:varsigma(alpha-hat,h,h)_asympt_equiv},
%\begin{equation}\label{e:beta(alpha,h,n)_conv_proof}
%\beta_N( A_{\ols},h) \rightarrow \frac{1}{4}\int_{\bbR} \{ | y+1 |^{\alpha} - 2 |y|^{\alpha} + |y-1|^{\alpha}\}^2 dy, \quad  N \rightarrow \infty,
%\end{equation}
%where $\bbP(A) \rightarrow 1$. Expressions \eqref{e:varsigma(alpha-hat,h,h)_asympt_equiv} and \eqref{e:beta(alpha,h,n)_conv} are a consequence of \eqref{e:varsigma(alpha-hat,h,h)_asympt_equiv_proof} and \eqref{e:beta(alpha,h,n)_conv_proof}.
\end{proof}

\section{Pseudocode for generating the improved $\MSD$-based estimator ${\boldsymbol E}$}\label{s:pseudocode}

{\small
\begin{center}
\begin{tabular}{|l|}
\hline
\multicolumn{1}{|c|}{\textbf{Generating the improved pathwise estimator ${\boldsymbol E}$} (see \eqref{e:hetero_hatxi})}\\
\\
\hline
%\multicolumn{1}{|c|}{\textbf{Pseudo-code for estimating $\alpha$ and $\sigma^2$}}\\
%\hline
\textbf{Input}: \\
$\bullet$ one observed particle path $\{X_1,X_2,\hdots,X_N\}_{N \in \bbN}$ of length $N$;\\
$\bullet$ regression lag values $\tau_k$, $k = 1,\hdots,m$ (typically, $\tau_k= \tau w_k$, $w_1 < \hdots < w_m$, $\tau \ll N$);\\
$\bullet$ the expression for the asymptotic covariance matrix $\Upsilon(\alpha)$ as a function of $\alpha$;\\
\\
\textbf{Step 1}: obtain the standard estimator $ A_{\ols}$ over the chosen lag values;\\
\\
\textbf{Step 2}: estimate the asymptotic covariance matrix $\Upsilon({\boldsymbol \xi})$ by means of $\Upsilon( A_{\ols})$ (see \eqref{e:Sigmatilde(alpha-hat)});\\
\\
\textbf{Step 3}: use $ A_{\ols}$ and the estimator \eqref{e:beta(alphahat,h,n)} of the bias vector to produce the bias-corrected \\
regression system \eqref{e:regression_bias-corrected};\\
\\
\textbf{Step 4}: obtain the estimator ${\boldsymbol E}$ by means of $\Upsilon( A_{\ols})$-based GLS on the bias-corrected\\
regression system \eqref{e:regression_bias-corrected}.\\
\hline
\end{tabular}
\end{center}
}
\bibliographystyle{plain}
\bibliography{MSD_asympt_Gaussian_si_physics}

\begin{thebibliography}{10}

\bibitem{andreanov:grebenkov:2012}
A.~Andreanov and D.~Grebenkov.
\newblock Time-averaged {MSD} of {B}rownian motion.
\newblock {\em Journal of Statistical Mechanics: Theory and Experiment},
  2012(07):P07001, 2012.

\bibitem{barkai:garini:metzler:2012}
E.~Barkai, Y.~Garini, and R.~Metzler.
\newblock Strange kinetics of single molecules in living cells.
\newblock {\em Physics Today}, 65(8):29--35, 2012.

\bibitem{berglund:2010}
A.~J. Berglund.
\newblock Statistics of camera-based single-particle tracking.
\newblock {\em Physical Review E}, 82(1):011917, 2010.

\bibitem{bertseva:grebenkov:schmidhauser:gribkova:jeney:forro:2012}
E.~Bertseva, D.~S. Grebenkov, P.~Schmidhauser, S.~Gribkova, S.~Jeney, and
  L.~Forr{\'o}.
\newblock Optical trapping microrheology in cultured human cells.
\newblock {\em European Physical Journal E}, 35(7):63, 2012.

\bibitem{boucheron:lugosi:massart:2013}
S.~Boucheron, G.~Lugosi, and P.~Massart.
\newblock {\em Concentration inequalities: a nonasymptotic theory of
  independence}.
\newblock Oxford University Press, 2013.

\bibitem{boyer:dean:mejia:oshanin:2012}
D.~Boyer, D.~S. Dean, C.~Mej\'{i}a-Monasterio, and G.~Oshanin.
\newblock Optimal estimates of the diffusion coefficient of a single {B}rownian
  trajectory.
\newblock {\em Physical Review E}, 85(3):031136, 2012.

\bibitem{boyer:dean:mejia:oshanin:2013}
D.~Boyer, D.~S. Dean, C.~Mej\'{i}a-Monasterio, and G.~Oshanin.
\newblock Distribution of the least-squares estimators of a single {B}rownian
  trajectory diffusion coefficient.
\newblock {\em Journal of Statistical Mechanics: Theory and Experiment},
  2013(04):P04017, 2013.

\bibitem{briane:kervrann:vimond:2017}
V.~Briane, C.~Kervrann, and M.~Vimond.
\newblock A statistical analysis of particle trajectories in living cells.
\newblock {\em \texttt{arXiv:1707.01838}}, pages 1--38, 2017.

\bibitem{burnecki:2012}
K.~Burnecki.
\newblock {FARIMA} processes with application to biophysical data.
\newblock {\em Journal of Statistical Mechanics: Theory and Experiment},
  2012(05):P05015, 2012.

\bibitem{burnecki:kepten:garini:sikora:weron:2015}
K.~Burnecki, E.~Kepten, Y.~Garini, G.~Sikora, and A.~Weron.
\newblock Estimating the anomalous diffusion exponent for single particle
  tracking data with measurement errors-an alternative approach.
\newblock {\em Scientific Reports}, 5(11306):1--11, 2015.

\bibitem{burnecki:kepten:janczura:bronshtein:garini:weron:2012}
K.~Burnecki, E.~Kepten, J.~Janczura, I.~Bronshtein, Y.~Garini, and A.~Weron.
\newblock Universal algorithm for identification of fractional {B}rownian
  motion. a case of telomere subdiffusion.
\newblock {\em Biophysical Journal}, 103(9):1839--1847, 2012.

\bibitem{burnecki:muszkieta:sikora:weron:2012}
K.~Burnecki, M.~Muszkieta, G.~Sikora, and A.~Weron.
\newblock Statistical modelling of subdiffusive dynamics in the cytoplasm of
  living cells: a {FARIMA} approach.
\newblock {\em Europhysics Letters}, 98(1):10004, 2012.

\bibitem{burov:jeon:metzler:barkai:2011}
S.~Burov, J.-H. Jeon, R.~Metzler, and E.~Barkai.
\newblock Single particle tracking in systems showing anomalous diffusion: the
  role of weak ergodicity breaking.
\newblock {\em Physical Chemistry Chemical Physics}, 13(5):1800--1812, 2011.

\bibitem{cheridito:kawaguchi:maejima:2003}
P.~Cheridito, H.~Kawaguchi, and M.~Maejima.
\newblock Fractional {O}rnstein-{U}hlenbeck processes.
\newblock {\em Electronic Journal of Probability}, 8(3):1--14, 2003.

\bibitem{christensen:2011}
R.~Christensen.
\newblock {\em Plane answers to complex questions: the theory of linear
  models}.
\newblock Springer Science \& Business Media, 2011.

\bibitem{dawson:wirtz:hanes:2003}
M.~Dawson, D.~Wirtz, and J.~Hanes.
\newblock Enhanced viscoelasticity of human cystic fibrotic sputum correlates
  with increasing microheterogeneity in particle transport.
\newblock {\em Journal of Biological Chemistry}, 278(50):50393--50401, 2003.

\bibitem{deng:barkai:2009}
W.~Deng and E.~Barkai.
\newblock Ergodic properties of fractional {B}rownian-{L}angevin motion.
\newblock {\em Physical Review E}, 79(1):011112, 2009.

\bibitem{didier:fricks:2014}
G.~Didier and J.~Fricks.
\newblock On the wavelet-based simulation of anomalous diffusion.
\newblock {\em Journal of Statistical Computation and Simulation},
  84(4):697--723, 2014.

\bibitem{didier:mckinley:hill:fricks:2012}
G.~Didier, S.~A. McKinley, D.~B. Hill, and J.~Fricks.
\newblock Statistical challenges in microrheology.
\newblock {\em Journal of Time Series Analysis}, 33(55):724--743, September
  2012.

\bibitem{didier:zhang:2017}
G.~Didier and K.~Zhang.
\newblock The asymptotic distribution of the pathwise mean squared displacement
  in single particle tracking experiments.
\newblock {\em Journal of Time Series Analysis}, 38(3):395--416, May 2017.

\bibitem{dobrushin:major:1979}
R.~Dobrushin and P.~Major.
\newblock Non-central limit theorems for non-linear functional of {G}aussian
  fields.
\newblock {\em Probability Theory and Related Fields}, 50(1):27--52, 1979.

\bibitem{gajda:wylomanska:kantz:chechkin:sikora:2018}
J.~Gajda, A.~Wy{\l}oma{\'n}ska, H.~Kantz, A.V. Chechkin, and G.~Sikora.
\newblock Large deviations of time-averaged statistics for {G}aussian
  processes.
\newblock {\em Statistics \& Probability Letters}, 2018.

\bibitem{ghosh:cherstvy:grebenkov:metzler:2016}
S.~K. Ghosh, A.~G. Cherstvy, D.~S. Grebenkov, and R.~Metzler.
\newblock Anomalous, non-{G}aussian tracer diffusion in crowded two-dimensional
  environments.
\newblock {\em New Journal of Physics}, 18(1):013027, 2016.

\bibitem{giraitis:surgailis:1985}
L.~Giraitis and D.~Surgailis.
\newblock Clt and other limit theorems for functionals of {G}aussian processes.
\newblock {\em Zeitschrift f{\"u}r Wahrscheinlichkeitstheorie und verwandte
  Gebiete}, 70(2):191, 1985.

\bibitem{giraitis:surgailis:1990}
L~Giraitis and D~Surgailis.
\newblock A central limit theorem for quadratic forms in strongly dependent
  linear variables and its application to asymptotical normality of {W}hittle's
  estimate.
\newblock {\em Probability Theory and Related Fields}, 86(1):87--104, 1990.

\bibitem{gorman:greene:2008}
J.~Gorman and E.~C. Greene.
\newblock Visualizing one-dimensional diffusion of proteins along {DNA}.
\newblock {\em Nat. Struct. Mol. Biol.}, 15(8):768--774, 2008.

\bibitem{grebenkov:2011prob}
D.~Grebenkov.
\newblock Probability distribution of the time-averaged mean-square
  displacement of a {G}aussian process.
\newblock {\em Physical Review E}, 84(3):031124, 2011.

\bibitem{grebenkov:2011functionals}
D.~Grebenkov.
\newblock Time-averaged quadratic functionals of a {G}aussian process.
\newblock {\em Physical Review E}, 83(6):061117, 2011.

\bibitem{grebenkov:2013}
D.~Grebenkov.
\newblock Optimal and suboptimal quadratic forms for noncentered {G}aussian
  processes.
\newblock {\em Physical Review E}, 88(3):032140, 2013.

\bibitem{grebenkov:vahabi:bertseva:forro:jeney:2013}
D.~S. Grebenkov, M.~Vahabi, E.~Bertseva, L.~Forr{\'o}, and S.~Jeney.
\newblock Hydrodynamic and subdiffusive motion of tracers in a viscoelastic
  medium.
\newblock {\em Physical Review E}, 88(4):040701, 2013.

\bibitem{guyon:leon:1989}
Xavier Guyon and Jos{\'e} Le{\'o}n.
\newblock Convergence en loi des {H}-variations d'un processus {G}aussien
  stationnaire sur $\textbf{R}$.
\newblock {\em Annales de l'IHP: Probabilit{\'e}s et Statistiques},
  25(3):265--282, 1989.

\bibitem{halford:marko:2004}
S.~E. Halford and J.~F. Marko.
\newblock How do site-specific {DNA}-binding proteins find their targets?
\newblock {\em Nucleic Acids Res.}, 32(10):3040--3052, 2004.

\bibitem{helenius:brouhard:kalaidzidis:diez:howard:2006}
J.~Helenius, G.~Brouhard, Y.~Kalaidzidis, S.~Diez, and J.~Howard.
\newblock The depolymerizing kinesin {MCAK} uses lattice diffusion to rapidly
  target microtubule ends.
\newblock {\em Nature}, 441(7089):115--119, 2006.

\bibitem{hess:girirajan:mason:2006}
S.~T. Hess, T.P.K. Girirajan, and M.~D. Mason.
\newblock Ultra-high resolution imaging by fluorescence photoactivation
  localization microscopy.
\newblock {\em Biophysical {J}ournal}, 91(11):4258--4272, 2006.

\bibitem{hill:vasquez:mellnik:mckinley:vose:mu:henderson:donaldson:alexis:boucher:forest:2014}
D.~B. Hill, P.~A. Vasquez, J.~Mellnik, S.~A. McKinley, A.~Vose, F.~Mu, A.~G.
  Henderson, S.~H. Donaldson, N.~E. Alexis, R.~C. Boucher, and M.~G. Forest.
\newblock A biophysical basis for mucus solids concentration as a candidate
  biomarker for airways disease.
\newblock {\em PloS one}, 9(2):e87681, 2014.

\bibitem{hoze:hochman:2017}
N.~Hoz\'{e} and D.~Hochman.
\newblock Statistical methods for large ensembles of super-resolution
  stochastic single particle trajectories in cell biology.
\newblock {\em Annual Review of Statistics and its Application}, 4:189--223,
  2017.

\bibitem{isserlis1916}
L.~Isserlis.
\newblock On certain probable errors and correlation coefficients of multiple
  frequency distributions with skew regression.
\newblock {\em Biometrika}, 11:185–190, 1916.

\bibitem{jeon:barkai:metzler:2013}
J.-H. Jeon, E.~Barkai, and R.~Metzler.
\newblock Noisy continuous time random walks.
\newblock {\em Journal of Chemical Physics}, 139(12):121916, 2013.

\bibitem{jeon:metzler:2010}
J.-H. Jeon and R.~Metzler.
\newblock Analysis of short subdiffusive time series: scatter of the
  time-averaged mean-squared displacement.
\newblock {\em Journal of Physics A: Mathematical and Theoretical},
  43(25):252001, 2010.

\bibitem{kepten:bronshtein:garini:2013}
E.~Kepten, I.~Bronshtein, and Y.~Garini.
\newblock Improved estimation of anomalous diffusion exponents in
  single-particle tracking experiments.
\newblock {\em Physical Review E}, 87(5):052713, 2013.

\bibitem{meerschaert:nane:xiao:2009}
E.~Kepten, A.~Weron, G.~Sikora, K.~Burnecki, and Y.~Garini.
\newblock Correlated continuous time random walks.
\newblock {\em Statistics and Probability Letters}, 79:1194--1202, 2009.

\bibitem{kepten:weron:sikora:burnecki:garini:2015}
E.~Kepten, A.~Weron, G.~Sikora, K.~Burnecki, and Y.~Garini.
\newblock Guidelines for the fitting of anomalous diffusion mean square
  displacement graphs from single particle tracking experiments.
\newblock {\em PLoS One}, 10(2):e0117722, 2015.

\bibitem{kou:2008}
S.~C. Kou.
\newblock Stochastic modeling in nanoscale biophysics: subdiffusion within
  proteins.
\newblock {\em Annals of Applied Statistics}, 2(2):501--535, 2008.

\bibitem{kou:xie:2004}
S.~C. Kou and X.~S. Xie.
\newblock Generalized {L}angevin equation with fractional {G}aussian noise:
  subdiffusion within a single protein molecule.
\newblock {\em Physical Review Letters}, 93(18):180603, 2004.

\bibitem{lai:wang:cone:wirtz:hanes:2009}
S.K. Lai, Y.Y. Wang, R.~Cone, D.~Wirtz, and J.~Hanes.
\newblock Altering mucus rheology to solidify human mucus at the nanoscale.
\newblock {\em PLoS One}, 4(1):e4294, 2009.

\bibitem{lasne:etal:2006}
D.~Lasne, G.~A. Blab, S.~Berciaud, M.~Heine, L.~Groc, D.~Choquet, L.~Cognet,
  and B.~Lounis.
\newblock Single nanoparticle photothermal tracking ({SN}a{PT}) of 5-nm gold
  beads in live cells.
\newblock {\em Biophysical Journal}, 91(12):4598--4604, 2006.

\bibitem{laurent:massart:2000}
B.~Laurent and P.~Massart.
\newblock Adaptive estimation of a quadratic functional by model selection.
\newblock {\em Annals of Statistics}, pages 1302--1338, 2000.

\bibitem{ledoux:2005}
M.~Ledoux.
\newblock {\em The concentration of measure phenomenon}.
\newblock Number~89. American Mathematical Society, 2005.

\bibitem{levine2000one}
A.J. Levine and TC~Lubensky.
\newblock One-and two-particle microrheology.
\newblock {\em Physical Review Letters}, 85(8):1774--1777, 2000.

\bibitem{lieleg:vladescu:ribbeck:2010}
O.~Lieleg, I.~Vladescu, and K.~Ribbeck.
\newblock Characterization of particle translocation through mucin hydrogels.
\newblock {\em Biophysical {J}ournal}, 98(9):1782, 2010.

\bibitem{lysy:pillai:hill:forest:mellnik:vasquez:mckinley:2016}
M.~Lysy, N.~Pillai, D.~B. Hill, M.~G. Forest, J.~Mellnik, P.~Vasquez, and S.~A.
  McKinley.
\newblock Model comparison for single particle tracking in biological fluids.
\newblock {\em To appear in Journal of the American Statistical Association},
  pages 1--44, 2016.

\bibitem{major:1981}
P.~Major.
\newblock Limit theorems for non-linear functionals of {G}aussian sequences.
\newblock {\em Zeitschrift f{\"u}r Wahrscheinlichkeitstheorie und verwandte
  Gebiete}, 57(1):129--158, 1981.

\bibitem{mason:weitz:1995}
T.G. Mason and D.A. Weitz.
\newblock Optical measurements of the linear viscoelastic moduli of complex
  fluids.
\newblock {\em Physical Review Letters}, 74:1250--1253, 1995.

\bibitem{matsui:wagner:hill:etal:2006}
H.~Matsui, V.E. Wagner, D.B. Hill, U.E. Schwab, T.D. Rogers, B.~Button, R.M.
  Taylor, R.~Superfine, M.~Rubinstein, B.H. Iglewski, and R.C. Boucher.
\newblock A physical linkage between cystic fibrosis airway surface dehydration
  and {P}seudomonas aeruginosa biofilms.
\newblock {\em Proceedings of the National Academy of Sciences}, 103(48):18131,
  2006.

\bibitem{meerschaert:scheffler:2004}
M.~Meerschaert and H.-P. Scheffler.
\newblock Limit theorems for continuous-time random walks with infinite mean
  waiting times.
\newblock {\em Journal of Applied Probability}, 41:623--638, 2004.

\bibitem{mellnik:lysy:vasquez:pillai:hill:cribb:mckinley:forest:2016}
J.~W.~R. Mellnik, M.~Lysy, P.~A. Vasquez, N.~S. Pillai, D.~B. Hill, J.~Cribb,
  S.~A. McKinley, and M.~G. Forest.
\newblock Maximum likelihood estimation for single particle, passive
  microrheology data with drift.
\newblock {\em Journal of Rheology}, 60(3):379--392, 2016.

\bibitem{meroz:sokolov:2015}
Y.~Meroz and I.~M. Sokolov.
\newblock A toolbox for determining subdiffusive mechanisms.
\newblock {\em Physics Reports}, 573:1--29, 2015.

\bibitem{metzler:jeon:cherstvy:2016}
R.~Metzler, J.-H. Jeon, and A.G. Cherstvy.
\newblock Non-{B}rownian diffusion in lipid membranes: experiments and
  simulations.
\newblock {\em Biochimica et Biophysica Acta}, 1858(10):2451--2467, 2016.

\bibitem{metzler:tejedor:jeon:he:deng:burov:barkai:2009}
R.~Metzler, V.~Tejedor, J.H. Jeon, Y.~He, W.H. Deng, S.~Burov, and E.~Barkai.
\newblock Analysis of single particle trajectories: from normal to anomalous
  diffusion.
\newblock {\em Acta Physica Polonica B}, 40(5):1315--1331, 2009.

\bibitem{michalet:berglund:2012}
X.~Michalet and A.~J. Berglund.
\newblock Optimal diffusion coefficient estimation in single-particle tracking.
\newblock {\em Physical Review E}, 85(6):061916, 2012.

\bibitem{minoura:katayama:eisaku:sekimoto:muto:2010}
I.~Minoura, E.~Katayama, K.~Sekimoto, and E.~Muto.
\newblock One-dimensional {B}rownian motion of charged nanoparticles along
  microtubules: a model system for weak binding interactions.
\newblock {\em Biophysical Journal}, 98(8):1589--1597, 2010.

\bibitem{moulines:roueff:taqqu:2007fractals}
E.~Moulines, F.~Roueff, and M.~S. Taqqu.
\newblock Central limit theorem for the log-regression wavelet estimation of
  the memory parameter in the {G}aussian semi-parametric context.
\newblock {\em Fractals}, 15(04):301--313, 2007.

\bibitem{moulines:roueff:taqqu:2007:spectral}
E.~Moulines, F.~Roueff, and M.S. Taqqu.
\newblock On the spectral density of the wavelet coefficients of long-memory
  time series with application to the log-regression estimation of the memory
  parameter.
\newblock {\em Journal of Time Series Analysis}, 28(2):155--187, 2007.

\bibitem{moulines:roueff:taqqu:2008}
E.~Moulines, F.~Roueff, and M.S. Taqqu.
\newblock A wavelet {W}hittle estimator of the memory parameter of a
  nonstationary {G}aussian time series.
\newblock {\em Annals of Statistics}, pages 1925--1956, 2008.

\bibitem{nandi:heinrich:lindner:2012}
A.~Nandi, D.~Heinrich, and B.~Lindner.
\newblock Distributions of diffusion measures from a local mean-square
  displacement analysis.
\newblock {\em Physical Review E}, 86(2):021926, 2012.

\bibitem{nguyen:mckinley:2017}
H.~D. Nguyen and S.~A. McKinley.
\newblock Anomalous diffusion and the generalized {L}angevin equation.
\newblock {\em \texttt{https://arxiv.org/abs/1711.00560}}, pages 1--40, 2017.

\bibitem{nishimura:etal:2006}
S.~Y. Nishimura, S.~J. Lord, L.~O. Klein, K.~A. Willets, M.~He, Z.~Lu, R.~J.
  Twieg, and W.~E. Moerner.
\newblock Diffusion of lipid-like single-molecule fluorophores in the cell
  membrane.
\newblock {\em J. Phys. Chem. B}, 110(15):8151--8157, 2006.

\bibitem{ottobre:pavliotis:2011}
M.~Ottobre and G.~Pavliotis.
\newblock Asymptotic analysis for the generalized {L}angevin equation.
\newblock {\em Nonlinearity}, 24(5):1629, 2011.

\bibitem{pipiras:taqqu:2017}
V.~Pipiras and M.~S. Taqqu.
\newblock {\em Long-Range Dependence and Self-Similarity}.
\newblock Cambridge Series on Statistical and Probabilistic Mathematics.
  Cambridge University Press, Cambridge, United Kingdom, 2017.

\bibitem{prakasarao:2010}
B.~L.~S. Prakasa~Rao.
\newblock {\em Statistical Inference for Fractional Diffusion Processes}.
\newblock Wiley Series in Probability and Statistics, 2010.

\bibitem{qian:sheetzL:elson:1991}
H.\ Qian, M.\ Sheetz, and E.~Elson.
\newblock Single particle tracking. {A}nalysis of diffusion and flow in
  two-dimensional systems.
\newblock {\em Biophysical Journal}, 60(4):910--921, 1991.

\bibitem{reighard:ehre:rushton:ahonen:hill:schoenfisch:2017}
K.~P. Reighard, C.~Ehre, Z.~L. Rushton, M.~J.~R. Ahonen, D.~B. Hill, and M.~H.
  Schoenfisch.
\newblock Role of nitric oxide-releasing chitosan oligosaccharides on mucus
  viscoelasticity.
\newblock {\em ACS Biomaterials Science \& Engineering}, 3(6):1017--1026, 2017.

\bibitem{reighard:hill:dixon:worley:schoenfisch:2015}
K.~P. Reighard, D.~B. Hill, G.~A. Dixon, B.~V. Worley, and M.~H. Schoenfisch.
\newblock Disruption and eradication of {P}.\ aeruginosa biofilms using nitric
  oxide-releasing chitosan oligosaccharides.
\newblock {\em Biofouling}, 31(9-10):775--787, 2015.

\bibitem{rosenblatt:1961}
M.~Rosenblatt.
\newblock Independence and dependence.
\newblock In {\em Proceedings of the $4^{\textnormal{th}}$ {B}erkeley symposium
  on mathematical statistics and probability}, volume~2, pages 431--443, 1961.

\bibitem{sandev:metzler:tomovksi:2012}
T.~Sandev, R.~Metzler, and {\v{Z}}.~Tomovski.
\newblock Velocity and displacement correlation functions for fractional
  generalized {L}angevin equations.
\newblock {\em Fractional Calculus and Applied Analysis}, 15(3):426--450, 2012.

\bibitem{saxton1994anomalous}
M.J. Saxton.
\newblock Anomalous diffusion due to obstacles: a monte carlo study.
\newblock {\em Biophysical Journal}, 66(2):394--401, 1994.

\bibitem{saxton1996anomalous}
M.J. Saxton.
\newblock Anomalous diffusion due to binding: a monte carlo study.
\newblock {\em Biophysical Journal}, 70(3):1250--1262, 1996.

\bibitem{shiryaev:2000}
A.N. Shiryaev.
\newblock {\em Probability Theory}.
\newblock Springer-Verlag, New York, 2000.

\bibitem{sikora:teuerle:wylomanska:grebenkov:2017}
G.~Sikora, M.~Teuerle, A.~Wy{\l}oma{\'n}ska, and D.~Grebenkov.
\newblock Statistical properties of the anomalous scaling exponent estimator
  based on time-averaged mean-square displacement.
\newblock {\em Physical Review E}, 96(2):022132, 2017.

\bibitem{smith:karatekin:etal:2011}
M.~B. Smith, E.~Karatekin, A.~Gohlke, H.~Mizuno, N.~Watanabe, and D.~Vavylonis.
\newblock Interactive, computer-assisted tracking of speckle trajectories in
  fluorescence microscopy: application to actin polymerization and membrane
  fusion.
\newblock {\em Biophysical Journal}, 101(7):1794--1804, 2011.

\bibitem{sokolov:2008}
I.~M. Sokolov.
\newblock Statistics and the single molecule.
\newblock {\em Physics}, 1:8, 2008.

\bibitem{sonesson:elofsson:callisen:brismar:2007}
A.~W. Sonesson, U.~M. Elofsson, T.~H. Callisen, and H.~Brismar.
\newblock Tracking single lipase molecules on a trimyristin substrate surface
  using quantum dots.
\newblock {\em Langmuir}, 23(16):8352--8356, 2007.

\bibitem{suh:dawson:hanes:2005}
J.~Suh, M.~Dawson, and J.~Hanes.
\newblock Real-time multiple-particle tracking: applications to drug and gene
  delivery.
\newblock {\em Advanced Drug Delivery Reviews}, 57:63--78, 2005.

\bibitem{tafvizi:mirny:oijen:2011}
A.~Tafvizi, L.~A. Mirny, and A.~M. van Oijen.
\newblock Dancing on {DNA}: kinetic aspects of search processes on {DNA}.
\newblock {\em Chem.\ Phys.\ Chem.}, 12(8):1481--1489, 2011.

\bibitem{taqqu:1975}
M.~S. Taqqu.
\newblock Weak convergence to fractional {B}rownian motion and to the
  {R}osenblatt process.
\newblock {\em Probability Theory and Related Fields}, 31(4):287--302, 1975.

\bibitem{taqqu:1979}
M.~S. Taqqu.
\newblock Convergence of integrated processes of arbitrary {H}ermite rank.
\newblock {\em Probability Theory and Related Fields}, 50(1):53--83, 1979.

\bibitem{taqqu:2003}
M.~S. Taqqu.
\newblock Fractional {B}rownian motion and long range dependence.
\newblock In {\em Theory and Applications of Long-Range Dependence (P. Doukhan,
  G. Oppenheim and M. S. Taqqu, eds.)}, pages 5--38. Birkh\"{a}user, Boston,
  2003.

\bibitem{taqqu:2011}
M.~S. Taqqu.
\newblock The {R}osenblatt process.
\newblock In {\em The selected works of {M}urray {R}osenblatt (Davis, R. A. and
  Lii, K.-S. and Politis, D. N., eds.)}, pages 29--45. Springer, 2011.

\bibitem{vale:soll:gibbons:1989}
R.~D. Vale, D.~R. Soll, and I.~R. Gibbons.
\newblock One-dimensional diffusion of microtubules bound to flagellar dynein.
\newblock {\em Cell}, 59(5):915--925, 1989.

\bibitem{valentine:kaplan:thota:crocker:gisler:prudhomme:beck:weitz:2001}
M.~Valentine, P.~Kaplan, D.~Thota, J.~Crocker, T.~Gisler, R.~Prud抙omme,
  M.~Beck, and D.~A. Weitz.
\newblock Investigating the microenvironments of inhomogeneous soft materials
  with multiple particle tracking.
\newblock {\em Physical Review E}, 64(6):061506, 2001.

\bibitem{veillette:taqqu:2013}
M.~Veillette and M.~S. Taqqu.
\newblock Properties and numerical evaluation of the rosenblatt distribution.
\newblock {\em Bernoulli}, 19(3):982--1005, 2013.

\bibitem{veitch:abry:1999}
D.~Veitch and P.~Abry.
\newblock A wavelet-based joint estimator of the parameters of long-range
  dependence.
\newblock {\em IEEE Transactions on Information Theory}, 45(3):878--897, 1999.

\bibitem{vestergaard:blainey:flyvbjerg:2014}
C.~L. Vestergaard, P.~C. Blainey, and H.~Flyvbjerg.
\newblock Optimal estimation of diffusion coefficients from single-particle
  trajectories.
\newblock {\em Physical Review E}, 89(2):022726, 2014.

\bibitem{wendt:didier:combrexelle:abry:2017}
H.~Wendt, G.~Didier, S.~Combrexelle, and P.~Abry.
\newblock Multivariate {H}adamard self-similarity: testing fractal
  connectivity.
\newblock {\em Physica D: Nonlinear Phenomena}, 356--357:1--36, 2017.

\bibitem{wieser:schutz:2008}
S.~Wieser and G.~J. Sch{\"u}tz.
\newblock Tracking single molecules in the live cell plasma membrane -- do's
  and don't's.
\newblock {\em Methods}, 46(2):131--140, 2008.

\bibitem{zwanzig:2001}
R.~Zwanzig.
\newblock {\em Nonequilibrium Statistical Mechanics}.
\newblock Oxford University Press, 2001.

\end{thebibliography}

\end{document}